\begin{document}

\theoremstyle{plain}
\newtheorem{theorem}{Theorem}

\newtheorem{lemma}[theorem]{Lemma}
\newtheorem{corollary}[theorem]{Corollary}
\newtheorem{conjecture}[theorem]{Conjecture}
\newtheorem{proposition}[theorem]{Proposition}
\newtheorem{observation}[theorem]{Observation}

\theoremstyle{remark}
\newtheorem*{remark}{Remark}

\newcommand{\bea}{\begin{eqnarray}}
\newcommand{\eea}{\end{eqnarray}}
\newcommand{\e}{\eta}
\newcommand{\an}{\textbf{a}}
\newcommand{\bn}{\textbf{b}}
\newcommand{\cn}{\textbf{c}}
\def\bi{\begin{itemize}}
\def\ei{\end{itemize}}
\def\bc{\begin{center}}
\def\ec{\end{center}}
\def\E{{\cal E}}

\newcommand{\C}{\mathbb{C}}
\def\R{\hbox{$\mit I$\kern-.6em$\mit R$}}
\def\N{\hbox{$\mit I$\kern-.6em$\mit N$}}
\def\Y{Y}
\def\bk#1{\langle #1 \rangle}
\def\mw#1{\left< #1\right>}


\def\be{\begin{equation}}
\def\ee{\end{equation}}
\def\ba{\begin{align}}
\def\ea{\end{align}}

\newcommand{\mC}{\mathcal{C}}
\newcommand{\mE}{\mathcal{E}}
\newcommand{\mZ}{\mathcal{Z}}
\newcommand{\mU}{\mathcal{U}}
\newcommand{\mV}{\mathcal{V}}
\newcommand{\mA}{\mathcal{A}}
\newcommand{\mF}{\mathcal{F}}
\newcommand{\mI}{\mathcal{I}}
\newcommand{\mH}{\mathcal{H}}
\newcommand{\mL}{\mathcal{L}}
\newcommand{\mM}{\mathcal{M}}
\newcommand{\mT}{\mathcal{T}}
\newcommand{\mN}{\mathcal{N}}
\newcommand{\eqdef}{\equiv}

\newcommand{\fm}{\mathcal{F}_{\bf{m}}}
\newcommand{\am}{\mathcal{A}^{\textbf{m}}}
\newcommand{\dm}{\mathcal{D}(\mathrm{H}_{{\bf m}})}
\newcommand{\lr}{\rangle\langle}
\newcommand{\la}{\langle}
\newcommand{\ra}{\rangle}
\newcommand{\tr}{{\rm Tr}}

\newcommand{\mc}[1]{\mathcal{#1}}
\newcommand{\mbf}[1]{\mathbf{#1}}
\newcommand{\mbb}[1]{\mathbb{#1}}
\newcommand{\mrm}[1]{\mathrm{#1}}

\newcommand{\bra}[1]{\langle #1|}
\newcommand{\ket}[1]{|#1\rangle}
\newcommand{\braket}[3]{\langle #1|#2|#3\rangle}
\newcommand{\ip}[2]{\langle #1|#2\rangle}
\newcommand{\op}[2]{|#1\rangle \langle #2|}

\newcommand{\mbN}{\mathbb{N}}

\newcommand{\one}{\mbox{$1 \hspace{-1.0mm}  {\bf l}$}}

\definecolor{old}{rgb}{.0,.5,.2}
\newcommand{\old}[1]{#1}

\newcommand{\review}[1]{{\color{red} #1}}

\title{Transformations among Pure Multipartite Entangled States via Local Operations Are Almost Never Possible}
\author{David Sauerwein} \email{david.sauerwein@uibk.ac.at}
\affiliation{Institute for Theoretical Physics, University of Innsbruck, 6020 Innsbruck, Austria}
\author{Nolan R. Wallach}\email{nwallach@ucsd.edu}
\affiliation{Department of Mathematics, University of California/San Diego,
        La Jolla, California 92093-0112, USA}
\author{Gilad Gour}\email{giladgour@gmail.com}
\affiliation{
Department of Mathematics and Statistics, and Institute for Quantum Science and Technology (IQST),
University of Calgary, Alberta, Canada T2N 1N4}
\author{Barbara Kraus}\email{barbara.kraus@uibk.ac.at}
\affiliation{Institute for Theoretical Physics, University of Innsbruck, 6020 Innsbruck, Austria}


\begin{abstract}
Local operations assisted by classical communication (LOCC) constitute the free operations in entanglement theory. Hence, the determination of LOCC transformations is crucial for the understanding of entanglement. We characterize here almost all LOCC transformations among pure multipartite multilevel states. Combined with the analogous results for qubit states shown by Gour \emph{et al.} [J. Math. Phys. 58, 092204 (2017)], this gives a characterization of almost all local transformations among multipartite pure states. We show that nontrivial LOCC transformations among generic, fully entangled, pure states are almost never possible. Thus, almost all multipartite states are isolated. They can neither be deterministically obtained from local-unitary-inequivalent (LU-inequivalent) states via local operations, nor can they be deterministically transformed to pure, fully entangled LU-inequivalent states. In order to derive this result, we prove a more general statement, namely, that, generically, a state possesses no nontrivial local symmetry. We discuss further consequences of this result for the characterization of optimal, probabilistic single copy and probabilistic multi-copy LOCC transformations and the characterization of LU-equivalence classes of multipartite pure states.
\end{abstract}

\maketitle

\section{Introduction}
Entanglement lies at the heart of quantum theory and is the essential resource for many striking applications of quantum information science \cite{NiCh00, reviews, RaBr01, GoThesis, secretsharing, giovanetti}.
The entanglement properties of multipartite states are, moreover, fundamental to important concepts in condensed matter physics \cite{ScPe10}.
This relevance of entanglement in various fields of science has motivated great research efforts to gain a better understanding of these intriguing quantum correlations.

Local operations assisted by classical communication (LOCC) play an essential role in the theoretical and experimental investigation of quantum correlations. Spatially separated parties who share some entangled state can  utilize it to accomplish a certain task, such as teleportation.  The parties are free to communicate classically with each other and to perform any quantum operation on their share of the system. To give an example, party 1 would perform a generalized measurement on his/her system and send the result to all other parties. Party 2 performs then, depending on the measurement outcome of party 1, a generalized measurement. The outcome is again sent to all parties, in particular to party 3, who applies a quantum operation, which depends on both previous outcomes, on his/her share of the system, etc.. Any protocol that can be realized in such a way is a LOCC protocol. This physically motivated scenario led to the definition of entanglement as a resource that cannot be increased via LOCC.
Stated differently, entanglement theory is a resource theory where the free operations are LOCC. In particular, if $\ket{\psi}$ can be transformed to $\ket{\phi}$ via LOCC, then $E(\ket{\psi}) \geq E(\ket{\phi})$ for any entanglement measure $E$. Therefore, studying all possible LOCC transformations among pure states also leads to a partial order of entanglement.

In the bipartite case, simple, necessary, and sufficient conditions for LOCC transformations among pure states were derived \cite{nielsen}. This is one of
the main reasons why bipartite (pure state) entanglement is so well understood, as those conditions resulted in an elegant framework that explains how bipartite entanglement can be characterized, quantified and manipulated \cite{reviews}. In particular, the optimal resource of entanglement, i.e. the maximally entangled state, could be identified. It is, up to normalization and local unitary (LU) operations (which do not alter the entanglement), the state $\sum_i \ket{ii}$. This state can be transformed into any other state in the Hilbert space via LOCC. Many applications within quantum information theory, such as teleportation, entanglement-based cryptography, or dense coding, utilize this state as a resource.

In spite of considerable progress \cite{reviews, multip}, an analogous characterization of multipartite LOCC transformations remains elusive. The reasons for that are manifold. Firstly, the study of multipartite entangled states is difficult, and often intractable, due to the exponential growth of the dimension of the Hilbert spaces. Secondly, multipartite LOCC is notoriously difficult to describe mathematically \cite{chitambar}. Thirdly, there exist multipartite entangled states, belonging to the same Hilbert space, that cannot even be interconverted via stochastic LOCC (SLOCC) \cite{slocc} and, thus, there is no universal unit of multipartite entanglement.

Apart from LOCC transformations, other, more tractable local operations were considered. Local unitary (LU) operations, which as mentioned before, do not alter the entanglement, have been investigated \cite{Kr10}. SLOCC transformations, which correspond to a single branch of a LOCC protocol, have been analyzed \cite{slocc}. Both relations define an equivalence relation. That is, two states are said to be in the same SLOCC class (LU class) if there exists a $g\in \widetilde{G}$ ($g \in \widetilde{K}$) that maps one state to the other, respectively. Here, and in the following, $\widetilde{G}$ ($\widetilde{K}$) denotes the set of local invertible (unitary) operators. 
Clearly, two fully entangled states, i.e. states whose single-subsystem reduced states have full rank, have to be in the same SLOCC class in case there exists a LOCC transformation mapping one into the other. That is, it must be possible to locally transform one state into the other with a nonvanishing probability in case the transformation can be done deterministically. Apart from LU and SLOCC, where a single local operator is considered, transformations involving more operators have been investigated, such as LOCC transformations using only finitely many rounds of classical communication \cite{Spee0} or separable operations (SEP) \cite{cohen}. Considering only finitely many rounds of classical communication in a LOCC protocol is practically motivated and leads to a simple characterization of (generic) states to which some other state can be transformed to via such a protocol. However, it has been shown that there exist transformations that can only be accomplished with LOCC if infinitely many rounds of communication are employed \cite{chitinfty}. SEP transformations are easier to deal with mathematically than LOCC. However, they lack a clear physical meaning as they strictly contain LOCC \cite{chitambar, sepnotlocc}. Any separable map $\Lambda_{SEP}$ can be written as $\Lambda_{SEP}(\cdot) = \sum_k M_k (\cdot) M_k^\dagger$, where the Kraus operators $M_k = M_k^{(1)} \otimes \ldots \otimes M_k^{(n)}$ are local and fulfill the completeness relation $\sum_k M_k^\dagger M_k = \one$. In Ref. \cite{GoWa11} necessary and sufficient conditions for the existence of a separable map transforming one pure state into another were presented. Clearly, any LOCC protocol as explained above corresponds to a separable map. However, not any separable map can be realized with local operations and classical communication \cite{sepnotlocc} and there exist even multipartite pure state transformations that can be achieved via SEP, but not via LOCC \cite{HeSp15}.

Thus, despite all these efforts and the challenges involved in characterizing and studying LOCC, the fundamental relevance of LOCC within entanglement theory makes its investigation inevitable in order to reach a deeper understanding of multipartite entanglement. Already, the identification of the analog of the maximally entangled bipartite state, the maximally entangled set (MES), requires the knowledge of possible LOCC transformations. This set of states, which was characterized for small system sizes \cite{dVSp13,HeSp15,SpdV16}, is the minimal set of states from which any other fully entangled state (within the same Hilbert space) can be obtained via LOCC. The investigation of LOCC transformations, in particular for arbitrary local dimensions, might also lead to new applications in many fields of science, e.g. new ways to use quantum networks, which now become an experimental reality, or new theoretical tools in condensed matter physics. \\

Instead of investigating particular LOCC transformations we follow a different approach, which is based on the theory of Lie groups and algebraic geometry (see also \cite{GoKr16}). This new viewpoint allows us to overcome many of the usual obstacles in multipartite entanglement theory described above. It enables us to characterize, rather unexpectedly, all LOCC (and SEP) transformations, i.e. all local transformations, among pure states of a full-measure subset of any $(n>3)$-d--level (qudit) system and certain tripartite qudit systems. We show that there exists no nontrivial LOCC transformation from or to any of the states within this full-measure set. We call a local transformation nontrivial if it cannot be achieved by applying LUs (which can of course always be applied). To be more precise, we show that a generic state $\ket{\psi}$ can be deterministically transformed to a fully entangled state $\ket{\phi}$ via LOCC (and even SEP) if and only if (iff) $\ket{\phi} = u_1 \otimes \ldots \otimes u_n \ket{\psi}$, where $u_i$ is unitary; that is, only if $\ket{\psi}$ and $\ket{\phi}$ are LU equivalent. As LU transformations are trivial LOCC transformations, almost all pure multiqudit states are isolated. That is, they can neither be deterministically obtained from other states via nontrivial LOCC nor can they be deterministically transformed via nontrivial LOCC to other fully entangled pure states. This also holds if transformations via the larger class of SEP are considered.

We derive this result by using the fact that the existence of local symmetries of a state is essential for it to be transformable via LOCC or SEP (see \cite{GoKr16, GoWa11} and Sec. II). The local symmetries of a $n$-partite state $\ket{\psi}$ are all local invertible operators $g = g_1 \otimes \ldots \otimes g_n \in \widetilde{G}$ such that $g\ket{\psi} = \ket{\psi}$. The set of all local symmetries of $\ket{\psi}$ is also referred to as its stabilizer. We prove that for the aforementioned Hilbert spaces there exists a full-measure set of states that possess no nontrivial symmetry. These results are a generalization of those presented in \cite{GoKr16}. 
Here, the following remark is in order. One might be tempted to believe that the stabilizer of most states is trivial whenever the number of complex variables $N_v$ (describing $g$) in the equation $g\ket{\psi} = \ket{\psi}$ is smaller than the number of equations $N_e$ (describing $\ket{\psi}$). However, this counting argument already fails in the case of four qubits, where $13 = N_v < N_e = 16$ but only a zero-measure subset of states has trivial stabilizer and almost all states have nontrivial symmetries \cite{GoWa11, SpdV16, Wa08}. 
Hence, a parameter counting argument does not suffice to show that the set of states with trivial stabilizer is of full measure. In fact, a rigorous proof of this fact is already very involved for the qubit case. In \cite{GoKr16} methods from algebraic geometry and the theory of Lie groups were used to shown that generic $(n>4)$-qubit ($d=2$) states only have trivial symmetries. However, a straightforward generalization beyond qubit states was impossible, as in \cite{GoKr16} special properties of the qubit case, for instance the existence of so-called homogeneous SL-invariant polynomials (SLIPs) of low degree, were utilized. Note that, due to these special properties of qubit states, it was unclear whether indeed a similar result holds for arbitrary dimensions. As the statement is not true for less than five qubits, it could furthermore have turned out that the number of parties for which almost all states have trivial stabilizer depends on the local dimension, i.e. that $n$ depends on $d$. We show here, however, that this is not the case by employing new tools from algebraic geometry.
Clearly, the investigation of higher local dimensions is central in quantum information processing, where for e.g. in quantum networks the parties have access to more than just a single qubit. Moreover, in tensor network states, which are employed for the investigation of condensed matter systems, the local dimension is often larger than two.

A direct consequence of this result is that the maximally entangled set (MES) \cite{dVSp13} is of full measure in systems of $n>3$ qudits (and certain tripartite systems). The intersection of states which are in the MES and are convertible, i.e. which can be transformed into some other (LU--inequivalent) state are of measure zero. These states are the most relevant ones regarding pure state transformations. Prominent examples of these states are the GHZ-state \cite{ghz} or more generally stabilizer states \cite{GoThesis}. Hence, the results presented here do not only identify the full--measure set of states which are isolated, but also indicate which states can be transformed. 

As generic LOCC transformations are impossible, it is crucial to determine the optimal probabilistic protocol to achieve these transformations. Given the result presented here, the simple expression for the corresponding optimal success probability presented in \cite{GoKr16} also holds for a generic state with arbitrary local dimensions. Moreover, we show that the fact that almost no state possesses a nontrivial local symmetry can be used to derive simple conditions for two SLOCC-equivalent states to be LU--equivalent. We also show that our result leads to new insights into scenarios in which LOCC transformations of more than one copy of a state are considered. In particular, a lower bound on the probability, with which $n$ copies of a state $\ket{\psi}$ can be transformed into $m$ copies of a state $\ket{\phi}$ can be derived. Remarkably, this bound holds for any pair of states $\ket{\psi},\ket{\phi}$, i.e. even those which are not generic, and arbitrary numbers of copies, $n$, $m$. Furthermore, it leads to a new lower bound on the optimal rate to convert asymptotically many copies of $\ket{\psi}$ into copies of $\ket{\phi}$ via LOCC.

The rest of this paper is organized as follows. In Sec. \ref{sec:main} we present the main result of the paper and emphasis the physical consequences thereof. In particular, we first state that there exists a full--measure set of states (of almost all Hilbert spaces with constant local dimension), with the property that the local stabilizer of any state in this set is trivial (Theorem 1). We then recap why local symmetries play such an important role in state transformations and that Theorem 1 implies that generically there is no state transformation possible via LOCC. After that, we present further consequences of Theorem 1 for the characterization of optimal probabilistic LOCC transformations, of LU-equivalence classes and for the determination of probabilistic multi-copy LOCC transformations, as mentioned above. \\
In Sec. \ref{sec:math} we present the mathematical methods used to prove that almost all multiqudit states have trivial stabilizer. In Sec. \ref{SecPre} we introduce our notation and briefly recap the results presented in \cite{GoKr16}, where qubit systems were considered. In Sec. \ref{sec:gentriv} we develop these methods further and employ new tools from the theory of Lie groups and algebraic geometry to show that whenever there exists a so-called critical state whose set of unitary local symmetries is trivial then the stabilizer of a generic multipartite state is trivial (Theorem \ref{thmain}). In Sec. \ref{sec:states} we present examples of $n$--qudit systems for all local dimensions ($d>2$) and any number of subsystems ($n>3$) of states which have these properties. In particular, we prove there that the stabilizer of these states is trivial. Combined with Theorem \ref{thmain} mentioned above this shows that the stabilizer of a generic state, i.e. of a full measure subset of states, of $n>4$ qubits and $n>3$ qudits is trivial. This result also holds for three qudits with local dimension $d=4,5,6$.
In Sec. \ref{sec:picture} we illustrate and discuss the picture of multipartite pure state transformations that emerges if we combine this work with previous findings on bipartite \cite{nielsen}, 3-qutrit \cite{HeSp15} and qubit systems \cite{dVSp13,GoKr16}. In Sec. \ref{sec:conclusion} we present our conclusions. 

\section{Main results and implications}
\label{sec:main}
Let us state here the main results of this article and elaborate on its consequences in the context of entanglement theory.

We consider pure states belonging to the Hilbert space $\mH_{n,d}\eqdef\otimes^{n}\mathbb{C}^{d}$, i.e. the Hilbert space of $n$ qudits. Whenever we do not need to be specific about the local dimensions, we simply write ${\cal H}_n$ instead of $\mH_{n,d}$. As before, $\widetilde{G}$ denotes the set of local invertible operators on $\mH_n$. Our main result concerns the group of local symmetries of a multipartite state $\ket{\psi}$, also referred to as its stabilizer in $\widetilde{G}$, which is defined as 
\bea
\widetilde{G}_\psi\eqdef \left\{g\in \widetilde{G} \;\Big|\;g|\psi\ra=|\psi\ra\right\}\subset \widetilde{G}.
\eea
We prove that for almost all multiqudit Hilbert spaces, $\mH_{n,d}$, there exists a full-measured set of states whose stabilizer is trivial. Recall that a subset of ${\cal H}_n$ is said to be of full measure, if its complement in ${\cal H}_n$ is of lower dimension. Stated differently, almost all states are in the full-measured set and its complement is a zero-measure set.

The main result presented here is given by the following theorem.

\begin{theorem}\label{IntroMain}
For any number of subsystems $n>3$ and any local dimension $d > 2$ there exists a set of states whose stabilizer in $\widetilde{G}$ is trivial. This set is open, dense and of full measure in $\mH_{n,d}$. Such a set of states also exists for $n=3$ and $d=4,5,6$.
\end{theorem}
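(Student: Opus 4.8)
The plan is to reduce the full statement of Theorem~\ref{IntroMain} to a single structural result (the forthcoming Theorem~\ref{thmain}) plus the construction of explicit witness states. The excerpt already announces the strategy: prove that \emph{whenever there exists a so-called critical state whose group of unitary local symmetries is trivial, then the stabilizer of a generic multipartite state in $\widetilde{G}$ is trivial}. So my first step is to take Theorem~\ref{thmain} as given and observe that it converts the measure-theoretic statement about generic states into a purely existential statement: I only need to exhibit, for each relevant pair $(n,d)$, a \emph{single} critical state $\ket{\psi_0}\in\mH_{n,d}$ whose unitary stabilizer $\widetilde{K}_{\psi_0}$ is trivial (i.e.\ consists only of phases / the identity in $\mathrm{PG}$). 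The notion of ``critical'' here should be the standard one from the Kempf--Ness / geometric-invariant-theory picture, namely that $\ket{\psi_0}$ minimizes the norm within its $\widetilde{G}$-orbit, equivalently that all single-party reduced density matrices are proportional to the identity; such states are the natural reference points because their $\widetilde{G}$-stabilizer is governed by their compact (unitary) stabilizer.

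Second, I would carry out the explicit construction of the witness states. For each $n>3$ and $d>2$ I need a critical state with trivial unitary symmetry group, and separately the three special tripartite cases $d=4,5,6$. The natural approach is to write an explicit superposition of product basis vectors,
\be
\ket{\psi_0}=\sum_{\mathbf{i}} c_{\mathbf{i}}\,\ket{i_1\cdots i_n},
\ee
with cleverly chosen support and coefficients, engineered so that (a) the local reduced states are maximally mixed, guaranteeing criticality, and (b) the only local unitaries $u_1\otimes\cdots\otimes u_n$ fixing it are trivial. Verifying (a) is a direct computation on the reduced density matrices. Verifying (b) is the substantive part: one assumes $u_1\otimes\cdots\otimes u_n\ket{\psi_0}=\ket{\psi_0}$ and uses the combinatorial structure of the support (for instance asymmetries that distinguish the basis vectors, or a graph-/incidence-type rigidity of the chosen terms) to force each $u_k$ to be diagonal, and then force the diagonal phases to collapse to a single global phase that is trivial in the projective group. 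I expect to design the coefficient pattern so that this phase-matching system of equations has only the all-equal solution.

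The main obstacle, I expect, is exactly step (b): proving triviality of the unitary stabilizer of the constructed states in full generality across \emph{all} $d>2$ and $n>3$ simultaneously, rather than case by case. A clean family of states must be chosen whose symmetry analysis scales uniformly with $n$ and $d$; otherwise one is forced into an unbounded sequence of ad hoc arguments. The difficulty is compounded by the fact, emphasized in the introduction, that naive parameter counting is unreliable (it fails already for four qubits), so the argument cannot lean on dimension counts and must instead exploit genuine structural rigidity of the chosen support. The three low tripartite cases $d=4,5,6$ will likely require their own bespoke witness states, since $n=3$ is outside the range where the generic large-$n$ construction applies and is precisely the boundary where the phenomenon begins.

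Finally, once the witnesses are in hand, I would assemble the proof: for each admissible $(n,d)$ the witness state is critical with trivial unitary stabilizer, so Theorem~\ref{thmain} applies and yields a set of states with trivial $\widetilde{G}$-stabilizer that is open, dense and of full measure in $\mH_{n,d}$. Since openness, density and full measure are the three topological/measure-theoretic conclusions packaged into Theorem~\ref{thmain}, no further work is needed beyond invoking it. The statement for $n=3,\ d\in\{4,5,6\}$ follows identically from the corresponding special witnesses, completing the theorem.
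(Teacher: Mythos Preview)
Your proposal is correct and follows essentially the same architecture as the paper: reduce Theorem~\ref{IntroMain} to Theorem~\ref{thmain}, then exhibit for each admissible $(n,d)$ a single critical witness state with trivial unitary stabilizer, verifying criticality via maximally mixed marginals and triviality by first forcing the local unitaries to be diagonal and then collapsing the phases. The one tactical point you did not anticipate is that the paper's witnesses for $n>3$ are \emph{permutation-symmetric}, which (via a lemma of Migda\l{} et al.) immediately reduces any local-unitary symmetry to the form $u^{\otimes n}$ before the diagonal/phase analysis; the tripartite cases $d=4,5,6$ indeed use a separate, nonsymmetric construction as you expected.
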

Note that it will be clear from the proof of the theorem why the case $n=3$ has to be treated differently (see Sec. \ref{sec:math}). However, it is likely that the statement of the theorem also holds for $n=3$ and $d>6$. 
Theorem \ref{IntroMain} shows that almost all multiqudit states $\ket{\psi}$ have only the trivial local symmetry, i.e. $\widetilde{G}_\psi = \{\one\}$. This result has deep implications for entanglement theory.
In order to explain them, we briefly review the connection between the local symmetries of multipartite states and their transformation properties under LOCC and SEP.\\

As mentioned in the introduction, we say that $\ket{\psi}$ can be transformed via SEP into $\ket{\phi}$ if there exists a separable map $\Lambda_{SEP}(\cdot) = \sum_k M_k (\cdot) M_k^\dagger$ such that $\Lambda_{SEP}(\ket{\psi}\bra{\psi}) = \ket{\phi}\bra{\phi}$,  where the Kraus operators $M_k = M_k^{(1)} \otimes \ldots \otimes M_k^{(n)}$ are local and fulfill the completeness relation $\sum_k M_k^\dagger M_k = \one$.
The transformation is possible via LOCC if $\Lambda_{SEP}$ can be implemented locally. It is clear that a fully entangled state $\ket{\psi}$ can only be transformed into an other fully entangled state $\ket{\phi}$ if these states are SLOCC equivalent, i.e. $\ket{\phi} = h\ket{\psi}$ for some $h \in \widetilde{G}$. In \cite{GoWa11} it was shown that a fully entangled state $\ket{\psi}$ can be transformed via SEP to $\ket{\phi} = h\ket{\psi}$ iff there exists a $m \in \mathbb{N}$ and a set of probabilities
$\{p_k\}_{k=1}^m$ ($p_k\geq 0, \sum_{k=1}^m p_k=1$) and $\{S_k\}_{k=1}^m \subset
\tilde{G}_\psi$ such that 
\bea 
\sum_k p_k S_k^\dagger H S_k=r \one. \label{eq:SEP}
\eea 
Here, $H=h^\dagger h \equiv \bigotimes H_i$ is a local operator and
$r=\frac{||\ket{\phi}||^2}{||\ket{\psi}||^2}$.  This criterion for the existence of a SEP transformation can be understood as follows. Let $M_k$ denote the local operator which maps $\ket{\psi}$ to $\ket{\phi}= h \ket{\psi}$, i.e. $M_k \ket{\psi} = c_k h\ket{\psi}$ for some $c_k \neq 0$. Hence, $ h^{-1} M_k$ must be proportional to a local symmetry of $\ket{\psi}$. Using then the completeness relation $\sum_k M_k^\dagger M_k = \one$ leads to the necessary and sufficient conditions in Eq. (\ref{eq:SEP}) for the existence of a separable map transforming one fully entangled state into the other \cite{GoWa11}.

As LOCC is contained in SEP, it is evident from this result that the local symmetries of a state play also a major role in the study of LOCC transformations. However, in order to characterize LOCC transformations among fully-entangled states using Eq. (\ref{eq:SEP}) one has to determine their local symmetries, find all solutions of Eq. (\ref{eq:SEP}) and check if the corresponding separable measurement can be implemented locally. For particular pairs of states such a procedure is feasible, even though it might be very tidious. However, to find all possible LOCC transformations seems infeasible. Our main result (see Theorem \ref{IntroMain}) allows to accomplish all the steps described above for almost all multipartite qudit states and thereby provides a characterization of deterministic SEP and LOCC transformations for almost all qudit states. This is one of the reasons why Theorem \ref{IntroMain} has such deep implications in entanglement theory, as we explain below.

In \cite{GoKr16} some of us proved a similar result as stated in Theorem \ref{IntroMain} for qubit states.
There, so-called SL-invariant polynomials (SLIPS) \cite{GoWa13} were used to identify a full-measure subset of all $(n>4)$-qubit states that have trivial stabilizer. As the special characteristics of the qubit case, for instance the existence of SLIPS of low degree, cannot be utilized for higher dimensions, this proof does not hold beyond qubit states. Precisely due to these peculiarities of qubit states it was unclear whether indeed a similar result holds for arbitrary local dimensions. Moreover, as the analog of Theorem \ref{IntroMain} is not true for less than five qubits, it could have turned out that the number of parties for which almost all states have a trivial stabilizer depends on the local dimension, i.e. that $n$ depends on $d$. Theorem \ref{IntroMain} shows that this is not the case. In order to tackle the case of arbitrary local dimensions, we employ in this work new tools from the theory of Lie groups and geometric invariant theory without explicitly using SLIPs (see Sec. \ref{sec:math}). We also show in Sec. \ref{sec:math} that the new results encompass the qubit case.

Let us now discuss the consequences of Theorem \ref{IntroMain} in the context of entanglement theory.\\

\subsection{Nontrivial deterministic local transformations are almost never possible}
In \cite{GoKr16} it was shown that states with trivial stabilizer are isolated. That is, a state with trivial stabilizer can neither be obtained from LU-inequivalent states via LOCC, nor can it be transformed to LU-inequivalent fully entangled states via LOCC. The same holds for transformations via SEP. Indeed, for such a state the only solution to Eq. (\ref{eq:SEP}) is $H = \one$, which means that $\ket{\psi}$ is LU equivalent to $\ket{\phi}$. It was then shown in \cite{GoKr16} that this holds for almost all states of $n>4$ qubits. Theorem \ref{IntroMain} ensures that the same holds true for almost all multiqudit states, which is stated in the following theorem.

\begin{theorem}
\label{thm:trivtrans}
Let $\mH_{n,d}$ be one of the multipartite qudit Hilbert spaces in Theorem \ref{IntroMain} and let $\ket{\psi} \in \mH_{n,d}$ be a fully entangled $n$-partite state with trivial stabilizer, i.e.  $\widetilde{G}_\psi = \{\one\}$.
  Then, $\ket{\psi}$ can be deterministically obtained from or transformed to a fully entangled $\ket{\phi}$ via $LOCC$ or $SEP$ iff $\ket{\psi}$ and $\ket{\phi}$ are related by local unitary operations; that is, iff there exists a $u \in \tilde{K}$ such that $\ket{\psi} = u\ket{\phi}$.
\end{theorem}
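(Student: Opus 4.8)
The plan is to derive Theorem \ref{thm:trivtrans} directly from the characterization of SEP transformations in Eq. (\ref{eq:SEP}), using the hypothesis that $\widetilde{G}_\psi = \{\one\}$. The central observation is that a state with trivial stabilizer forces the operator $H = h^\dagger h$ appearing in the transformation condition to be proportional to the identity, which will immediately yield LU-equivalence. I would organize the argument around the fact that LOCC $\subset$ SEP, so that it suffices to prove both directions (``transformed to'' and ``obtained from'') for the larger class SEP; the LOCC statement then follows as a special case, and the trivial ``if'' direction (LU operations are a valid SEP/LOCC transformation) is immediate.

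First I would treat the direction where $\ket{\psi}$ is transformed to a fully entangled $\ket{\phi}$. Since both states are fully entangled and a transformation exists, they must be SLOCC-equivalent, so $\ket{\phi} = h\ket{\psi}$ for some $h \in \widetilde{G}$, as noted in the excerpt. Applying Eq. (\ref{eq:SEP}) with the trivial stabilizer $\widetilde{G}_\psi = \{\one\}$, every $S_k$ in the sum equals $\one$, so the condition collapses to
\be
H = r\,\one,
\ee
where $H = h^\dagger h$ and $r = \|\ket{\phi}\|^2 / \|\ket{\psi}\|^2$. This means $h^\dagger h = r\,\one$, i.e.\ $h/\sqrt{r}$ is unitary and local, so $h$ is proportional to a local unitary $u = u_1 \otimes \cdots \otimes u_n \in \widetilde{K}$. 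Hence $\ket{\phi} = h\ket{\psi}$ is LU-equivalent to $\ket{\psi}$, establishing the claim in this direction.

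For the reverse direction, where $\ket{\psi}$ is deterministically obtained from a fully entangled $\ket{\phi}$, I would apply the same reasoning with the roles exchanged, being careful that Eq. (\ref{eq:SEP}) is phrased in terms of the stabilizer of the \emph{initial} state. Here the initial state is $\ket{\phi}$, so a priori one would need information about $\widetilde{G}_\phi$ rather than $\widetilde{G}_\psi$. The resolution is that, since $\ket{\psi}$ and $\ket{\phi}$ are fully entangled and SLOCC-related by some $h$ with $\ket{\psi} = h\ket{\phi}$, the stabilizers are conjugate: $\widetilde{G}_\phi = h^{-1}\widetilde{G}_\psi h$. Because $\widetilde{G}_\psi = \{\one\}$, this forces $\widetilde{G}_\phi = \{\one\}$ as well, so $\ket{\phi}$ also has trivial stabilizer and the previous argument applies verbatim to conclude $H = r\,\one$ and LU-equivalence. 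I expect this bookkeeping with the direction of the stabilizer, together with verifying that full entanglement is preserved so that the SLOCC relation and the conjugacy of stabilizers are both available, to be the only genuinely delicate point; the algebra reducing Eq. (\ref{eq:SEP}) to $H \propto \one$ is routine once the stabilizer is known to be trivial.
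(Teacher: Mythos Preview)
Your proposal is correct and follows essentially the same approach as the paper: the paper also reduces Eq.~(\ref{eq:SEP}) under the hypothesis $\widetilde{G}_\psi=\{\one\}$ to conclude that $H=r\one$, hence LU-equivalence, and cites \cite{GoKr16} for the details. Your explicit handling of the ``obtained from'' direction via conjugacy of stabilizers, $\widetilde{G}_\phi = h^{-1}\widetilde{G}_\psi h$, is the natural way to fill in what the paper leaves implicit.
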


Recall that $\tilde{K}$ denotes the group of local unitary operators. On the one hand, this result shows that, rather unexpectedly, a characterization of LOCC transformations of almost all multiqudit states is possible. On the other hand, it proves that these transformations are generically extremely restricted and nontrivial transformations are generically impossible. That is, the parties who share a generic state cannot transform it via LOCC deterministically into any other (LU--inequivalent) state. This result might also be the reason why there has been so little progress on multipartite state (or entanglement) transformations via local operations.\\

As the MES is defined as the minimal set of states which can be transformed into any other fully entangled state in the Hilbert space \cite{dVSp13}, Theorem \ref{thm:trivtrans}  implies that the MES of $(n>3)$-qudits is of full measure. Note that this is in strong contrast to the bipartite case, where a single state, namely the maximally entangled state $\ket{\Phi^+}=\sum_i \ket{ii}$ can be transformed into any other state in the Hilbert space with LOCC. In Sec. \ref{sec:picture} we discuss in detail the picture of multipartite pure state transformations that emerges if we combine our findings with previous results on the subject (see also Fig. \ref{fig:1}).\\

Theorem \ref{thm:trivtrans} also has implications for the construction of entanglement measures. Recall that an entanglement measure for pure states is a function $E: \mH_n \rightarrow \mathbb{R}_{\geq 0}$ such that $E(\psi) \geq E(\phi)$ holds whenever the transformation from $\ket{\psi}$ to  $\ket{\phi}$ can be performed deterministically via LOCC. Since generic multiqudit states cannot be reached via nontrival deterministic LOCC, one only has to verify if $E$ is invariant under LU-transformations and nonincreasing under LOCC transformations to and within the zero-measure subset of states with nontrivial stabilizer, e.g. to states that are not fully entangled.\\

\subsection{A characterization of optimal probabilistic local transformations for almost all multiqudit states}

Given the fact that it is not possible to transform generic multiqudit states via local transformations into any other state, it is crucial to determine the optimal probability to achieve these conversions. Note that if both, the initial and final states, are fully entangled, this probability is only nonzero if they are elements of the same stochastic LOCC (SLOCC) class \cite{slocc}. In \cite{GoKr16} some of us found an explicit formula for this probability for qubit--states. Due to Theorem \ref{IntroMain} this formula indeed holds for arbitrary local dimensions.
\begin{theorem}
\label{thm:opttrans}
Let $\mH_{n,d}$ be one of the multipartite qudit Hilbert spaces in Theorem \ref{IntroMain}, let $\ket{\psi} \in \mH_{n,d}$ be a normalized fully entangled $n$-partite state with trivial stabilizer, i.e.  $\widetilde{G}_\psi = \{\one\}$, and let $\ket{\phi} = h\ket{\psi}$ be a normalized state in the $SLOCC$ class of $\ket{\psi}$. Then the maximum probability to convert $\ket{\psi}$ to $\ket{\phi}$ via $LOCC$ or $SEP$ is given by
\begin{align}
 p_{max}(\ket{\psi}\rightarrow\ket{\phi}) = \frac{1}{\lambda_{max}(h^\dagger h)}, \label{eq:optprob}
\end{align}
where $\lambda_{max}(X)$ denotes the maximal eigenvalue of $X$.
\end{theorem}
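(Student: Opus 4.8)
The plan is to derive the optimal probability formula directly from the SEP characterization in Eq.~(\ref{eq:SEP}), specialized to the trivial-stabilizer case guaranteed by Theorem~\ref{IntroMain}. Since $\ket{\psi}$ has $\widetilde{G}_\psi=\{\one\}$, the only symmetries $S_k$ available in Eq.~(\ref{eq:SEP}) are (up to phases) the identity, so the deterministic convertibility condition collapses entirely. For the \emph{probabilistic} case, I would first recall the standard fact that the optimal single-copy SLOCC conversion probability from $\ket{\psi}$ to $\ket{\phi}=h\ket{\psi}$ is governed by applying the best local filtering operation: one seeks a single Kraus operator $M=c\,h$ (proportional to $h$, since $h^{-1}M$ must be a symmetry, hence trivial) together with a completeness-respecting measurement, and the probability is maximized by scaling $M$ as large as the constraint $M^\dagger M\le\one$ permits.

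The key steps are as follows. First I would invoke the SEP transformation criterion: $\ket{\psi}\to\ket{\phi}=h\ket{\psi}$ is achievable via SEP with probability $p$ iff there is a Kraus decomposition in which the ``successful'' branch acts as $M\propto h$ on $\ket{\psi}$. Because the stabilizer is trivial, every successful Kraus operator must be of the form $M=c\,h$ for a scalar $c$, as argued below Eq.~(\ref{eq:SEP}). The success probability on a normalized input is then $p=\|M\ket{\psi}\|^2=|c|^2\,\|h\ket{\psi}\|^2=|c|^2$, using that $\ket{\phi}=h\ket{\psi}$ is normalized. Second, the completeness relation $\sum_k M_k^\dagger M_k=\one$ forces $M^\dagger M=|c|^2 h^\dagger h\le\one$ as an operator inequality, since the remaining terms are positive semidefinite. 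Third, I would optimize: the largest admissible $|c|^2$ is the supremum of values for which $|c|^2 h^\dagger h\le\one$, which is exactly $1/\lambda_{max}(h^\dagger h)$. This yields
\begin{align}
p_{max}=|c|^2_{\max}=\frac{1}{\lambda_{max}(h^\dagger h)}, \nonumber
\end{align}
establishing the upper bound.

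To show this bound is achievable, I would exhibit an explicit valid SEP (in fact LOCC) protocol. Setting $|c|^2=1/\lambda_{max}(h^\dagger h)$, the operator $M=c\,h$ satisfies $M^\dagger M\le\one$, so $\one-M^\dagger M\ge 0$ admits a positive square root; taking a second Kraus operator $M_{\mathrm{fail}}=\sqrt{\one-M^\dagger M}$ (the failure branch) completes a valid two-outcome separable measurement. Since $M=c\,h$ and $h$ is by hypothesis a \emph{local} invertible operator, $M$ is local; the failure operator is also local provided the protocol can be arranged party-by-party, which is the standard construction for a product filtering operation, so this separable map is in fact implementable by LOCC. The main obstacle I anticipate is precisely this last point: verifying that the optimal separable map is LOCC-implementable rather than merely SEP-implementable. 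For a single-branch filtering of a product operator this is routine (it is a local protocol where each party applies a local POVM element and they coordinate a single abort outcome), and crucially the trivial-stabilizer hypothesis removes the usual complication of having to combine multiple symmetry branches, so the LOCC and SEP optimal probabilities coincide and both equal $1/\lambda_{max}(h^\dagger h)$.
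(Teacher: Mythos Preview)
Your argument is correct and matches the paper's approach (which defers the details to \cite{GoKr16}): trivial stabilizer forces every successful local Kraus operator to be a scalar multiple of $h$, yielding the upper bound $1/\lambda_{max}(h^\dagger h)$, and achievability is via the party-by-party one-successful-branch protocol you describe at the end. Note only that your intermediate two-outcome map with $M_{\mathrm{fail}}=\sqrt{\one-M^\dagger M}$ is not itself separable in general (that square root need not be a product operator), so the party-by-party filtering is essential rather than a convenience---exactly as the paper's description of the OSBP emphasizes.
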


Due to Theorem \ref{IntroMain} this theorem gives a simple expression for the optimal probability $p_{max}(\ket{\psi}\rightarrow \ket{\phi})$ to locally transform a generic $(n>3)$-qudit state $\ket{\psi}$ into another fully entangled state $\ket{\phi}$. These results also hold for tripartite $d$-level systems with $d=4,5,6$. It should be noted here that the optimal success probability was only known for very restricted transformations prior to these results (see e.g. \cite{turgut, GoWa11} and references therein). Theorem \ref{thm:trivtrans} and Theorem \ref{thm:opttrans} now provide a characterization of deterministic and optimal probabilisitc local transformations for almost all multiqudit states.\\

Note that the optimal success probability given in Eq. (\ref{eq:optprob}) is optimal for transformations via LOCC and via SEP. This shows that, despite the fact that there are pure state transformations that can be achieved via SEP but not via LOCC \cite{HeSp15}, the two classes of operations are equally powerful for transformations among generic $(n>3)$-qudit states. The reason for this is that the optimal SEP protocol is a so-called one-successful-branch protocol (OSBP), which can always be implemented via LOCC in one round of classical communication. As suggested by the name, a OSBP is a simple protocol for which only one measurement branch leads to the final state, while all other branches lead to states that are no longer fully entangled (see \cite{GoKr16}). This optimal protocol to transform $\ket{\psi}$ into $\ket{\phi} = h\ket{\psi}$ via LOCC, where $h = h_1 \otimes \ldots \otimes h_n \in \widetilde{G}$, is implemented as follows. The first party applies a local generalized measurement that contains an element proportional to $h_1$. Similarly, party 2 applies a local generalized measurement that contains an element proportional to $h_2$ etc. The successful branch is the one where all parties managed to apply the operator $h_i$. Due to the fact that the local measurements have to obey the completeness relation one can show that the maximal success probability is given as in Theorem \ref{thm:opttrans}. Note that this protocol can of course also be performed if the corresponding state has nontrivial symmetries. That is, the success probability given in Eq. (\ref{eq:optprob}) is always a lower bound on the success probability.

Due to Theorem \ref{thm:trivtrans}, the optimal success probability can only be one if the states are LU--equivalent. Let us verify that this is indeed the case.
Given the premises of Theorem \ref{thm:opttrans} we make the following observation.

\begin{observation} The optimal success probability as given in Theorem \ref{thm:opttrans} is equal to one iff $H\equiv h^\dagger h=\one$. \end{observation}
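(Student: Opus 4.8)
The plan is to prove the Observation directly from the formula in Theorem~\ref{thm:opttrans}, namely $p_{max}(\ket{\psi}\rightarrow\ket{\phi}) = 1/\lambda_{max}(h^\dagger h)$, by analyzing the maximal eigenvalue of the positive operator $H\equiv h^\dagger h$. The success probability equals one precisely when $\lambda_{max}(H)=1$, so the entire statement reduces to showing that $\lambda_{max}(H)=1$ if and only if $H=\one$. The ``if'' direction is immediate, since $H=\one$ trivially has maximal eigenvalue $1$. The substance is the ``only if'' direction.

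For the forward direction, I would exploit the fact that both $\ket{\psi}$ and $\ket{\phi}=h\ket{\psi}$ are \emph{normalized} states, a premise explicitly carried over from Theorem~\ref{thm:opttrans}. First I would write $H$ as a tensor product of local positive operators, $H=\bigotimes_i H_i$ with $H_i=h_i^\dagger h_i$, so that $\lambda_{max}(H)=\prod_i \lambda_{max}(H_i)$. Normalization of $\ket{\phi}$ gives the key constraint
\be
1 = \ip{\phi}{\phi} = \braket{\psi}{H}{\psi},
\ee
so the expectation value of $H$ in the normalized state $\ket{\psi}$ equals $1$. Since $H$ is positive semidefinite with $\lambda_{max}(H)=1$ by assumption, every eigenvalue of $H$ lies in $[0,1]$, which means $\one - H \geq 0$. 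Combining this with $\braket{\psi}{\one-H}{\psi}=0$ forces $(\one-H)\ket{\psi}=0$, i.e. $H\ket{\psi}=\ket{\psi}$, using the standard fact that a positive operator with vanishing expectation value on a given vector annihilates that vector.

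The next step is to upgrade $H\ket{\psi}=\ket{\psi}$ to the operator identity $H=\one$, and this is where the hypothesis of trivial stabilizer enters decisively. Since $H$ is positive, it has a positive square root, and I would consider $g\eqdef \sqrt{H}=\bigotimes_i\sqrt{H_i}$, which is a local invertible operator. The relation $H\ket{\psi}=\ket{\psi}$ together with the fact that $\sqrt{H}$ is positive (hence diagonalizable with the same eigenspaces as $H$) yields $\sqrt{H}\ket{\psi}=\ket{\psi}$ on the eigenspace decomposition, so $g=\sqrt{H}\in\widetilde{G}_\psi$. By the trivial-stabilizer premise $\widetilde{G}_\psi=\{\one\}$, this forces $\sqrt{H}=\one$ and therefore $H=\one$.

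The main obstacle I anticipate is the passage from $H\ket{\psi}=\ket{\psi}$ to $\sqrt{H}\in\widetilde{G}_\psi$: one must be careful that $\sqrt{H}$ genuinely fixes $\ket{\psi}$ as a \emph{vector} (not merely up to a phase or a scalar), and that it is a legitimate element of $\widetilde{G}$, i.e. genuinely local and invertible. Locality and invertibility of $\sqrt{H}=\bigotimes_i\sqrt{H_i}$ follow because each $H_i$ is positive; here the full-rank (fully entangled) property guarantees invertibility on the relevant support. The cleanest route is to diagonalize $H$ simultaneously with $\sqrt H$ and observe that on the eigenspace with eigenvalue $1$ the operator $\sqrt H$ also acts as the identity, while $\ket{\psi}$ lives in precisely that eigenspace. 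Once $\sqrt{H}\ket\psi=\ket\psi$ is secured, invoking $\widetilde{G}_\psi=\{\one\}$ closes the argument without further calculation.
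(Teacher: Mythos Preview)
Your argument is correct and follows essentially the same route as the paper: use the normalization constraint $\braket{\psi}{H}{\psi}=1$ together with $\lambda_{max}(H)=1$ to conclude $H\ket{\psi}=\ket{\psi}$, and then invoke $\widetilde{G}_\psi=\{\one\}$. The only difference is that the paper applies the trivial-stabilizer hypothesis directly to $H$ (which is already local and invertible, hence in $\widetilde{G}$), so your detour through $\sqrt{H}$ is unnecessary.
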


This can be easily seen as follows. As $\ket{\psi}$ and $\ket{\phi}=h\ket{\psi}$ are both normalized we have that
\bea \lambda_{max}(H)=max_{\chi} \frac{\bra{\chi}H\ket{\chi}}{\bra{\chi}\chi\rangle}\geq  \frac{\bra{\psi}H\ket{\psi}}{\bra{\psi}\psi\rangle}=1.\eea Due to Eq. (\ref{eq:optprob}) the success probability is one iff the maximal eigenvalue of $H$ is one. We hence obtain that $\ket{\psi}$ is an eigenstate of $H$, i.e. $H\ket{\psi}=\lambda_{max}(H) \ket{\psi}$. However, as $H$ is in $\widetilde{G}$ and as $\ket{\psi}$ does not have any nontrivial local symmetry it must hold that $H=\one$.

\subsection{A simple method to decide LU-equivalence of generic multiqudit states}

Since local unitary transformations are the only trivial LOCC transformations of pure states, i.e. the only transformations that do not change the entanglement of a state \cite{gingrich}, it is important to know when two states are LU-equivalent. That is, given two states $\ket{\psi},\ket{\phi}$ one would like to know whether there exists a local unitary $u \in \widetilde{K}$ such that $\ket{\psi} = u\ket{\phi}$. In general this is a highly nontrivial problem (see e.g. \cite{Kr10}).  However, we show now that the results in this article also allow us to solve the LU-equivalence problem for generic multiqudit states, as stated in the following theorem.

\begin{theorem}
\label{thm:LU1}
Let $\ket{\psi}, \ket{\phi} \in \mH_n$ be both states in the SLOCC class of a state, $\ket{\psi_s}$, with trivial stabilizer, i.e. $\widetilde{G}_{{\psi_s}} = \{\one\}$. That is, $\ket{\psi} = g\ket{\psi_s}$ and $\ket{\phi} = h\ket{\psi_s}$. Then $\ket{\psi}$ is LU--equivalent to $\ket{\phi}$ iff $G= H$. \end{theorem}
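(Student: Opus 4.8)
The plan is to prove both implications directly, exploiting the triviality of the stabilizer of $\ket{\psi_s}$ at the single crucial step. Throughout I write $G \equiv g^\dagger g$ and $H \equiv h^\dagger h$, and I recall that $g,h \in \widetilde{G}$ are local and invertible, so that the operators constructed below remain inside $\widetilde{G}$.

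For the direction $G=H \Rightarrow$ LU-equivalence, I would define the candidate operator $u \equiv h g^{-1}$. Since $g$ and $h$ are local, $u = (h_1 g_1^{-1})\otimes\cdots\otimes(h_n g_n^{-1})$ is local as well, and it maps $\ket{\psi}$ to $\ket{\phi}$ because $u\ket{\psi} = h g^{-1} g \ket{\psi_s} = h\ket{\psi_s} = \ket{\phi}$. A short computation gives $u^\dagger u = (g^\dagger)^{-1} h^\dagger h\, g^{-1} = (g^\dagger)^{-1} H g^{-1} = (g^\dagger)^{-1} G\, g^{-1} = (g^\dagger)^{-1} g^\dagger g\, g^{-1} = \one$, where the third equality uses the hypothesis $G=H$. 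Hence $u$ is a local operator that is globally unitary. The remaining point is that such an operator is automatically a genuine local unitary: writing $u=\bigotimes_i u_i$, each factor $u_i^\dagger u_i$ is positive and $\bigotimes_i u_i^\dagger u_i = \one$ forces each $u_i^\dagger u_i$ to be a positive multiple of the identity, say $c_i\one$ with $\prod_i c_i = 1$. Rescaling $u_i \mapsto u_i/\sqrt{c_i}$ then yields genuine local unitaries whose tensor product still maps $\ket{\psi}$ to $\ket{\phi}$ (the overall scalar $\prod_i\sqrt{c_i}=1$), establishing LU-equivalence.

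For the converse, suppose $\ket{\psi}$ and $\ket{\phi}$ are LU-equivalent, so that $\ket{\phi} = u\ket{\psi}$ for some local unitary $u\in\widetilde{K}$; any global phase can be absorbed into one tensor factor of $u$. Substituting $\ket{\psi}=g\ket{\psi_s}$ and $\ket{\phi}=h\ket{\psi_s}$ gives $h\ket{\psi_s} = ug\ket{\psi_s}$, hence $h^{-1}ug\,\ket{\psi_s}=\ket{\psi_s}$. The operator $h^{-1}ug$ is a product of local invertible operators and therefore lies in $\widetilde{G}$; since $\widetilde{G}_{\psi_s}=\{\one\}$, it must equal $\one$, so $h = ug$. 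Then $H = h^\dagger h = g^\dagger u^\dagger u\, g = g^\dagger g = G$, using that $u$ is unitary. This closes the equivalence.

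Regarding difficulty: the whole argument is essentially formal once a state with trivial stabilizer is available. The only step that is not pure algebra is the factorization lemma used in the first direction, namely that a local operator which is globally unitary is, up to an irrelevant overall phase, a tensor product of local unitaries; this follows from the eigenvalue argument sketched above. I expect the main subtlety to be bookkeeping: keeping track of the distinction between a local unitary defined only up to scalars and a genuine element of $\widetilde{K}$, and ensuring that the invertibility of $g,h$ (guaranteed because $\ket{\psi},\ket{\phi}$ lie in the SLOCC class of $\ket{\psi_s}$) is actually used, so that $h^{-1}ug$ is an admissible symmetry to which the triviality of the stabilizer applies.
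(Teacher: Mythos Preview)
Your proof is correct and follows essentially the same approach as the paper's. For the forward direction the paper simply asserts that $G=H$ holds iff $g=uh$ for some local unitary $u\in\widetilde{K}$, whereas you spell this out by constructing $u=hg^{-1}$, verifying $u^\dagger u=\one$, and then arguing that a local operator which is globally unitary factors into genuine local unitaries after rescaling; for the converse both proofs invoke the triviality of $\widetilde{G}_{\psi_s}$ in the same way to force the connecting operator to be the identity.
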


\begin{proof}
As before we use the notation $G=g^\dagger g$ and $H= h^\dagger h$. First, note that $G=H$ holds iff $g=u h$ for some local unitary $u \in \widetilde{K}$. Hence, $\ket{\psi} = g\ket{\psi_s}=u h \ket{\psi_s}=u  \ket{\phi}$ and therefore the states are LU--equivalent. The other direction of the proof can be seen as follows. If $\ket{\psi} = g\ket{\psi_s}=u h \ket{\psi_s}$, then $g^{-1}uh=\one $ must hold, as $\ket{\psi_s}$ does not possess any nontrivial local symmetry. Thus, we have that $G=H$.
\end{proof}

This strong implication follows only from the fact that $\ket{\psi_s}$ has trivial stabilizer, which implies that the standard form $g \ket{\psi_s}$ with which a state in the SLOCC class of $\ket{\psi_s}$ can be represented, is unique. That is, the only $g'$ such that $g \ket{\psi_s}=g' \ket{\psi_s}$ is $g = g'$, as otherwise $(g')^{-1}g$ would be a nontrivial local symmetry of the state $\ket{\psi_s}$. 
Due to Theorem \ref{IntroMain}, Theorem \ref{thm:LU1} applies to almost all multiqudit states.

Let us now generalize this result to the situation where it is known that the two states are in the same SLOCC class, but the local invertible operator transforming one into the other (for the states above the operator $h g^{-1}$) is unknown. To this end, we introduce now the notion of critical states. A state is called critical if all of its single-subsystem reduced states are proportional to the completely mixed state \cite{GoWa11}. Prominent examples of critical states are Bell states, GHZ states~\cite{ghz}, cluster states~\cite{RaBr01}, graph states~\cite{Hei05}, code states~\cite{NiCh00}, and absolutely maximally entangled states \cite{helwig}. The set of all critical states in $\mH_{n,d}$, denoted by $Crit(\mH_{n,d})$, plays an important role in entanglement theory as the union of all SLOCC classes of critical states is of full measure in $\mH_{n,d}$ \cite{GoWa11}. For more details and properties of critical states we refer the reader to Sec. \ref{sec:math}.

Let us note that the standard form, $\ket{\psi} = g\ket{\psi_s}$, of a generic state, corresponds to the normal form introduced in \cite{NFFrank}. The numerical algorithm presented in \cite{NFFrank} can be used to find the normal form of a generic state, i.e. a local invertible $g \in \widetilde{G}$ and a critical state $\ket{\psi_s}$ such that $\ket{\psi}=g \ket{\psi_s}$. Due to the Kempf-Ness theorem (\cite{KN}, see also Appendix \ref{app:theorems}), there exists, up to local unitaries, only one critical state in a SLOCC class. Hence, computing the normal form for two states in the same SLOCC class leads to $\ket{\psi}=g \ket{\psi_s}$ and $\ket{\phi} = h\ket{\psi'_s}$, where $\ket{\psi'_s}=u\ket{\psi_s}$, with $u$ a local unitary. The question we address next is when are these two states LU--equivalent. The necessary and sufficient condition is given by the following lemma.

\begin{lemma}

Let $\ket{\psi}, \ket{\phi} \in \mH_n$ be both states in the SLOCC class of a critical state, $\ket{\psi_s}$, with trivial stabilizer. Let further $\ket{\psi} = g\ket{\psi_s}$ and $\ket{\phi} = h\ket{\psi'_s}$ be the normal forms of the states derived with the algorithm presented in \cite{NFFrank}. Then $\ket{\psi}$ is LU--equivalent to $\ket{\phi}$ iff the local unitary $u$ which transforms $\ket{\psi'_s}$ into $\ket{\psi_s}$, i.e. $\ket{\psi_s}=u \ket{\psi'_s}$ (which must exist and is unique) fulfills $G=u^\dagger H u$.
\end{lemma}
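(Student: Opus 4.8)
The plan is to reduce this lemma to the already-proven Theorem \ref{thm:LU1}, whose conclusion is stated in terms of a common critical reference state shared by $\ket{\psi}$ and $\ket{\phi}$. The obstacle here is that the normal-form algorithm of \cite{NFFrank} produces two a priori \emph{different} critical states, $\ket{\psi_s}$ and $\ket{\psi'_s}$, one for each input state. By the Kempf-Ness theorem these two critical states lie in the same SLOCC class (since $\ket{\psi}$ and $\ket{\phi}$ do), and that theorem guarantees the critical state in a given SLOCC class is unique up to local unitaries; hence there exists a local unitary $u \in \widetilde{K}$ with $\ket{\psi_s} = u\ket{\psi'_s}$. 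I would begin by establishing uniqueness of this $u$: since $\ket{\psi_s}$ has trivial stabilizer, if $u_1, u_2$ both satisfy $\ket{\psi_s} = u_i \ket{\psi'_s}$, then $u_2^{-1} u_1$ fixes $\ket{\psi'_s}$, and as $\ket{\psi'_s} = u^{-1}\ket{\psi_s}$ also has trivial stabilizer (conjugating the stabilizer by $u$), we get $u_1 = u_2$.

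The main step is then to rewrite $\ket{\phi}$ so that both states are expressed relative to the \emph{same} critical state $\ket{\psi_s}$. Substituting $\ket{\psi'_s} = u^{-1}\ket{\psi_s}$ into $\ket{\phi} = h\ket{\psi'_s}$ gives $\ket{\phi} = (h u^{-1})\ket{\psi_s}$. Now $\ket{\psi} = g\ket{\psi_s}$ and $\ket{\phi} = \tilde h \ket{\psi_s}$ with $\tilde h \eqdef h u^{-1}$ are both in the standard form used in Theorem \ref{thm:LU1}, with the common trivial-stabilizer critical state $\ket{\psi_s}$. Applying that theorem directly, $\ket{\psi}$ is LU-equivalent to $\ket{\phi}$ iff $G = \tilde H$, where $\tilde H \eqdef \tilde h^\dagger \tilde h$.

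The final step is to simplify $\tilde H$ and check it matches the claimed condition $G = u^\dagger H u$. Computing $\tilde H = (h u^{-1})^\dagger (h u^{-1}) = (u^{-1})^\dagger h^\dagger h\, u^{-1} = (u^\dagger)^{-1} H u^{-1}$, so the criterion $G = \tilde H$ reads $G = (u^\dagger)^{-1} H u^{-1}$. Since $u$ is unitary, $u^{-1} = u^\dagger$ and $(u^\dagger)^{-1} = u$, which turns this into $G = u H u^\dagger$. I would note that this is the condition stated in the lemma up to the convention of which direction $u$ acts: the lemma writes $\ket{\psi_s} = u\ket{\psi'_s}$ and the condition $G = u^\dagger H u$, which corresponds instead to $\tilde h = h u$ (i.e. using $\ket{\psi'_s} = u\ket{\psi_s}$). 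The two are merely bookkeeping variants related by replacing $u \leftrightarrow u^\dagger$, and I would fix the convention at the outset so that the substitution yields exactly $G = u^\dagger H u$. The only genuinely delicate point is the appeal to Kempf-Ness for existence and the trivial-stabilizer argument for uniqueness of $u$; once $u$ is pinned down, the rest is a direct substitution into Theorem \ref{thm:LU1} and an elementary manipulation of the positive operators $G$, $H$.
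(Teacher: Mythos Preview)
Your proposal is correct and follows essentially the same logic as the paper: both arguments hinge on the fact that a trivial stabilizer makes the representation $g\ket{\psi_s}$ unique, so equating the two representations forces $g=vhu$ for some local unitary $v$, whence $G=u^\dagger H u$. The only difference is packaging: you reduce explicitly to Theorem~\ref{thm:LU1} by rewriting $\ket{\phi}=(hu^{-1})\ket{\psi_s}$, whereas the paper repeats the direct stabilizer argument in place. Your observation about the $u\leftrightarrow u^\dagger$ convention is on point---the paper's own proof tacitly uses $\ket{\psi'_s}=u\ket{\psi_s}$ when deriving $g=vhu$, so the sign convention in the lemma statement versus the proof is not entirely consistent there either; this is a harmless bookkeeping ambiguity, exactly as you note.
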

Due to Theorem \ref{IntroMain} this theorem again applies to almost all multiqudit states. It provides an easy way to solve the, a priori highly nontrivial, problem of deciding LU equivalence of two generic states that are SLOCC equivalent.
\begin{proof}
{\it If}: Let $G=u^\dagger H u$. Then there exists a unitary $v$ such that $g=v h u $. As all operators, $u$, $h$, and $g$ are local and invertible, $v$ is a local unitary operator. Hence, $\ket{\psi} = g\ket{\psi_s}= vh u \ket{\psi_s}=vh \ket{\psi'_s} = v\ket{\phi}$. {\it Only if}: If there exists a local unitary, $v$ transforming $\ket{\phi}$ into $\ket{\psi}$ we have $\ket{\psi}=g\ket{\psi_s}= v
\ket{\phi}=v h\ket{\psi'_s}=vh u\ket{\psi_s}$. The last equality follows from the uniqueness of the critical states in a SLOCC orbit. As $\ket{\psi_s}$ does not possess any non--trivial local symmetry it must hold that $g=vh u$.  Therefore, we have $G=u^\dagger H u$.

\end{proof}

\subsection{Multi-copy transformations and asymptotic conversion rates}

Let us briefly discuss which consequences the results presented here have in the case where transformations of many copies of a state are considered. First of all, note that the fact that $\ket{\psi}$ has only trivial local symmetries does not imply that the same holds for multiple copies of this state. In fact, any  $k$ copies of a state $\ket{\psi}$, i.e.  $\ket{\psi}^{\otimes k}$, do have local symmetries, namely a local permutation operator (SWAP) applied to all parties. Hence, multi-copy states belong to the zero-measure subset of multiqudit states with nontrivial local symmetries. These nontrivial symmetries could give rise to nontrivial local transformations (see Eq. (\ref{eq:SEP})). Indeed, it has recently been shown in \cite{CrHeKr18} that there are cases where two copies of a state can be transformed to states which cannot be reached from other states in the case of a single copy. Hence, the MES can be made smaller even if only two copies of the state are considered.\\
However, since we know the optimal probability to locally transform a single copy of a generic state $\ket{\psi}$ into a fully entangled state $\ket{\phi}$, it is straighforward to obtain a lower bound on the optimal probability to transform $k$ copies of $\ket{\psi}$ into $m \leq k$ copies of a fully entangled state $\ket{\phi} = h\ket{\psi}$ via LOCC, namely
\begin{align}
&p_{max}(\ket{\psi}^{\otimes k} \rightarrow \ket{\phi}^{\otimes m})  \nonumber \geq\\ 
&\sum_{j=m}^k{{k}\choose{j}} p_{max}(\ket{\psi}\rightarrow \ket{\phi})^{j} (1-p_{max}(\ket{\psi}\rightarrow \ket{\phi}))^{k-j} \label{eq:ktom}
\end{align}
Although this bound follows trivially from our results on single copy transformations it can provide new insights into the multi-copy case. This is examplified if one considers the asymptotic limit of $k\rightarrow \infty$, where one is interested in the optimal rate $R(\ket{\psi}\rightarrow \ket{\phi})$ at which asymptotically many copies of a state $\ket{\psi}$ can be transformed into copies of a state $\ket{\phi}$, which is defined as
\begin{widetext}
\begin{align}
 &R(\ket{\psi}\rightarrow \ket{\phi}) = \sup\left\{r \ \vert \ \lim_{k\rightarrow \infty} \left ( \inf_{\Lambda_{LOCC}}\| \Lambda_{LOCC}(\ket{\psi}\bra{\psi}^{\otimes k}) - \ket{\phi}\bra{\phi}^{\otimes \lfloor rk \rfloor}\|_1\right) = 0 \right\}. \label{eq:ratedef}
\end{align}
\end{widetext}
Here, the infimum is taken over all LOCC maps and $\|X\|_1 =  \tr(\sqrt{X^\dagger X})$ denotes the trace norm of $X$.
It was recently shown in \cite{eisert} that for tripartite states $\ket{\psi}, \ket{\phi}$ it holds that
\begin{align}
 R(\ket{\psi} \rightarrow \ket{\phi}) \geq \min\left\{\frac{S(\rho_{\psi}^{(1)})}{S(\rho_{\phi}^{(2)}) + S(\rho_{\phi}^{(3)})},\frac{S(\rho_{\psi}^{(2)})}{S(\rho_{\phi}^{(2)})},\frac{S(\rho_{\psi}^{(3)})}{S(\rho_{\phi}^{(3)})}\right\}, \label{eq:eisert}
\end{align}
where $\rho_{\psi}^{(i)} = \tr_{l \neq i}(\ket{\psi}\bra{\psi})$ (and similar for $\ket{\phi}$) and $S(\rho) = -\tr[\rho \log(\rho)]$ is the Von Neumann entropy. Note that this bound can obviously be improved by taking the maximum over all bipartitions of the tripartite states. Little is known on lower bounds on $R(\ket{\psi}\rightarrow \ket{\phi})$ for states of more than three parties. 
However, due to Ineq. (\ref{eq:ktom}) and the law of large numbers (see e.g. \cite{NiCh00} and references therein) we obtain the following theorem.
\begin{theorem}
\label{thm:manycopy}
Let $\ket{\psi},\ket{\phi} \in \mH_n$ be two multipartite entangled states and let $p_{max}(\ket{\psi}\rightarrow \ket{\phi})$ denote the optimal success probability to transform $\ket{\psi}$ into $\ket{\phi}$ via LOCC. Then the asymptotic LOCC conversion rate from $\ket{\psi}$ to $\ket{\phi}$ fulfills,
\begin{align}
 R(\ket{\psi}\rightarrow \ket{\phi}) \geq p_{max}(\ket{\psi}\rightarrow \ket{\phi}). \label{eq:multirate}
\end{align}
\end{theorem}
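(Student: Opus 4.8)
The plan is to convert the many-copy success bound, Ineq. (\ref{eq:ktom}), into a statement about a deterministic LOCC channel measured in trace norm, and then to invoke the law of large numbers. Write $p \equiv p_{max}(\ket{\psi}\rightarrow\ket{\phi})$ and fix any rate $r < p$. Since the rate in (\ref{eq:ratedef}) is a supremum over achievable $r$, it suffices to show that every such $r$ is achievable, for then $R(\ket{\psi}\rightarrow\ket{\phi}) \geq \sup\{r : r < p\} = p$.

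First I would recall the protocol behind Ineq. (\ref{eq:ktom}). By definition of $p$ there is a single-copy LOCC protocol transforming $\ket{\psi}$ into $\ket{\phi}$ with success probability $p$; grouping its measurement outcomes into a ``success'' event (output $\ket{\phi}$, total probability $p$) and the complementary ``failure'' event turns each copy into an independent Bernoulli trial. Applying this protocol to each of the $k$ input copies separately --- which is LOCC, since the parties act copy by copy and exchange their classical outcomes --- yields a number of successes $X \sim B(k,p)$. Conditioned on $X \geq m \equiv \lfloor rk\rfloor$, the parties use the communicated outcomes to agree on $m$ successful copies, keep them, and locally trace out the remaining copies; the retained state is exactly $\ket{\phi}\bra{\phi}^{\otimes m}$. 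The probability of this event is the binomial tail $P(X \geq m) = \sum_{j=m}^{k}\binom{k}{j}p^{j}(1-p)^{k-j}$, i.e. the right-hand side of Ineq. (\ref{eq:ktom}).

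The key step is to turn this probabilistic success into closeness of a deterministic channel in trace norm, as demanded by (\ref{eq:ratedef}). On the failure branch ($X < m$) the parties output some fixed normalized state $\sigma_k$, so the overall output of the (now deterministic) LOCC map is $\rho_k = P_{succ}\,\ket{\phi}\bra{\phi}^{\otimes m} + (1-P_{succ})\,\sigma_k$ with $P_{succ} = P(X \geq m)$. Hence
\begin{align}
\big\| \rho_k - \ket{\phi}\bra{\phi}^{\otimes m} \big\|_1 = (1-P_{succ})\,\big\| \sigma_k - \ket{\phi}\bra{\phi}^{\otimes m}\big\|_1 \leq 2\,(1-P_{succ}),
\end{align}
using that the trace-norm distance of two normalized states is at most $2$. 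It remains to show $P_{succ}\to 1$ as $k\to\infty$. Since $m = \lfloor rk\rfloor \leq rk$ we have $P_{succ} = P(X\geq m) \geq P(X \geq rk) = P(X/k \geq r)$, and because $X/k$ converges in probability to $p > r$ (weak law of large numbers), $P(X/k \geq r) \to 1$; a Chernoff bound gives the same conclusion with exponential convergence. Thus the trace-norm distance above tends to $0$, so $r$ is achievable for every $r < p$, which proves the theorem.

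I expect the only genuinely delicate point to be the bridge in the third paragraph between the probabilistic one-successful-branch protocol and the deterministic-channel, trace-norm formulation of the asymptotic rate; the mixing identity for $\rho_k$ together with the bound $2(1-P_{succ})$ is what makes this clean. Everything else --- evaluating the binomial tail, its convergence to $1$ strictly below the mean, and the LOCC-implementability of discarding the failed and surplus copies --- is routine. I would also stress that the argument uses no genericity or trivial-stabilizer assumption on $\ket{\psi},\ket{\phi}$, so the bound holds for arbitrary multipartite states, as claimed in the remark preceding the theorem.
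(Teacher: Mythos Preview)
Your proposal is correct and follows essentially the same approach as the paper: the paper's own justification is the single sentence preceding the theorem, which states that the result follows from Ineq.~(\ref{eq:ktom}) together with the law of large numbers. You have simply made explicit the two ingredients the paper leaves implicit, namely the passage from the probabilistic protocol to a deterministic LOCC channel via a fixed failure output and the trace-norm bound $2(1-P_{succ})$, and the law-of-large-numbers argument that $P_{succ}\to 1$ whenever $r<p$.
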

For a normalized generic multiqudit state $\ket{\psi}$ (i.e. with trivial stabilizer) and a normalized state $\ket{\phi} = h\ket{\psi}$ we can insert the expression of Eq. (\ref{eq:optprob}) for $p_{max}(\ket{\psi}\rightarrow \ket{\phi})$ into Ineq. (\ref{eq:multirate}) and we obtain the following bound,
\begin{align}
R(\ket{\psi} \rightarrow \ket{\phi}) \geq \frac{1}{\lambda_{max}(H)}. \label{eq:asymp}
\end{align}
Note that, even in the tripartite case (e.g. for three 4-level systems), one can easily construct examples where the bound in Ineq. (\ref{eq:asymp}) is better than the bound in Ineq. (\ref{eq:eisert}) (even if optimized over all bipartitions), while there are also tripartite states for which the opposite holds.\\

\section{Mathematical concepts and proof of the main result}
\label{sec:math}
In this section we present the proof of our main result, Theorem \ref{IntroMain}. In fact, we prove Theorem \ref{IntroMain} by deriving results that are stronger than actually required. However, we believe that these tools are also useful in other contexts and should therefore be presented in the main text of this article. We first introduce in Sec. \ref{SecPre} our notation and the main mathematical tools that we use. Furthermore, we summarize some of the results which were presented in \cite{GoKr16}. In Sec. \ref{sec:gentriv} we first give a concise outline of the proof of Theorem \ref{IntroMain}. We then continue with a presentation of the detailed proof. In Sec. \ref{sec:states} we give examples of states with trivial stabilizer, which are required to complete the proof.

\subsection{Notations and preliminaries}
\label{SecPre}

Throughout the remainder of this paper we use the following notation. We consider the following 4 different groups all acting
on $\mH_n$:
\begin{align*}
& G\eqdef SL(d,\mathbb{C})\otimes\cdots\otimes
SL(d,\mathbb{C})\subset SL(\mH_{n})\\
& K\eqdef SU(d)\otimes\cdots\otimes
SU(d)\subset SU(\mH_{n})\\
& \tilde{G}\eqdef GL(d,\mathbb{C})\otimes\cdots\otimes
GL(d,\mathbb{C})\subset GL(\mH_{n})\\
& \tilde{K}\eqdef U(d)\otimes\cdots\otimes
U(d)\subset U(\mH_{n})
\end{align*}
Note that $\widetilde{G}=\mathbb{C}^{\times}G$, where $\mathbb{C}^{\times} =\mathbb{C}\backslash\{0\}$ and that $\widetilde{K}=\widetilde{G}\cap U(n)$. That is
$\widetilde{K}=\{zu|u\in K,\left\vert z\right\vert =1\}$.

Given a subgroup $H \subset GL(\mH_{n})$,  the \emph{stabilizer} subgroup of a state $|\psi\ra\in\mH_n$ with respect to this group is defined as
\begin{equation*}
H_\psi\eqdef \left\{h\in H\;\Big|\;h|\psi\ra=|\psi\ra\right\}\subset H.
\end{equation*}
If we refer to the stabilizer of a state $\ket{\psi}$ without explicitly mentioning the corresponding group, we mean $\widetilde{G}_\psi$.
Moreover, the orbit of a state $\ket{\psi}$ under the action of $H$ is defined as
\begin{equation*}
H|\psi\rangle\eqdef\left\{h|\psi\rangle\;\Big|\;h\in H\right\}.
\end{equation*}
Note that the orbit contains states that are not necessarily normalized,
and any orbit $H|\psi\ra$ is an embedded submanifold of $\mH_n$. Hence, any orbit $H\ket{\psi}$ has a dimension, which we denote by $\dim (H\ket{\psi})$.

In Sec. \ref{sec:main} we briefly mentioned the set of critical states, $Crit(\mathcal{H}_{n})$, in $\mathcal{H}_n$ which contains all states whose single-subsystem reduced states are proportional to the completely mixed state. Denoting by $Lie(G)$ the Lie algebra of $G$, this set can also be expressed as
\bea
\label{eq:critstates}
Crit(\mathcal{H}_{n})\equiv\{\ket{\phi}
\in\mathcal{H}_{n}|\left\langle \phi|X|\phi\right\rangle =0, \forall X\in
Lie(G)\}.
\eea
Criteria for when a system with Hilbert space $\mH_n$ contains critical states were found in \cite{bryan1, bryan2}. If $Crit(\mathcal{H}_{n})$ is not empty the union of all orbits (in $G$) containing a critical state, i.e. $G\cdot Crit(\mH_n)$, is open, dense, and of full measure in $\mH_n$ ~\cite{GoWa11, Wa16}. Moreover, the stabilizer of any critical state is a symmetric subgroup of $GL(\mH_n)$, i.e. it is Zariski-closed (Z-closed) (see e.g. \cite{Wa16} for the definition of the Zariski topology) and invariant under the adjoint \cite{GoKr16}. The latter means that, if $g \in \widetilde{G}_\psi$, for $\ket{\psi} \in Crit(\mH_n)$, then $g^\dagger \in \widetilde{G}_\psi$.

Let us now briefly recall how some of us proved in \cite{GoKr16} that there exists an open and full measure set of states in the Hilbert space corresponding to $n$-qubit states with $n\geq 5$, which contains only states with trivial stabilizer in $\widetilde{G}$. In order to do so, we define the following subset of critical states,
\be\label{C0}
\mC\eqdef\left\{\ket{\psi} \in Crit(\mH_n)\;\Big|\;\dim (G|\psi\ra)=\dim(G)\right\}.
\ee
That is, $\mC$ consists of all critical states whose orbits (under $G$) have maximal dimension (i.e. the dimension of $G$). Due to the identity $G|\psi\ra \cong G/G_\psi$ it follows that $|\psi\ra\in\mC$ iff $|\psi\ra$ is critical and $G_\psi$ is a finite group (or equivalently $\dim(G_{\psi})=0$).
Using algebraic geometry and the theory of Lie groups some of us showed in \cite{GoKr16} the following important properties of this subset.

\begin{lemma}{\rm \cite{GoKr16}}\label{properties}
The set $\mC$ defined in Eq.(\ref{C0}) has the following properties:
\begin{enumerate}
\item[(i)] $G_\psi=K_\psi$ for all $|\psi\ra\in\mC$.
\item[(ii)] The set $G\mC \eqdef \left\{g|\psi\ra\;|\;g\in G\;;\;|\psi\ra\in\mC\right\}$ is open with complement of lower dimension in $\mH_n$.
\item[(iii)] $\mC$ is a connected smooth submanifold of $\mH_n$, and $K$ acts differentiably on $\mC$.
\end{enumerate}
\end{lemma}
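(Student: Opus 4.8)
The plan is to handle the three items in order, reducing (i) to the Cartan (polar) decomposition, (ii) to lower semicontinuity of orbit dimension, and the smoothness half of (iii) to moment-map regularity, while isolating the connectedness claim in (iii) as the genuine difficulty. For (i), the inclusion $K_\psi\subseteq G_\psi$ is immediate, so I would only prove the reverse. Using the fact recalled above that the stabilizer of a critical state is invariant under the adjoint, for $\ket{\psi}\in\mC$ and $g\in G_\psi$ we also have $g^\dagger\in G_\psi$, hence the positive operator $g^\dagger g\in G_\psi$. Writing $g^\dagger g=\exp(X)$ with $X$ the tensor sum of the traceless Hermitian logarithms of the factors $g_i^\dagger g_i$, so that $X\in Lie(G)$, I would note that $\exp(X)$ and $X$ share eigenspaces, so $\exp(X)\ket{\psi}=\ket{\psi}$ forces $X\ket{\psi}=0$ and hence $\exp(tX)\ket{\psi}=\ket{\psi}$ for all $t$. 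Thus the one-parameter subgroup $\{\exp(tX)\}$ lies in $G_\psi$; but $\ket{\psi}\in\mC$ means $\dim(G\ket{\psi})=\dim G$, i.e. $G_\psi$ is finite, so $X=0$, $g^\dagger g=\one$, $g$ is unitary, and $g\in K_\psi$.

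For (ii), I would set $\mH_n^{\max}\eqdef\{\ket{\psi}\;|\;\dim(G\ket{\psi})=\dim G\}$. Since the orbit-dimension function is lower semicontinuous for algebraic group actions and $\dim(G\ket{\psi})\le\dim G$ always, $\mH_n^{\max}$ is Zariski-open, and it is nonempty precisely because $\mC\neq\emptyset$. As orbit dimension is constant along $G$-orbits, a critical state lies in $\mC$ iff its orbit has dimension $\dim G$, which gives the clean identity $G\mC=(G\cdot Crit(\mH_n))\cap\mH_n^{\max}$. Now $G\cdot Crit(\mH_n)$ is open with lower-dimensional complement by the cited result \cite{GoWa11,Wa16}, while $\mH_n^{\max}$ is a nonempty Zariski-open subset of the irreducible affine space $\mH_n$, so its complement is a proper subvariety of strictly smaller dimension (of real codimension $\ge 2$, so $\mH_n^{\max}$ is connected in the analytic topology). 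Intersecting, $G\mC$ is open and its complement, contained in the union of two lower-dimensional sets, is again of lower dimension.

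For (iii), smoothness follows from moment-map regularity: let $\mu:\mH_n\to Lie(K)^\ast$ be the moment map of the $K$-action, whose components are the traceless parts of the single-party reduced states, so that $Crit(\mH_n)=\mu^{-1}(0)$ by Eq.~(\ref{eq:critstates}). The image of $d\mu_\psi$ is the annihilator of $Lie(K_\psi)$, so $d\mu_\psi$ is surjective iff $Lie(K_\psi)=0$; on $\mC$ we have $K_\psi=G_\psi$ by (i) and $G_\psi$ finite, hence $0$ is a regular value of $\mu$ along $\mC$, and $\mC=\mu^{-1}(0)\cap\mH_n^{\max}$ is a smooth submanifold. That $K$ acts on $\mC$ is clear: $K\subseteq G$ preserves $Crit(\mH_n)$ (each local unitary leaves a maximally mixed reduced state maximally mixed) and preserves orbit dimension (as $k\ket{\psi}$ lies in the same $G$-orbit), so it preserves $\mC$, and the restriction of the linear $K$-action is differentiable.

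The hard part is connectedness, and here the plan is to descend to the quotient. By the Kempf--Ness theorem \cite{KN} the inclusion $Crit(\mH_n)\hookrightarrow\mH_n$ induces a homeomorphism $Crit(\mH_n)/K\cong\mH_n/\!/G$ onto the irreducible, hence connected, GIT quotient, under which $\mC/K$ corresponds to the locus of closed orbits of maximal dimension — a nonempty Zariski-open, therefore dense and connected, subset. It then remains to lift connectedness from $\mC/K$ to $\mC$: the orbit map $q:\mC\to\mC/K$ is open (the $K$-saturation of an open set is open), its fibers are the $K$-orbits and are connected because $K$ is connected, and a continuous open surjection with connected base and connected fibers has connected total space (any separation of $\mC$ would descend to a separation of the base, since each connected fiber lies entirely in one piece). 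I expect this last step to be the main obstacle, as it hinges on the Kempf--Ness identification of $\mC/K$ with an honest open subvariety of the quotient and on arguing that this locus is nonempty and irreducible rather than merely constructible; by contrast, (i), (ii) and the smoothness half of (iii) reduce to the Cartan decomposition, semicontinuity of orbit dimension, and moment-map regularity, respectively.
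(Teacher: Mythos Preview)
The paper does not supply its own proof of this lemma: it is quoted from \cite{GoKr16} with no argument reproduced here, so there is no in-paper proof to compare against. What follows is an assessment of your proposal on its own merits.

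Your treatment of (i), (ii), and the smoothness and $K$-invariance parts of (iii) is correct and is essentially the standard route. For (i), the polar-decomposition reduction is exactly right; one could shorten it by invoking part~2 of the Kempf--Ness theorem (Theorem~\ref{KN}) applied to the positive operator $\sqrt{g^\dagger g}$, but your version via the one-parameter subgroup $\{\exp(tX)\}\subset G_\psi$ is equally valid and more self-contained. For (ii), the identity $G\mC=(G\cdot Crit(\mH_n))\cap\mH_n^{\max}$ is correct and the rest follows from lower semicontinuity of orbit dimension together with the cited full-measure property of $G\cdot Crit(\mH_n)$. For smoothness in (iii), the moment-map regularity argument is the natural one, and your observation that $\mC=\mu^{-1}(0)\cap\mH_n^{\max}$ with $\mH_n^{\max}$ open makes the local application of the regular-value theorem clean.

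Your connectedness argument is also correct, and your instinct that this is where the substance lies is sound. The one point you flag as uncertain---whether $\mC/K$ is genuinely Zariski-open in $\mH_n/\!/G$ rather than merely constructible---can be settled by noting that $G\mC$ is exactly the GIT-stable locus of $\mH_n$ (closed orbit and finite stabilizer), and it is a standard fact in geometric invariant theory that the stable locus is Zariski-open and that its image in the affine quotient is an open subscheme on which the quotient map is a geometric quotient. Granting this, $\mC/K\cong G\mC/G$ is a nonempty open subset of the irreducible variety $\mH_n/\!/G$, hence connected, and your lifting via the open quotient map $q:\mC\to\mC/K$ with connected $K$-fibers is valid. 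So the proposal is complete modulo citing that one GIT fact about openness of the stable locus.
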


The principal orbit type theorem (\cite{Bredon}, see also Appendix \ref{app:theorems}) was then central to the proof that the set of states whose stabilizer in $G$ is trivial is open and of full measure. Defining the set
\be
\label{mC0} \mC^0= \left\{|\psi\ra\in\mC\;\big|\;G_\psi=\{\one\}\right\}
\ee
we proved that if $\mC^0$ is not empty, then $\mC^0$ is open, dense, and of full measure in $\mC$ \cite{GoKr16}.
Moreover, in this case, the set
$G\mC^0=\left\{g|\psi\ra\;\big|\;|\psi\ra\in\mC^0, \ g\in G\right\}$ is open, dense, and of full measure in $G\mC$. Clearly, any state $\ket{\phi}$ in $G\mC^0$ has a trivial stabilizer in $G$. Using now that $G\mC$ is open and of full measure in $\mH_n$ (see property~2 in Lemma~\ref{properties}), we also have that $G\mC^0$, which contains only states with $G_\psi=\{\one\}$, is open and of full measure in $\mH_n$. As can be seen from the proofs in \cite{GoKr16} this result holds for arbitrary multipartite quantum systems (as long as it can be shown that $\mC^0$ is not empty). In particular, we have \footnote{This result for the stabilizer in $G$ follows also from Luna's Etal slice theorem (see \cite{slice}, Proposition 5.7). However, since we are interested in the stabilizer in $\widetilde{G}$ the slice theorem does not apply to the case at hand without arguments as given in Sec. \ref{sec:gentriv}.}
\begin{lemma} \label{Lemma2}
If there exists a state $\ket{\psi} \in \mC^0$, then the set \bea \left\{\ket{\phi}\in \mH_n \;\big|\; G_\phi\neq\{\one\}\right\}\eea is of measure zero in $\mH_n$.
\end{lemma}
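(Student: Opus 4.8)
The plan is to exhibit an open, full-measure subset of $\mH_n$ every point of which has trivial stabilizer in $G$; the set $G\mC^0$ is the natural candidate, and the hypothesis $\mC^0\neq\emptyset$ is precisely what is needed to make it nonempty. First I would record the elementary fact that membership in $G\mC^0$ forces the stabilizer to be trivial. Indeed, if $\ket{\phi}=g\ket{\psi}$ with $g\in G$ and $\ket{\psi}\in\mC^0$, then any $h\in G_\phi$ satisfies $hg\ket{\psi}=g\ket{\psi}$, hence $g^{-1}hg\in G_\psi=\{\one\}$, so $h=\one$. Thus $G\mC^0\subseteq\{\ket{\phi}\in\mH_n \mid G_\phi=\{\one\}\}$, and it suffices to show that $G\mC^0$ is of full measure in $\mH_n$.

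For this I would assemble the three full-measure statements already established above. By the result recalled just before the statement (which follows from the principal orbit type theorem, cf. \cite{GoKr16}), nonemptiness of $\mC^0$ implies that $\mC^0$ is open, dense and of full measure in $\mC$, and consequently that $G\mC^0$ is open, dense and of full measure in $G\mC$. By property (ii) of Lemma~\ref{properties}, $G\mC$ is in turn open and of full measure in $\mH_n$. Transitivity of the relation ``open and of full measure'' then yields that $G\mC^0$ is open and of full measure in $\mH_n$: writing $\mH_n\setminus G\mC^0=(\mH_n\setminus G\mC)\cup(G\mC\setminus G\mC^0)$ exhibits the complement as a union of two sets of dimension strictly below $\dim\mH_n$. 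Since the complement of a full-measure set is of measure zero, the set $\{\ket{\phi}\in\mH_n \mid G_\phi\neq\{\one\}\}$, being contained in $\mH_n\setminus G\mC^0$, is of measure zero, which is the claim.

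The argument is essentially an assembly of facts carried over from the qubit analysis, so the one point that genuinely requires care — and the step I expect to be the real obstacle were one to reprove it from scratch — is the passage through the orbit map $G\times\mC^0\to G\mC^0$. Showing that a full-measure subset of the base $\mC$ lifts to a full-measure subset $G\mC^0$ of the total space $G\mC$ (and that $G\mC$ itself is full measure in $\mH_n$) relies on the submanifold structure of $\mC$ from Lemma~\ref{properties}(iii) together with the dimension count built into the definition of $\mC$, and this is exactly where the geometric input of \cite{GoKr16} enters. I would finally stress that the whole argument is specific to the stabilizer in $G$: the conjugation step goes through verbatim for the larger group $\widetilde{G}$, but the corresponding full-measure statement requires the additional arguments of Sec.~\ref{sec:gentriv}, precisely because $\widetilde{G}=\mathbb{C}^{\times}G$ contributes the extra scalar directions addressed by the footnote and the slice-theorem remark.
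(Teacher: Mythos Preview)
Your proposal is correct and follows essentially the same route as the paper: the lemma is not given a separate formal proof there but is stated as the summary of the preceding paragraph, which argues exactly as you do that $G\mC^0$ consists of states with trivial $G$-stabilizer (via conjugation), that $\mC^0$ is open and of full measure in $\mC$ by the principal orbit type theorem, that $G\mC^0$ is then open and of full measure in $G\mC$, and that $G\mC$ is open and of full measure in $\mH_n$ by Lemma~\ref{properties}(ii). Your explicit conjugation computation and the decomposition $\mH_n\setminus G\mC^0=(\mH_n\setminus G\mC)\cup(G\mC\setminus G\mC^0)$ simply spell out steps the paper leaves implicit, and your closing remarks about the orbit map and the distinction between $G$ and $\widetilde{G}$ accurately identify where the real work lies (and match the paper's own footnote about Luna's slice theorem).
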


For $n\geq 5$ we presented in \cite{GoKr16} a $n$--qubit state $\ket{\psi}$, which is contained in $\mC^0$. Hence, for $n\geq 5$ a generic $n$--qubit state has only a trivial stabilizer (in $G$).
In order to define the set $\mA$ containing states with trivial stabilizer in $\tilde{G}$ (not only $G$) which is also open and with complement of lower dimension in $\mH_n$, we used for the qubit case homogeneous SL--invariant polynomials (SLIPs) \cite{GoWa13}. With these SLIPs we were able to identify a full measure subset $\mA\subset G\mC^0$ with the desired property that for any state $\ket{\phi}\in \mA$, $\tilde{G}_\phi =\{\one\}$.

As mentioned before, Lemma \ref{Lemma2} holds for arbitrary qudit-systems. Note further that criteria for when the stabilizer in $G$ of a generic state is trivial were found in \cite{bryan2}. However, in order to obtain the strong implications in entanglement theory (see Sec. \ref{sec:main} and \ref{sec:picture}) it is required to prove that the stabilizer in $\tilde{G}$ (and not only in $G$ ) is trivial. Hence, the last step of the proof, as outlined above, is essential here. However, it is precisely this step, which cannot be easily generalized to arbitrary local dimensions. Hence, we employ new proof methods in the subsequent section to prove directly the existence of a set ${\cal A}$, which contains only states whose stabilizer in $\widetilde{G}$ is trivial, and which is open and of full measure in $\mH_n$.

\subsection{Genericity of states with trivial stabilizer}
\label{sec:gentriv}

Using Lemma \ref{properties}, we prove now one of the main results of this paper. We have already presented Theorem \ref{IntroMain} and its profound implications in entanglement theory in Sec. \ref{sec:main}.  We show in the following that, in order to prove Theorem \ref{IntroMain} for given values of $n$ and $d$, it will eventually be enough to find only \emph{one} critical state $\ket{\psi} \in \mH_{n,d}$ with trivial \emph{unitary} stabilizer, i.e. with $\widetilde{K}_\psi = \{\one\}$.\\ 

Let us first give an outline of the proof of Theorem \ref{IntroMain}.\\
First, we consider the set of critical states, $Crit(\mathcal{H}_{n})$ (see also Eq. (\ref{eq:critstates})). We show that if a critical state has only finitely many local unitary symmetries then there exists no further local (non-unitary) symmetry of this state (see Lemma \ref{LemmaUnitStab}). We then use this together with the results from \cite{GoKr16} and tools from geometric invariant theory to prove the following statement (see Theorem \ref{thmain}). If there exists {\it one} critical state $\ket{\psi} \in \mH_n$ with trivial \emph{unitary} stabilizer, then there exists a set $\mA \subset \mH_n$ of states with trivial stabilizer in $\tilde{G}$ that is open and of full measure in $\mH_n$. Due to this theorem it is sufficient to find one criticial state with trivial stabilizer in $\mathcal{H}_{n,d}$ to proof Theorem \ref{IntroMain} for these values of $n$ and $d$. Finally, we explicitly construct such states and therefore complete the proof of Theorem \ref{IntroMain} (see Secs. \ref{sec:states}, App. \ref{app:trivstab}).\\

Let us now present the details of the proof of Theorem \ref{IntroMain}. 
We first show that the set of critical states with finite stabilizer in $\widetilde{K}$ coincides with the set of critical states with finite stabilizer in $\widetilde{G}$, as stated in the following lemma.

\begin{lemma} \label{LemmaCritFin}
The following subset of critical states, \bea \label{Eq_tildemC} \tilde{\mC}= \left\{\ket{\psi} \in Crit(\mH_n)\;\Big|\;\dim (\widetilde{K}\ket{\psi})=\dim(\widetilde{K})\right\}\eea coincides with the set \bea \left\{\ket{\psi} \in Crit(\mH_n)\;\Big|\;\dim (\widetilde{G}\ket{\psi})=\dim(\widetilde{G})\right\}.\eea
\end{lemma}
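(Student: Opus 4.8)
The plan is to translate the statement about orbit dimensions into one about stabilizer dimensions, and then to exploit the special structure that stabilizers of \emph{critical} states enjoy. By the orbit--stabilizer relation $\widetilde{K}\ket{\psi}\cong \widetilde{K}/\widetilde{K}_\psi$ and $\widetilde{G}\ket{\psi}\cong \widetilde{G}/\widetilde{G}_\psi$, one has $\dim(\widetilde{K}\ket{\psi})=\dim(\widetilde{K})-\dim(\widetilde{K}_\psi)$, and likewise for $\widetilde{G}$. Hence $\ket{\psi}\in\tilde{\mC}$ iff $\widetilde{K}_\psi$ is finite ($\dim\widetilde{K}_\psi=0$), while the second set consists of exactly the critical states with $\dim\widetilde{G}_\psi=0$. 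Since $\widetilde{K}_\psi=\widetilde{G}_\psi\cap\widetilde{K}\subseteq\widetilde{G}_\psi$, the implication ``$\dim\widetilde{G}_\psi=0\Rightarrow\dim\widetilde{K}_\psi=0$'' is immediate, so it remains only to prove the converse for critical $\ket{\psi}$.

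To this end I would pass to Lie algebras and set $\mathfrak{l}\eqdef Lie(\widetilde{G}_\psi)=\{X\in Lie(\widetilde{G})\mid X\ket{\psi}=0\}$. Two properties of $\mathfrak{l}$ are crucial. First, the defining equation $X\ket{\psi}=0$ is complex linear in $X$ and $Lie(\widetilde{G})$ is a complex subspace of $\mathrm{End}(\mH_n)$; therefore $\mathfrak{l}$ is a complex subspace, i.e. $i\,\mathfrak{l}=\mathfrak{l}$. Second, because $\ket{\psi}$ is critical, its stabilizer $\widetilde{G}_\psi$ is self-adjoint (invariant under $g\mapsto g^\dagger$, as recalled in Sec.~\ref{SecPre}), so $\mathfrak{l}$ is invariant under $X\mapsto X^\dagger$. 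Consequently $\mathfrak{l}$ splits into its Hermitian and anti-Hermitian parts, $\mathfrak{l}=\mathfrak{l}_H\oplus\mathfrak{l}_{AH}$ with $\mathfrak{l}_H=\{X\in\mathfrak{l}\mid X^\dagger=X\}$ and $\mathfrak{l}_{AH}=\{X\in\mathfrak{l}\mid X^\dagger=-X\}$, and moreover $\mathfrak{l}_{AH}=\mathfrak{l}\cap Lie(\widetilde{K})=Lie(\widetilde{K}_\psi)$.

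The key step is then the observation that multiplication by $i$ sends Hermitian operators to anti-Hermitian ones and vice versa, and maps $\mathfrak{l}$ into itself because $\mathfrak{l}$ is a complex subspace. Hence $X\mapsto iX$ restricts to a real-linear isomorphism $\mathfrak{l}_H\to\mathfrak{l}_{AH}$, so that $\dim_{\mathbb{R}}\mathfrak{l}_H=\dim_{\mathbb{R}}\mathfrak{l}_{AH}$ and in particular $\mathfrak{l}=0$ iff $\mathfrak{l}_{AH}=0$. Translating back, $\dim\widetilde{G}_\psi=0$ iff $\dim\widetilde{K}_\psi=0$, which shows that the two sets coincide.

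I expect the main subtlety to be the self-adjointness of $\widetilde{G}_\psi$, which is exactly where criticality of $\ket{\psi}$ enters and without which the statement fails: the complex-subspace property alone does not force $\mathfrak{l}\cap Lie(\widetilde{K})$ to be nontrivial when $\mathfrak{l}\neq 0$ (for instance a complex line spanned by a strictly upper-triangular nilpotent meets the anti-Hermitian operators only in $0$). The remaining points---that $Lie(\widetilde{G}_\psi)$ is cut out by the linear equation $X\ket{\psi}=0$, that $Lie(\widetilde{K}_\psi)=Lie(\widetilde{G}_\psi)\cap Lie(\widetilde{K})$, and the orbit--stabilizer dimension count---are standard facts about closed (algebraic) subgroups and should require only brief justification.
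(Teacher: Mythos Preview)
Your proof is correct and is essentially the same argument the paper gives: both reduce to showing that $Lie(\widetilde{G}_\psi)=Lie(\widetilde{K}_\psi)+i\,Lie(\widetilde{K}_\psi)$ for critical $\ket{\psi}$, from which finiteness of $\widetilde{K}_\psi$ forces finiteness of $\widetilde{G}_\psi$. The only difference is packaging: the paper invokes a general structure theorem for symmetric subgroups (Theorem~2.12 of \cite{Wa16}), stating that for any symmetric $H\subset GL(\mH_n)$ with maximal compact $K'=H\cap U(\mH_n)$ one has $Lie(H)=Lie(K')+i\,Lie(K')$, and then applies it to $H=\widetilde{G}_\psi$; you instead prove this decomposition directly from the two ingredients that matter here---that $\mathfrak{l}=\{X:X\ket{\psi}=0\}$ is a complex subspace and, by criticality, closed under $X\mapsto X^\dagger$. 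Your route is more self-contained, the paper's places the fact in its natural Lie-theoretic context; mathematically they are the same step.
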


\begin{proof}
This lemma is a direct consequence of a much stronger theorem (Theorem 2.12) proven in \cite{Wa16}. This theorem states that if $H$ is a symmetric subgroup of $GL(\mH_n)$ and the so--called maximal compact subgroup of $H$ is $K' = H \cap U(\mH_n)$, then $Lie(H)=Lie(K')+iLie(K')$. That is, $H=K_1 e^{k_2}$, where $K_1\in K'$ and $k_2\in Lie(K')$. In \cite{GoKr16} it was shown that $\widetilde{G}_{\psi}$ is a symmetric subgroup of $GL(\mH_n)$. As shown in \cite{Wa16}, $\widetilde{K}$ is a maximal compact subgroup of $\widetilde{G}$ and so is $\widetilde{K}_\psi$ of $\widetilde{G}_\psi$ \footnote{This follows from Corollary 2.13 in \cite{Wa16}, which states that if $H$ is a symmetric subgroup of $GL(n,\C)$, then $H$ is the Z-closure
of $K' = H\bigcap U(\mH_n)$. In particular, $K'$ is a maximal compact subgroup of $H$.}. Thus, we have that $Lie(\widetilde{G}_\psi)=Lie(\widetilde{K}_\psi)+iLie(\widetilde{K}_\psi)$. Hence, if $\widetilde{K}_\psi$ is finite, then also $\widetilde{G}_\psi$ is finite. Using now that $H\ket{\psi} \cong H/H_\psi$, for $H=\widetilde{G}, \widetilde{K}$, we obtain that $\tilde{C}$ coincides with the set of critical states whose stabilizer is finite in $\widetilde{G}$, which proves the assertion.
\end{proof}

Using the lemma above we are now in the position to prove that if a critical state has a finite stabilizer in $\widetilde{K}$ (or equivalently in $\widetilde{G}$), then all symmetries in $\widetilde{G}$ are unitary. That is, we prove now the following lemma.
\begin{lemma} \label{LemmaUnitStab}
For any state $\ket{\psi}\in \tilde{\mC}$, with $\tilde{\mC}$ given in Eq. (\ref{Eq_tildemC}), it holds that 
\begin{equation*} 
\widetilde{K}_{\psi}=\widetilde{G}_{\psi}.
\end{equation*}
\end{lemma}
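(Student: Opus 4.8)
The plan is to prove the nontrivial inclusion $\widetilde{G}_{\psi}\subseteq\widetilde{K}_{\psi}$; the reverse inclusion $\widetilde{K}_{\psi}\subseteq\widetilde{G}_{\psi}$ is immediate from $\widetilde{K}\subseteq\widetilde{G}$. The two facts I would build on are exactly those assembled just above: for a critical state the stabilizer $\widetilde{G}_{\psi}$ is a symmetric (self-adjoint, Z-closed) subgroup of $GL(\mH_n)$, so that $g\in\widetilde{G}_{\psi}$ forces $g^{\dagger}\in\widetilde{G}_{\psi}$; and, by Lemma \ref{LemmaCritFin} together with the definition of $\tilde{\mC}$ in Eq. (\ref{Eq_tildemC}), the stabilizer $\widetilde{G}_{\psi}$ of any $\ket{\psi}\in\tilde{\mC}$ is a \emph{finite} group.

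First I would fix an arbitrary $g\in\widetilde{G}_{\psi}$ and form $P\eqdef g^{\dagger}g$. Since both $g$ and $g^{\dagger}$ lie in $\widetilde{G}_{\psi}$, so does $P$, and we have $P\ket{\psi}=g^{\dagger}(g\ket{\psi})=g^{\dagger}\ket{\psi}=\ket{\psi}$. Moreover, being of the form $\bigotimes_{i}g_{i}^{\dagger}g_{i}$, the operator $P$ is \emph{local} and positive-definite, so $P\in\widetilde{G}_{\psi}$ with $P>0$ and $P\ket{\psi}=\ket{\psi}$.

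Next I would pass to the Lie-algebra level. Setting $A\eqdef\log P$, the tensor-product structure $P=\bigotimes_{i}P_{i}$ with each $P_{i}>0$ gives $A=\sum_{i}\one\otimes\cdots\otimes\log P_{i}\otimes\cdots\otimes\one\in Lie(\widetilde{G})$, and $A$ is Hermitian. Because $P$ is positive-definite and fixes $\ket{\psi}$, the state $\ket{\psi}$ lies in the eigenvalue-one eigenspace of $P$, whence $A\ket{\psi}=0$ and the entire real one-parameter subgroup $\{\exp(tA)\}_{t\in\mathbb{R}}\subset\widetilde{G}$ fixes $\ket{\psi}$; that is, it sits inside $\widetilde{G}_{\psi}$.

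The decisive step is then the finiteness of $\widetilde{G}_{\psi}$ for $\ket{\psi}\in\tilde{\mC}$: a finite group contains no nontrivial connected one-parameter subgroup, so $\exp(tA)=\one$ for all $t$, forcing $A=0$ and hence $P=g^{\dagger}g=\one$. Thus $g$ is unitary, i.e. $g\in\widetilde{K}_{\psi}$, which yields $\widetilde{G}_{\psi}\subseteq\widetilde{K}_{\psi}$ and completes the argument. I expect the only points needing care to be the verification that $A=\log P$ genuinely lies in $Lie(\widetilde{G})$ (locality of the logarithm) and that $P\ket{\psi}=\ket{\psi}$ propagates to $\exp(tA)\ket{\psi}=\ket{\psi}$; both reduce to the spectral decomposition of the positive operator $P$. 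An equivalent route would invoke directly the Cartan decomposition $Lie(\widetilde{G}_{\psi})=Lie(\widetilde{K}_{\psi})+iLie(\widetilde{K}_{\psi})$ established in Lemma \ref{LemmaCritFin} via \cite{Wa16}: finiteness makes $Lie(\widetilde{K}_{\psi})=\{\one\}$'s Lie algebra trivial, so the polar factor $\exp(iLie(\widetilde{K}_{\psi}))$ degenerates and $\widetilde{G}_{\psi}=\widetilde{K}_{\psi}$.
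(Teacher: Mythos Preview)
Your proof is correct and takes a genuinely different, more economical route than the paper. The paper first writes $g=zg'$ with $g'\in G$ and invokes the Hilbert--Mumford theorem (existence of a homogeneous SLIP nonvanishing on the critical state) to force $|z|=1$; it then applies the Kempf--Ness theorem to the polar part $\sqrt{g^{\dagger}g}$ to place it in $G_{\psi}$, and finally calls on Lemma~\ref{properties}(i) ($G_{\psi}=K_{\psi}$ on $\mC$) to conclude that this positive operator is unitary, hence the identity. Your argument bypasses all three of these ingredients: using only the self-adjointness of $\widetilde{G}_{\psi}$ for critical states (which the paper records in Sec.~\ref{SecPre}, citing \cite{GoKr16}) together with the finiteness of $\widetilde{G}_{\psi}$ on $\tilde{\mC}$ (Lemma~\ref{LemmaCritFin}), you show directly that $P=g^{\dagger}g$ generates a one-parameter subgroup $\{P^{t}\}_{t\in\mathbb{R}}$ of the finite group $\widetilde{G}_{\psi}$, forcing $P=\one$. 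This avoids both the SLIP machinery and the separate reduction to $G_{\psi}=K_{\psi}$, though of course the self-adjointness you rely on is itself a consequence of Kempf--Ness. The alternative you sketch at the end---using the global polar decomposition $\widetilde{G}_{\psi}=\widetilde{K}_{\psi}\exp\bigl(i\,Lie(\widetilde{K}_{\psi})\bigr)$ quoted from \cite{Wa16} in the proof of Lemma~\ref{LemmaCritFin}---is also valid and in fact the shortest path: once $\widetilde{K}_{\psi}$ is finite, $Lie(\widetilde{K}_{\psi})=0$ and the equality is immediate.
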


\begin{proof}
Due to Lemma \ref{LemmaCritFin} we have that $\tilde{\mC}$ is a subset of $\mC$. Hence, Lemma \ref{properties} (i) implies that for any state $\ket{\psi}\in \tilde{\mC}$, $G_\psi=K_\psi$. To prove now that this equivalence also holds for  $\widetilde{K}_{\psi}$ and $\widetilde{G}_{\psi}$ we consider $\ket{\psi}\in \tilde{\mC}$ and $g \in \widetilde{G}_\psi$. We show that $g$ must be unitary. The Hilbert-Mumford theorem (see e.g. \cite{mumford,Wa16}) implies that for any critical state, $\ket{\psi}$, there exists a homogeneous SLIP $f$ of some degree $m$ such that $f(\ket{\psi})\neq 0$. Now, if $g\in\tilde{G}_{\psi}$ we can write it as $g=zg'$, where $0\neq z\in\mathbb{C}$ and $g'\in G$. Hence, $f(\ket{\psi})=f(g\ket{\psi})=z^mf(g'\ket{\psi})=z^mf(\ket{\psi})$. As $f(|\psi\ra)\neq 0$ this implies that $z^m=1$. Using now the polar decomposition of $g$, i.e. $g=u \sqrt{g^\dagger g}$, with $u\in \widetilde{K}$, the Kempf--Ness theorem (see Appendix \ref{app:theorems}) implies, as $g^\dagger g$ is positive, that $\sqrt{g^\dagger g}\ket{\psi}=\ket{\psi}$. Hence, also $u$ has to be a stabilizer of $\ket{\psi}$. In particular, $u\in \widetilde{K}_\psi$. Moreover, as $z$ is only a phase, we have that $\sqrt{g^\dagger g}=\sqrt{(g')^\dagger (g')}\in G_\psi$. Using now that $G_\psi=K_\psi$ we have that $g=u \sqrt{g^\dagger g}\in \widetilde{K}_\psi$, which proves the statement.
\end{proof}

With Lemma \ref{LemmaUnitStab} it is easy to see that the following sets all coincide:
\bi \item[(i)] \bea \label{mCprime} \mC^\prime=\left\{\ket{\psi} \in \mC\;\Big|\; \widetilde{K}_\psi=\{\one\}\right\}.\eea
\item[(ii)] \bea \left\{\ket{\psi} \in Crit(\mH_n)\;\Big|\; \widetilde{K}_\psi=\{\one\}\right\}.\eea
\item[(iii)] \bea \left\{\ket{\psi} \in Crit(\mH_n)\;\Big|\; \widetilde{G}_\psi=\{\one\}\right\}.\eea
\ei

We now use these results to prove the following theorem, which states that, if $\mC^\prime$ is non-empty then our main theorem, Theorem \ref{IntroMain}, is implied.

\begin{theorem}\label{thmain}
If there exists a state $\ket{\psi} \in \mC^\prime$, i.e. if there exists a critical state $\ket{\psi}$ such that $\widetilde{K}_\psi=\{\one\}$, then there exists an open and full--measure (in ${\cal H}_n$) set of states whose stabilizer in $\widetilde{G}$ is trivial. More precisely, if $\mC^\prime \neq \emptyset$, then the set of states
\bea \label{EqA}
{\cal A}=G \mC^\prime =\left\{g\ket{\psi} \ | \ \ket{\psi} \in \mC^\prime, g \in G \right\},
\eea
which contains only states with trivial stabilizer in $\widetilde{G}$, is open and of full measure in ${\cal H}_n$.\end{theorem}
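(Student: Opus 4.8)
The plan is to prove the two assertions of the theorem separately: first that every state in ${\cal A}=G\mC^\prime$ has trivial stabilizer in $\widetilde{G}$, and second that ${\cal A}$ is open and of full measure in ${\cal H}_n$. The first assertion I would obtain purely by a conjugation argument, feeding off the coincidence of the sets (i)--(iii) established just above. Take $\ket{\phi}=g\ket{\psi}$ with $g\in G$ and $\ket{\psi}\in\mC^\prime$. Since $\mC^\prime$ coincides with the set $\{\ket{\psi}\in Crit(\mH_n)\mid\widetilde{G}_\psi=\{\one\}\}$, the critical state $\ket{\psi}$ already has $\widetilde{G}_\psi=\{\one\}$. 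Because $g\in G\subset\widetilde{G}$, conjugation maps $\widetilde{G}$ to itself and the stabilizers transform as $\widetilde{G}_\phi=g\,\widetilde{G}_\psi\,g^{-1}$, so $\widetilde{G}_\phi=\{\one\}$ as well. Running the same computation in reverse shows moreover that $G\mC^\prime=\{\ket{\phi}\in G\mC\mid\widetilde{G}_\phi=\{\one\}\}$, a characterization I would use for the measure statement.

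For the second assertion the key point is that $\widetilde{K}$ is a \emph{compact} group acting differentiably on the connected manifold $\mC$. Indeed, a global phase and any $u\in\widetilde{K}$ send critical states to critical states (the reduced-state condition is phase- and unitarily-invariant), and since $\widetilde{K}$ normalizes $G$ we have $\dim(G\,u\ket{\psi})=\dim(G\ket{\psi})$, so $\widetilde{K}\mC\subset\mC$; connectedness and smoothness of $\mC$ are given by Lemma \ref{properties}(iii). I would then apply the principal orbit type theorem (\cite{Bredon}, see also Appendix \ref{app:theorems}) to this $\widetilde{K}$-action, in exact analogy to the use made of the $K$-action in \cite{GoKr16}. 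The hypothesis $\mC^\prime\neq\emptyset$ furnishes a point with trivial $\widetilde{K}$-isotropy; since the principal isotropy type is contained, up to conjugacy, in every isotropy group, it must itself be trivial. Hence $\mC^\prime$, the set of points of $\mC$ with trivial $\widetilde{K}$-stabilizer, is precisely the principal stratum and is therefore open, dense, and of full measure in $\mC$.

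It remains to transfer this genericity from $\mC$ to ${\cal H}_n$. Here I would first note that $\mC^\prime$ is $\widetilde{K}$-invariant, hence in particular $K$-invariant, because $\widetilde{K}_{u\psi}=u\,\widetilde{K}_\psi\,u^{-1}$. By the Kempf--Ness theorem each $G$-orbit meeting $\mC$ does so in a single $K$-orbit, so the $K$-invariant property $\widetilde{K}_\psi=\{\one\}$ descends to a well-defined property of whole $G$-orbits. Mirroring the argument of \cite{GoKr16} that deduces $G\mC^0$ open, dense, and of full measure in $G\mC$ from the corresponding statement for $\mC^0\subset\mC$, one obtains that $G\mC^\prime$ is open, dense, and of full measure in $G\mC$. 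Since $G\mC$ is itself open and of full measure in ${\cal H}_n$ by Lemma \ref{properties}(ii), this yields that ${\cal A}=G\mC^\prime$ is open and of full measure in ${\cal H}_n$, completing the proof.

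I expect the main obstacle to lie in this last transfer step. The principal orbit type theorem, applied to $\widetilde{K}\curvearrowright\mC$, only delivers genericity \emph{inside} $\mC$, and one must check that passing to the $G$-saturation preserves both openness and the measure-zero (indeed lower-dimensional) nature of the complement, rather than merely density; this is exactly where the fibered structure of $G\mC$ over $\mC/K$ provided by Kempf--Ness is essential, and where I would lean most heavily on \cite{GoKr16}. A secondary, more routine subtlety is that the bare principal orbit type theorem guarantees only an open dense principal stratum, so upgrading ``dense'' to ``full measure'' requires that the lower strata form a closed subset of strictly smaller dimension, which holds because $\mC$ is a real-analytic manifold on which $\widetilde{K}$ acts smoothly.
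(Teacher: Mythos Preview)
Your proposal is correct and follows essentially the same approach as the paper: apply the principal orbit type theorem to the compact $\widetilde{K}$-action on the connected manifold $\mC$ to get $\mC^\prime$ open and of full measure in $\mC$, use the equality $\mC^\prime=\{\ket{\psi}\in\mC\mid\widetilde{G}_\psi=\{\one\}\}$ from Lemma \ref{LemmaUnitStab}, then pass to the $G$-saturation via Lemma \ref{properties}(ii) and the conjugation identity $\widetilde{G}_{g\psi}=g\,\widetilde{G}_\psi\,g^{-1}$. Your additional care in justifying the transfer step (invoking Kempf--Ness and the $K$-invariance of $\mC^\prime$) only makes explicit what the paper absorbs into a single sentence citing \cite{GoKr16}.
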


\begin{proof} First of all, note that $\widetilde{K}$ is a compact Lie group, which acts differentiably on the connected smooth submanifold $\mC$ of ${\cal H}_n$ (see (iii) of Lemma \ref{properties}). Hence, the principal orbit type theorem (see Appendix \ref{app:theorems}) can be applied. This theorem implies that the set
\begin{align}
\mC^\prime=\left\{\ket{\psi} \in \mC\;\Big|\; \widetilde{K}_\psi=\{\one\}\right\}=\left\{\ket{\psi}\in \mC\;\Big|\; \widetilde{G}_\psi=\{\one\}\right\} \label{eq:Cprime}
\end{align}
is, if it is non-empty, open and of full measure in $\mC$. Note that, in the last equality in Eq. (\ref{eq:Cprime}) we used Lemma \ref{LemmaUnitStab}. According to (ii) of Lemma \ref{properties} we have that $G\mC$ is open and of full measure in $\mH_n$. Therefore, for any open and full measure set of $\mC$ the union of the orbits of all states in this set is also open and full measure in $\mH_n$. Using now that for any $\ket{\phi}\in {\cal A}$ there exist $g \in G$ and $\ket{\psi} \in \mC^\prime$ such that $\widetilde{G}_\phi= g \widetilde{G}_\psi g^{-1}$, we have that for any $\ket{\phi}\in {\cal A}$ it holds that $\widetilde{G}_\phi=\{\one\}$, which completes the proof. \end{proof}

In the subsequent section we explicitly present states in $\mC'$ for the Hilbert spaces specified in Theorem \ref{IntroMain}, which completes the proof of this theorem. In Sec. \ref{sec:main} the implications of this result in the context of entanglement theory are discussed. Let us stress here that our results also encompass the results of \cite{GoKr16}, where it was shown that almost all $(n>4)$-qubit states have trivial stabilizer. While no example of a state with trivial stabilizer was given in \cite{GoKr16}, our results allow us to construct states with this property, as we show in the following section.  Let us further remark here that Theorem \ref{thmain} holds for arbitrary multipartite quantum systems. However, in this work we only use it for homogeneous systems, i.e. systems composed of subsystems with equal dimension.

\subsection{Critical states with trivial stabilizer in $\widetilde{G}$}
\label{sec:states}

In this section we present critical states with trivial stabilizer.
First, we introduce a critical state which is defined for $n=5$, $n > 6$, and $d \geq 2$ and give an outline of the proof that its stabilizer is trivial. The proof itself is given in Appendix \ref{app:trivstab}.
It will become evident from the construction of this state that the cases $n=3,4,6$ have to be treated separately. However, also for these cases we construct states with trivial stabilizer in Appendix \ref{app:trivstab}. This completes the proof of Theorem \ref{IntroMain} as it shows that the set $\mC^\prime$ is non-empty for the systems mentioned in this theorem.\\

Let us introduce the following notation before we define the state with the desired properties for $n = 5, n > 6$ and $d\geq 2$. Let $S_n$ denote the symmetric group of $n$ elements. For a permutation $\sigma \in S_n$ we define the operator $P_{\sigma}$ via
$P_{\sigma}\ket{i_1}\otimes \ldots \otimes \ket{i_n} = \ket{i_{\sigma^{-1}(1)}}\otimes \ldots \otimes \ket{i_{\sigma^{-1}(n)}}$, for all $(i_1,\ldots,i_n)\in\{0,\ldots,d-1\}^n$.
We call a state $\ket{\psi}$ symmetric if $P_\sigma \ket{\psi}=\ket{\psi}$ for all $\sigma \in S_n$. Furthermore, we define for an arbitrary state $\ket{\phi} \in \mH_{n,d}$ the set of all distinct permutations of $\ket{\phi}$ as $\pi(\ket{\phi}) = \{P_{\sigma}\ket{\phi} \ | \ \sigma \in S_n\}$ and the symmetrization of $\ket{\phi}$ as
$\ket{\pi(\ket{\phi})} = \sum_{\ket{\chi} \in \pi(\ket{\psi})} \ket{\chi}$.
Using this notation, we define for $0\leq k \leq n$ and $j\in \{1,\ldots,d-1\}$ the (unnormalized) state
\begin{align*}
 \ket{D_{k,n}(j)} = \ket{\pi(\ket{j}^{\otimes k}\ket{j-1}^{\otimes n-k})}.
\end{align*}
We are now ready to introduce the critical $n$-qudit state ($n=5, n>6$) for which we show that it has trivial stabilizer, namely
\begin{align}
 \ket{\Psi_{n,d}} &= \sum_{j=0}^{d-1} c_j \ket{j}^{\otimes n} + \sum_{j=1}^{d-1}\ket{D_{k,n}(j)}, \label{eq:state}
\end{align}
where $c_0 = \sqrt{{{n-1}\choose{k-1}} + 1}$, $c_i = 1$ for $0<i<d-1$ and $c_{d-1} = \sqrt{{{n-1}\choose{k}} + 1}$, with $k$ the smallest natural number such that $3 \leq k \leq n-2$, $n \neq 2k$ and $\text{gcd}(n,k) = 1$. Here, $\text{gcd}(n,k)$ denotes the greatest common divisor of $n$ and $k$. The existence of
$k$ is obvious for $n=5$ and is proved for $n > 6$ in Appendix \ref{app:trivstab}. The condition $\text{gcd}(n,k) = 1$ is crucial to ensure that the state $\ket{\Psi_{n,d}}$ has only trivial symmetries as we shall see later. Recall that $\ket{\Psi_{n,d}}$ is critical if all of its single-subsystem reduced states are proportional to the identity. A straightforward calculation shows that $\ket{\Psi_{n,d}}$ indeed fulfills this property for $n=5$, $n > 6$. Note further that $\ket{\Psi_{n,d}}$ is not defined for $n=2,3,4,6$ since there is no $k$ with the properties described after Eq. (\ref{eq:state}).\\

Note that for $(n>4)$-qubits the existence of states with trivial stabilizer was shown in \cite{GoKr16}. However, no examples of states with this property were given. The mathematical methods developed in this article allow us to explicitly construct such states, namely the states $\{\ket{\Psi_{n,2}}\}_{n=5,n>6}$ \footnote{A 6-qubit state with trivial stabilizer is given in Appendix \ref{sec:n6}.}. This shows that our work also includes, and in fact extends, the results for qubits obtained in \cite{GoKr16}. To explicitly give an example of a qubit state with trivial stabilizer, consider the 5-qubit state,
\begin{align*}
 \ket{\Psi_{5,2}} =& \sqrt{7}\ket{00000} + \ket{00111} + \ket{01011} + \ket{01101}\\
 &+\ket{01110} + \ket{10011} + \ket{10101} + \ket{10110}\\
 &+ \ket{11001} + \ket{11010} + \ket{11100} + \sqrt{5} \ket{11111}.
\end{align*}

The following lemma shows that $\ket{\Psi_{n,d}}$ has trivial stabilizer for $n=5,n> 6$ and $d \geq 2$. It is a combination of Lemma \ref{lem:trivstab7} and Lemma \ref{lem:trivstabn5} in Appendix \ref{sec:proofsn}, where we prove this statement for $n>6$ and $n = 5$, respectively.
\begin{lemma}
\label{lem:trivstab}
 For $n = 5, n > 6$ and $d \geq 2$ the stabilizer of $\ket{\Psi_{n,d}}$ is trivial, i.e. $\widetilde{G}_{\Psi_{n,d}} = \{\one\}$.
\end{lemma}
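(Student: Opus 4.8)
The plan is to show that any $g = g_1 \otimes \cdots \otimes g_n \in \widetilde{G}_{\Psi_{n,d}}$ must equal $\one$. Since $\ket{\Psi_{n,d}}$ is critical (its reduced states are all proportional to the identity, as noted after Eq. (\ref{eq:state})), Lemma \ref{LemmaUnitStab} tells us that once we know $\widetilde{K}_{\Psi_{n,d}} = \{\one\}$ — i.e. that the unitary stabilizer is trivial — the full stabilizer $\widetilde{G}_{\Psi_{n,d}}$ is trivial as well, \emph{provided} we have also verified $\ket{\Psi_{n,d}} \in \tilde{\mC}$, i.e. that its stabilizer is at least finite. So strategically I would first establish finiteness of the stabilizer, and only then upgrade to triviality. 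However, it is cleaner to attack the group-element structure directly: I would take an arbitrary $g \in \widetilde{G}_{\Psi_{n,d}}$ and pin down each local factor $g_i$ from the equation $g\ket{\Psi_{n,d}} = \ket{\Psi_{n,d}}$.

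The key structural feature I would exploit is that $\ket{\Psi_{n,d}}$ is \emph{symmetric} (invariant under all $P_\sigma$) and is built from a ``GHZ-like'' diagonal part $\sum_j c_j \ket{j}^{\otimes n}$ together with the Dicke-type symmetrizations $\ket{D_{k,n}(j)}$ that couple only adjacent levels $j-1,j$. First I would examine the terms of extremal ``weight.'' The monomial $\ket{0}^{\otimes n}$ and $\ket{d-1}^{\otimes n}$ are the unique product terms of their kind, and matching coefficients after applying $g$ should force each $g_i$ to be (up to a common rescaling that the normalization absorbs) upper- or lower-triangular in a strong sense, and in fact to preserve the standard basis flag. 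The next step is to use the adjacent-level Dicke terms $\ket{D_{k,n}(j)}$ to propagate constraints between consecutive basis levels, forcing each $g_i$ to be diagonal in the computational basis with entries that are roots of unity. Here the precise choice $c_0 = \sqrt{\binom{n-1}{k-1}+1}$ and $c_{d-1} = \sqrt{\binom{n-1}{k}+1}$ is what breaks the residual scaling freedom and fixes the diagonal entries rather than merely constraining their products.

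The main obstacle — and the step where the arithmetic condition $\gcd(n,k)=1$ enters — is ruling out the nontrivial diagonal symmetries, i.e. phase assignments $g_i = \mathrm{diag}(\omega_i^{(0)},\ldots,\omega_i^{(d-1)})$ that leave every term invariant. Invariance of $\ket{j}^{\otimes n}$ forces $\prod_i \omega_i^{(j)} = 1$ for each $j$, while invariance of each distinct permutation appearing in $\ket{D_{k,n}(j)}$ forces, for every $k$-subset $A$ of parties, $\big(\prod_{i\in A}\omega_i^{(j)}\big)\big(\prod_{i\notin A}\omega_i^{(j-1)}\big) = 1$. Comparing these relations across different $k$-subsets shows all $\omega_i^{(j)}$ agree across parties $i$ (call the common value $\mu_j$), and then the two families of constraints collapse to $\mu_j^{\,n}=1$ and $\mu_j^{\,k}\,\mu_{j-1}^{\,n-k}=1$. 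A short computation combining these yields $\mu_j^{\,\gcd(n,k)} = \mu_{j-1}^{\,\gcd(n,k)}$-type relations, and the coprimality $\gcd(n,k)=1$ forces all $\mu_j$ equal and then equal to $1$. This is the crux: without $\gcd(n,k)=1$ a genuine cyclic symmetry would survive, which is exactly why the condition is imposed. The exchange of factors $g_i$ among themselves (a permutation symmetry composed with $g$) must also be excluded; here I would use that the coefficient structure is \emph{not} invariant under any permutation once the diagonal entries are pinned, or alternatively appeal to the injectivity of the weight data. Finally, having shown $\widetilde{K}_{\Psi_{n,d}}=\{\one\}$ and $\ket{\Psi_{n,d}}\in\tilde{\mC}$, Lemma \ref{LemmaUnitStab} gives $\widetilde{G}_{\Psi_{n,d}}=\widetilde{K}_{\Psi_{n,d}}=\{\one\}$, completing the proof; the detailed case analysis separating $n>6$ from $n=5$ is what Lemmas \ref{lem:trivstab7} and \ref{lem:trivstabn5} carry out in Appendix \ref{sec:proofsn}.
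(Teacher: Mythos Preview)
Your overall skeleton (reduce to unitary via Lemma~\ref{LemmaUnitStab}, then to diagonal, then kill diagonal phases using $\gcd(n,k)=1$) matches the paper, and your treatment of the diagonal endgame is essentially the paper's step~4: the relations $\mu_j^{\,n}=1$ and $\mu_j^{\,k}\mu_{j-1}^{\,n-k}=1$ together with coprimality force all $\mu_j$ equal to a global $n$-th root of unity, hence $u^{\otimes n}=\one$.

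Where your route diverges from the paper, and where the real gap sits, is the passage to diagonal form. You propose an ``extremal weight'' argument: inspect the coefficients of $\ket{0}^{\otimes n}$ and $\ket{d-1}^{\otimes n}$ after applying $g$ and deduce that each $g_i$ preserves the standard flag, then use the Dicke pieces to propagate. As written this is not a proof: for general unitary (let alone invertible) $g_i$, the image $g\ket{0}^{\otimes n}=\bigotimes_i g_i\ket{0}$ is just some product vector, and the coefficient of $\ket{0}^{\otimes n}$ in $g\ket{\Psi_{n,d}}$ receives contributions from \emph{all} basis monomials via the off-diagonal entries of the $g_i$. No flag-preservation is forced by that single coefficient comparison, and you have not yet established that the $g_i$ coincide across sites, so you cannot appeal to a single-operator spectral argument either. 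Your later remark about excluding ``exchange of factors $g_i$'' is also off target: permutations of tensor factors are not elements of $\widetilde{G}$, so there is nothing to exclude there.

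The paper handles this middle step differently and in two moves. First, using only the permutation symmetry of $\ket{\Psi_{n,d}}$, it checks that the equation $B\otimes B^{-1}\otimes\one^{\otimes n-2}\ket{\Psi_{n,d}}=\ket{\Psi_{n,d}}$ forces $B\propto\one$ for unitary $B$; by the lemma of Migda\l{} et al.\ (Lemma~\ref{lem:B}) this upgrades any unitary stabilizer element to the form $u^{\otimes n}$. Second, with a single $u$ in hand, it passes to the two-site reduced state and uses that $u\otimes u$ must preserve each eigenspace of $\rho_{n,d}^{(1,2)}$; the explicit eigenspace structure (e.g.\ $\mathrm{span}\{\ket{00},\ket{d-1\,d-1}\}$ for the top eigenvalue when $n>6$) combined with the Schmidt-rank-1 constraint pins $u\ket{0}=u_0\ket{0}$, and the $\ket{D_{1,2}(j)}$-eigenspace then propagates $u\ket{i}=u_i\ket{i}$ inductively. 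The split between $n>6$ (where $k<n-2$ and the spectral decomposition of $\rho^{(1,2)}$ is clean) and $n=5$ (where $k=n-2$ and one works instead with the kernel and its complement) is precisely what distinguishes Lemmas~\ref{lem:trivstab7} and~\ref{lem:trivstabn5}. Your sketch does not surface either of these mechanisms, and they are where the actual work lies.
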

In the following we give an outline of the proof of this lemma, which is divided into four main steps.\\

First, we note that it is sufficient to show that $\widetilde{K}_{\Psi_{n,d}} = \{\one\}$ as Lemma \ref{LemmaUnitStab} then implies that also $\widetilde{G}_{\Psi_{n,d}} = \{\one\}$ holds.
In the second step, we show that any $v \in \widetilde{K}_{\Psi_{n,d}}$ is of the form $v = u^{\otimes n}$ for some $u \in U(d)$. The proof of this statement is presented in Appendix \ref{sec:proofsn}. It thus remains to show that the only $u\in U(d)$ that fulfills the equation
\begin{align}
 u^{\otimes n} \ket{\Psi_{n,d}} = \ket{\Psi_{n,d}} \label{eq:EVnmain}
\end{align}
also fulfills $u^{\otimes n} = \one$.
In the third step, we show that Eq. (\ref{eq:EVnmain}) can only be fulfilled if $u$ is diagonal, i.e. if $u = \sum_i u_i \ket{i}\bra{i}$. We show this in Appendix \ref{sec:proofsn} by considering the 2-subsystem reduction of Eq. (\ref{eq:EVnmain}).
In the fourth step, we reinsert $u = \sum_i u_i \ket{i}\bra{i}$ into Eq. (\ref{eq:EVnmain}) and see that it is equivalent to
\begin{align}
& u_i^n = 1 \ \text{for} \ i \in \{0,\ldots,d-1\}, \label{eq:phases1main}\\
& u_i^k u_{i-1}^{n-k} = 1 \ \text{for} \ i \in \{1,\ldots,d-1\} \label{eq:phases2main}.
\end{align}
Now, recall that $\text{gcd}(n,k) = 1$. This can be used to show that the only solution of Eqs. (\ref{eq:phases1main}-\ref{eq:phases2main}) is $u = \omega_n^m \one$, where $\omega_n = \exp(2\pi i/n)$ and $m\in\mathbb{N}$.
Hence, $u^{\otimes n} = \one$ holds. This completes the proof of Lemma \ref{lem:trivstab}.

\section{Multipartite pure state transformations}
\label{sec:picture}

Combining our result with previous works, the following picture of multipartite pure state entanglement transformations emerges (see Fig. \ref{fig:1}). For bipartite pure states (region A in Fig. \ref{fig:1}) all deterministic and probabilisitc LOCC transformations are characterized \cite{nielsen, Vidal99} and SEP $=$ LOCC holds \cite{gheorg}. Entangled bipartite pure states can always be transformed via nontrivial deterministic LOCC, regardless of their local dimensions. Moreover, they can always be obtained from the maximally entangled state. Hence, this (up to LUs) single state constitutes
the maximally entangled set of bipartite states.\\
\indent Three qubits (region B in Fig. \ref{fig:1}) are the only multipartite system for which all deterministic LOCC transformations between pure states are characterized \cite{turgut}. They are moreover the only tripartite system for which it is known that all fully entangled pure states can be transformed to other fully entangled states via nontrivial deterministic LOCC. Moreover, SEP $=$ LOCC for deterministic transformations within the GHZ class, i.e. for deterministic transformations between generic states \cite{chen}. Furthermore, the MES, i.e. the minimal set from which all other states can be deterministically obtained via LOCC, is of measure zero, albeit uncountably infinite \cite{dVSp13}. This situation changes drastically when the local dimension is increased by only one.\\
\indent Generic 3-qutrit states (region C in Fig. \ref{fig:1}) are isolated, despite the fact that their stabilizer is nontrivial \cite{HeSp15}. The MES is of full-measure. Moreover, SEP $\neq$ LOCC for deterministic transformations of 3-qutrit pure states \cite{HeSp15}.\\
\indent Regarding three-partite states we show that, already for  $4-,5-,$ or $6$-level systems (in region D in Fig. \ref{fig:1}), almost all pure states have trivial stabilizer and are therefore isolated (see Theorem \ref{IntroMain}). We further derive the optimal probabilistic protocol for transformations between generic states and find that SEP $=$ LOCC for these conversions. An open question is whether these results extend to tripartite systems of any local dimension $d>3$ (region E in Fig. \ref{fig:1}) or not.\\
\indent Four-qubit pure states (region F in Fig. \ref{fig:1}) generically have finite, nontrivial stabilizer and their MES is of full measure \cite{dVSp13,SpdV16}. Furthermore, SEP $=$ LOCC for transformations among generic pure states, which were characterized in \cite{Sa15}. However, almost all states are isolated \cite{dVSp13}.\\
\indent Finally, our work shows that almost all qudit states of $n$ $d$-level systems with $n=3$ and $d=4,5,6$ or $n>3$ and $d \geq 3$ (region D in Fig. \ref{fig:1}) have trivial stabilizer (see Theorem \ref{IntroMain}) and are therefore isolated. That is, almost all qudit states are in the MES. We further determine the optimal protocol for probabilistic transformations among these states and find that SEP $=$ LOCC holds in these cases. This shows in particular that the results of \cite{GoKr16}, which is devoted to $(n>4)$-qubit systems (region G in Fig. \ref{fig:1}), can be generalized to arbitrary local dimension.

\begin{figure}
 \includegraphics[width=0.4\textwidth]{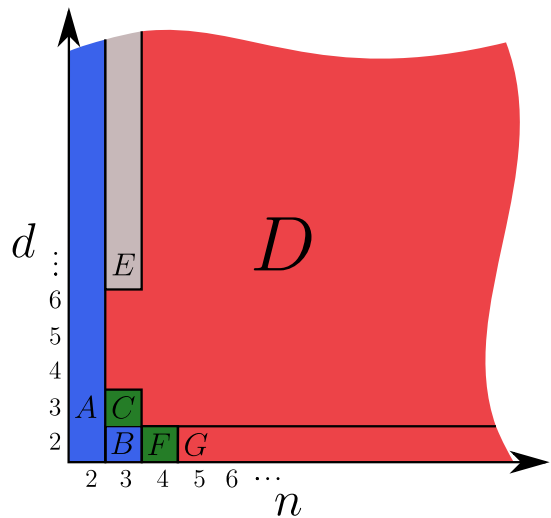}
 \caption{Summary of results on the symmetries of $n$-partite systems with local dimension $d$. The picture is divided into different regions (A to G) that were treated separately in the literature.
 The colors give information on the stabilizer of states in the corresponding system: blue (all states have non-compact stabilizer; regions A \cite{nielsen}, B \cite{turgut, dVSp13}), green (generic states have finite, nontrivial stabilizer; regions C \cite{HeSp15}, F \cite{dVSp13}), red (generic states have trivial stabilizer;  D (see Th. \ref{IntroMain}), G \cite{GoKr16}), grey (generic states have finite stabilizer \cite{GoWa11,Wa16}; unknown if it is trivial; region E). The implications of these and other results in entanglement theory are summarized in the main text.}
  \label{fig:1}
\end{figure}

\section{Conclusion}
\label{sec:conclusion}
In this work, we used methods from geometric invariant theory and the theory of Lie groups to prove that almost all pure $(n>3)$-qudit states and almost all three $d$-level states, for $d=4,5,6$, have trivial stabilizer. Combined with the characterization of local transformations of states with trivial stabilizer provided in \cite{GoKr16}, this has profound implications in entanglement theory. It allows us to characterize all transformations via LOCC and via SEP among almost all $(n>3)$-qudit pure states. We find that these transformations are extremely restricted. In fact, almost all $(n>3)$-qudit pure states are isolated. Due to the results presented here, the simple expression for the optimal success probability for probabilistic local transformations presented in \cite{GoKr16} is shown to hold among generic states. The optimal SEP protocol is a so called one-successful-branch protocol (OSBP), i.e. a simple protocol for which only one branch leads to the final state, which can also be implemented via LOCC. Furthermore, we discussed implications of our result for the construction of entanglement measures, the characterization of LU-equivalence classes and for the determination of probabilistic multi-copy LOCC transformations of multiqudit pure states. All of these results also hold for three $d$-level systems, where $d=4,5,6$.

This work shows that, in the context of local state transformations, only a zero-measure subset of the exponentially large space of $(n>3)$-qudit states is physically significant. That is, the most powerful states are very rare. This is consistent with investigations in other fields of physics, e.g. condensed matter physics, where it has been shown that under certain conditions only a zero-measure subset of all quantum states is physically relevant \cite{ScPe10} . These results therefore suggest that the physically relevant zero-measure subset of states, such as matrix-product states \cite{ScPe10}, projected-entangled pair states \cite{peps} (with low bond dimension), or stabilizer states \cite{GoThesis}, should be investigated more deeply. As transformations between fully entangled states of homogeneous systems are almost never possible, it would moreover be interesting to study transformations of generic states of heterogeneous systems. The methods developed in Sec. \ref{sec:gentriv} can be applied to arbitrary multipartite systems. However, interestingly, for certain heterogeneous systems, one can show that generic states always have nontrivial local symmetries \cite{HeGa17}. Our results further suggest that more general local transformations should be considered. This includes the multi-copy case and transformations between states of different local dimensions or number of subsystems e.g. transformations from $n$-qubit states to $(n-k)$-qubit state, where $1 \leq k \leq n-1$. Finally, the fact that almost all qudit states have trivial stabilizer and the mathematical tools that we developed to prove this could also be relevant in other fields of physics, such as condensed matter physics.\\

\noindent\textit{Acknowledgments:---}   G.G. research is supported by the Natural Sciences and Engineering Research Council of Canada (NSERC). The research of D.S. and B.K. was funded by the Austrian
Science Fund (FWF) through Grants No. Y535-N16 and No. DK-ALM:W1259-N27. D.S. would like to thank B.K., G.G. and the DK-ALM for giving him the opportunity to visit G.G.'s research group for two months while working on this project. D.S. would also like to thank the research group of G.G. for their hospitality during his visit.

\appendix

\section{Kempf-Ness theorem and Principal orbit type theorem}
\label{app:theorems}

In this Appendix we review two theorems that are central to our work, the Kempf-Ness theorem and the principal orbit type theorem, and discuss where they are used in the main text. We also briefly review some implications of the Kempf-Ness theorem in the context of entanglement theory. For further details the interested reader is referred to \cite{GoKr16} and \cite{Wa16}.\\

Let us first review the Kempf-Ness theorem.
In the main text we introduced the set of critical states in Eq. (\ref{eq:critstates}) as
\begin{align*}
 Crit(\mathcal{H}_{n})\equiv\{\ket{\phi}
\in\mathcal{H}_{n}|\left\langle \phi|X|\phi\right\rangle =0, \forall X \in
Lie(G)\}.
\end{align*}
Note that a state $\ket{\psi} \in\mathcal{H}_n$ is critical iff all of its local density matrices are proportional to the identity \cite{GoWa11}. That is, a state is critical if every subsystem is maximally entangled with the remaining subsystems.
As mentioned in the main text, many well-known quantum states are critical, and the union of the $G$-orbits of all critical states is dense and of full-measure in $\mH_n$. Critical states have many other interesting properties  mentioned below. Some of these can be derived from the Kempf-Ness theorem, which provides a characterization of critical states.
\begin{theorem}{\rm \cite{KN}}\label{KN}{\rm ~ \bf The Kempf-Ness theorem}
\begin{enumerate}
\item $\ket{\phi}\in Crit(\mathcal{H}_{n})$ iff $\left\Vert g\ket{\phi}\right\Vert \geq\left\Vert \ket{\phi}\right\Vert $ for all
$g\in G$.
\item If $\ket{\phi}\in Crit(\mathcal{H}_{n})$ and $g\in G$ then $\left\Vert g\ket{\phi}\right\Vert
\geq\left\Vert \ket{\phi}\right\Vert $ with equality iff $g\ket{\phi}\in K\ket{\phi}$.
Moreover, if $g$ is positive definite then the equality condition holds iff
$g\ket{\phi}=\ket{\phi}$.
\item If $\ket{\phi}\in\mathcal{H}_{n}$ then $G\ket{\phi}$ is closed in $\mathcal{H}_{n}$ iff $G\phi\cap Crit(\mathcal{H}_{n})\neq\emptyset$.
\end{enumerate}
\end{theorem}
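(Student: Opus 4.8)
The plan is to reduce all three statements to the behaviour of the single real function $N_\phi(g)\eqdef\|g\ket\phi\|^2=\langle\phi|g^\dagger g|\phi\rangle$ on $G$. Since $K$ acts by unitaries, $N_\phi$ is left $K$-invariant, so by the polar (Cartan) decomposition $g=k\,e^{X}$ with $k\in K$ and $X$ in the noncompact part $\mathfrak p$ (the Hermitian, traceless elements) of the Cartan decomposition $Lie(G)=\mathfrak k\oplus\mathfrak p$, one has $\|g\ket\phi\|=\|e^{X}\ket\phi\|$; only the Hermitian directions matter. The technical heart is a convexity lemma: for Hermitian $X$, spectrally decomposing into the eigenprojections $\Pi_j$ of $X$ (eigenvalues $\lambda_j$) gives
\[
\|e^{tX}\ket\phi\|^2=\sum_j e^{2t\lambda_j}\,\|\Pi_j\ket\phi\|^2,
\]
a positive combination of exponentials, hence a convex function of $t\in\mathbb R$ that is strictly convex unless $X\ket\phi=0$ (in which case it is constant). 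Differentiating at $t=0$ shows that $t=0$ is critical along every such ray iff $\langle\phi|X|\phi\rangle=0$ for all Hermitian $X$, and since $Lie(G)$ is a complex algebra (containing $iX$ together with $X$) this is equivalent to the criticality condition $\langle\phi|X|\phi\rangle=0$ for all $X\in Lie(G)$ of Eq. (\ref{eq:critstates}). In words, $\ket\phi$ is critical iff it is a critical point of $N_\phi$ on its orbit.

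Granting this, Statements 1 and 2 are immediate. For Statement 1: if $N_\phi\ge\|\ket\phi\|^2$ everywhere then $t=0$ minimises $\|e^{tX}\ket\phi\|^2$ along every ray, so the derivative vanishes and $\ket\phi$ is critical; conversely, if $\ket\phi$ is critical then each convex ray function has vanishing derivative at $t=0$, hence a global minimum there, giving $\|e^{X}\ket\phi\|\ge\|\ket\phi\|$ and therefore $\|g\ket\phi\|\ge\|\ket\phi\|$ for all $g$. For Statement 2, writing $g=k\,e^{X}$, the equality $\|g\ket\phi\|=\|\ket\phi\|$ forces the convex function $t\mapsto\|e^{tX}\ket\phi\|^2$ to attain its minimum value at both $t=0$ and $t=1$; by the strict-convexity alternative it must then be constant, i.e. $X\ket\phi=0$, whence $e^{X}\ket\phi=\ket\phi$ and $g\ket\phi=k\ket\phi\in K\ket\phi$. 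The converse is trivial, and the positive-definite ``moreover'' follows since then $k=\one$, so equality $\iff X\ket\phi=0\iff g\ket\phi=\ket\phi$.

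For Statement 3 the forward direction is a compactness argument: if $G\ket\phi$ is closed, each sublevel set $\{v\in G\ket\phi:\|v\|\le c\}$ is closed and bounded, hence compact, so $N$ attains its minimum on the orbit at some $v_0$, which by Statement 1 is critical, giving $G\ket\phi\cap Crit(\mH_n)\neq\emptyset$. The reverse direction --- that a critical vector $\ket\phi$ has closed orbit --- is where I expect the real work. The strategy is to prove the orbit map $G/G_\phi\to\mH_n$ is proper: given $g_n\ket\phi$ bounded, use $g_n=k_n e^{X_n}$ and compactness of $K$ to reduce to bounding $X_n$ modulo the stabiliser algebra $Lie(G_\phi)=\{X:X\ket\phi=0\}$. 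Here the convexity lemma must be upgraded to \emph{coercivity}: for $X\in\mathfrak p\cap Lie(G_\phi)^{\perp}$ nonzero one has $X\ket\phi\neq0$, and the minimum condition $\sum_j\lambda_j\|\Pi_j\ket\phi\|^2=0$ forces both positive and negative eigenvalues on the support of $\ket\phi$, so $\|e^{tX}\ket\phi\|^2\to\infty$ as $|t|\to\infty$; a minimax estimate over the (compact) unit sphere of $\mathfrak p\cap Lie(G_\phi)^{\perp}$ then yields a uniform growth bound, hence properness and closedness.

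The main obstacle is precisely this upgrade from pointwise strict convexity to \emph{uniform} coercivity of $N_\phi$ along geodesics emanating from a minimal vector: controlling the growth uniformly in the direction $X$ as $\|X\|\to\infty$ is the quantitative content linking the vanishing of the moment map to the Hilbert--Mumford criterion. I would either establish it directly through the minimax estimate on that unit sphere, or invoke the Kempf--Ness uniqueness lemma --- that the minimal vectors in $\overline{G\ket\phi}$ form a single $K$-orbit, itself a consequence of the same convexity --- to identify $\ket\phi$ with the critical vector produced by the forward direction applied to the unique closed orbit inside $\overline{G\ket\phi}$, thereby forcing $G\ket\phi$ to coincide with that closed orbit.
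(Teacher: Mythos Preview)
The paper does not prove this theorem at all: Theorem~\ref{KN} is stated in Appendix~\ref{app:theorems} with the citation~\cite{KN} and is then only \emph{used} (e.g.\ in the proof of Lemma~\ref{LemmaUnitStab} and in the discussion following the theorem), never proved. So there is no ``paper's own proof'' to compare against; the authors treat Kempf--Ness as a black box.

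That said, your sketch is essentially the standard Kempf--Ness argument and is correct in outline. The reduction via the polar decomposition $g=k\,e^{X}$ to Hermitian directions, the convexity of $t\mapsto\|e^{tX}\ket\phi\|^2$ (strict unless $X\ket\phi=0$), and the identification of criticality with being a critical point of the norm function are exactly the right ingredients for parts 1 and 2. For part 3 your forward direction is fine (just note that for $\ket\phi\neq0$ the closed orbit cannot accumulate at $0$, so a nonempty compact sublevel set exists). For the reverse direction both routes you propose are standard: the properness/coercivity argument works, and your compactness-of-the-unit-sphere minimax gives the required uniform growth since on $\mathfrak p\cap Lie(G_\phi)^{\perp}$ the map $X\mapsto X\ket\phi$ is injective. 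The alternative via uniqueness of the $K$-orbit of minimal vectors in $\overline{G\ket\phi}$ is the cleaner textbook route (and is how Wallach~\cite{Wa16} organises it): take any $v\in\overline{G\ket\phi}$; its orbit closure contains a closed orbit, hence a critical vector $w$; by part~2 applied to limits, $\|w\|\ge\|\ket\phi\|$, but $w$ is in the closure so also $\|w\|\le\|\ket\phi\|$; uniqueness of the minimal $K$-orbit then forces $w\in K\ket\phi$, hence $v\in G\ket\phi$.
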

The second part of the theorem implies that each SLOCC orbit contains (up to local unitaries) at most one critical state. Thus, critical states are natural representatives of SLOCC orbits. They are the unique states in their SLOCC orbits for which each qubit is maximally entangled to the other qubits \cite{GoWa11}.
The Kempf-Ness theorem was also important in the proof of \cite{GoKr16} to show that $g \in \tilde{G}_\psi$ iff $g^\dagger \in \tilde{G}_\psi$ for a critical state $\ket{\psi}$. Together with the fact that $\tilde{G}_\psi$ is Z-closed (which follows from the definition) this shows that $\tilde{G}_\psi$ is a symmetric subgroup of $GL(\mH_n)$ (see e.g. \cite{Wa16} for the definition of the Zariski topology). This property is central to the proof of Lemma \ref{LemmaUnitStab} in this work.\\

In order to state the principal orbit type theorem we first introduce some definitions and notation. We further discuss how a subgroup $H \subset GL(\mH_n)$ induces a preorder on the the set of all $H$-orbits of states in $\mH_n$.
The principal orbit type theorem then provides conditions under which this preorder gives rise to a maximal element.

Let $\ket{\psi},\ket{\phi} \in \mH_n$ be two states. Then $H_\psi$ and $H_\phi$ are said to have the same \emph{type} if there exists a
$h\in H$ such that $H_\phi=h H_\psi h^{-1}$, i.e. if they are conjugate in $H$.
Clearly, the stabilizer of $\ket{\psi}$ and $h\ket{\psi}$ are conjugate for any $h \in H$, namely
\begin{equation*}
hH_\psi h^{-1}=H_{h\psi}.
\end{equation*}
Hence, $H_\psi$ and $H_\phi$ are of the same type iff there exists $h\in H$ such that $H_\phi=H_{h\psi}$.  However, the fact that $H_\psi$ and $H_\phi$ have the same type does \emph{not} imply that there is a $h \in H$ such that $\ket{\psi} = h \ket{\phi}$, i.e. it does not imply that they are in the same $H$-orbit. For example, the $\widetilde{G}$-stabilizer of generic 4-qubit states has the same type as $\{\sigma_i^{\otimes 4}\}_{i=0}^3$ \cite{dVSp13}, despite the fact that two 4-qubit states are generically SLOCC inequivalent and thus not in the same $\widetilde{G}$-orbit \cite{verstr}.

We further say that $H/H_{\psi}$ has a lower type than $H/H_{\phi}$, denoted as
\begin{equation*}
H/H_\psi \prec_{type} H/H_\phi,
\end{equation*}
if $H_\phi$ is conjugate in $H$ to a subgroup of $H_\psi$. It is easy to see that $\prec_{type}$ induces a preorder on the set of all $H$-stabilizers.

That this preorder also induces a preorder on set of all $H$-orbits can be seen as follows.
Note that $H\ket{\psi}$ is isomorphic to the left coset of $H_\psi$ in $H$ for all $\ket{\psi}$, namely
\begin{equation*}
H\ket{\psi} \cong H/H_\psi.
\end{equation*}
We can therefore say that $H|\psi\ra$ is of lower type than $H|\phi\ra$, denoted as
\begin{equation*}
H\ket{\psi} \prec_{type} H\ket{\phi},
\end{equation*}
if $H/H_\psi \prec_{type} H/H_\phi$ holds.

The following theorem, called principal orbit type theorem (POT theorem), shows that under certain very general conditions this preorder possesses a maximal element. This key theorem can be found in
\cite{Bredon}, as a combination of Theorem 3.1 and Theorem 3.8.
\begin{theorem}{\rm \cite{Bredon}}\label{thm:principal} {\rm \bf The principal orbit type theorem}
Let $C$ be a compact Lie group acting differentiably on a connected smooth
manifold $\mM$ (in this paper we assume $\mM\subset\mH_n$).
Then, there exists a principal orbit type; that is, there exists a state $\ket{\phi} \in \mM$ such that $C/C_\psi\prec_{type} C/C_{\phi}$ for all $|\psi\ra\in \mM$.
Furthermore, the set of $\ket{\psi} \in \mM$ such that $C_{\psi}$ is conjugate to $C_\phi$ is open
and dense in $\mM$ with complement of lower dimension and hence of measure 0.
\end{theorem}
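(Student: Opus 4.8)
The plan is to reduce this global statement about the action of $C$ on $\mM$ to a purely local one by means of the slice theorem, and then to run an induction on $\dim\mM$. First I would use the compactness of $C$: averaging an arbitrary Riemannian metric against the Haar measure of $C$ produces a $C$-invariant metric, so that $C$ acts by isometries. The normal exponential map then furnishes, around each orbit $C\ket{x}$, a $C$-equivariant tubular neighbourhood diffeomorphic to the associated bundle $C\times_{C_x}S_x$, where the slice $S_x$ is a $C_x$-invariant disk in the normal space $N_x$ on which $C_x$ acts linearly through the slice representation. The decisive local fact, read directly off this model, is that for every $\ket{y}\in S_x$ one has $C_y\subseteq C_x$, with $C_y$ equal to the isotropy of $\ket{y}$ under the linear $C_x$-action on $N_x$. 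This yields semicontinuity of the orbit type: points near $C\ket{x}$ have stabilizer conjugate into $C_x$, hence type $\succeq_{type}$ that of $\ket{x}$.

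Next I would establish the existence of a minimal isotropy subgroup, i.e.\ of a principal orbit type, by induction on $\dim\mM$. In the slice model the question becomes one about the orthogonal linear action of the compact group $C_x$ on $N_x$; splitting off the fixed subspace $N_x^{C_x}$ and passing to the unit sphere of a complement --- a manifold of strictly smaller dimension carrying a $C_x$-action --- lets the inductive hypothesis supply a principal isotropy type on the slice, which then pulls back to the tubular neighbourhood. The point that makes this useful is that if $H$ is a minimal isotropy group realized at some $\ket{x}$ with $C_x=H$, then for nearby $\ket{y}$ one has $C_y\subseteq H$ with $C_y$ conjugate to $H$ by minimality, forcing $C_y=H$; thus $H$ acts trivially on $N_x$ and every point of the tubular neighbourhood has stabilizer conjugate to $H$. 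Consequently each minimal stratum $\mM_{(H)}=\{\ket{\psi}\mid C_\psi\ \text{conjugate to}\ H\}$ is open.

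The final step is to globalize using that $\mM$ is connected. Openness of the minimal strata, together with the semicontinuity of the first step, gives that the regular part $\mM_{\mathrm{reg}}$ (the union of the minimal strata) is open, while the inductive slice analysis confines the non-principal points to a set of strictly lower dimension, yielding density and a measure-zero complement. To obtain a \emph{single} principal type I would argue that $\mM_{\mathrm{reg}}$ is connected --- the slice model shows the non-principal locus does not locally separate $\mM$, so deleting it from connected $\mM$ leaves a connected set --- and since the distinct minimal strata are open and partition $\mM_{\mathrm{reg}}$, connectedness forces exactly one minimal type $(H)$. Any $\ket{\phi}\in\mM_{(H)}$ then satisfies $C_\phi=H$ conjugate into every $C_\psi$, which is the asserted principal orbit type.

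The main obstacle is precisely this last uniqueness-by-connectedness step: a priori the preorder $\prec_{type}$ can have several minimal elements, and the slice theorem controls orbit types only \emph{locally}, so one must combine the openness of the minimal strata with the semicontinuity of the type and with the nonseparating structure of the singular locus in the slice representation to rule out $\mM$ splitting into regions of distinct principal types. Carrying the induction through cleanly --- in particular the low-dimensional base cases and the passage to the sphere in the slice representation, where connectedness of the sphere must be watched --- is where the remaining technical weight lies.
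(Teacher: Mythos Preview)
The paper does not prove this theorem at all: it is quoted verbatim from Bredon's monograph (as ``a combination of Theorem 3.1 and Theorem 3.8'' of \cite{Bredon}) and used as a black box. So there is no paper proof to compare against.

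That said, your sketch is the standard argument and is essentially the route Bredon takes: invariant metric from Haar averaging, the differentiable slice theorem giving the tubular model $C\times_{C_x}S_x$, semicontinuity of isotropy on the slice, induction on $\dim\mM$ via the slice representation (passing to a sphere in the complement of the fixed subspace), and finally connectedness of $\mM$ to force a single minimal type. Your identification of the delicate point --- that connectedness of $\mM$ must be combined with the fact that the non-principal locus does not locally disconnect a slice --- is exactly right; in Bredon this is handled by showing that the principal stratum in each slice is open, dense, and connected (the sphere in the slice representation is connected once its dimension is at least one, and the zero-dimensional base case is trivial since a compact group acting on a point has a unique isotropy type). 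Nothing in your outline is wrong; it is just that the paper defers the whole argument to the reference.
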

The following example illustrates how powerful the POT theorem is. Suppose $\ket{\psi} \in \mH_n$ is a (not necessarily critical) state with trivial unitary stabilizer, $\widetilde{K}_\psi = \{\one\}$.
Then the POT theorem applied to $C = \tilde{K}$ and $\mM = \mH_{n}$ directly implies that the set of states with trivial unitary stabilizer
is of full measure in $\mH_n$.

However, in this work we show that the stabilizer in $\widetilde{G}$ is generically trivial. As $\widetilde{G}$ is a noncompact Lie group, the POT theorem cannot be applied directly. It is nevertheless central to the proof of Theorem \ref{thmain}, where we applied it to the compact Lie group $C = \tilde{K}$ that acts differentiably on the connected smooth manifold $\mM = \mC$ (see Lemma \ref{properties}, (iii)).\\

\section{Criticial states with trivial stabilizer}
\label{app:trivstab}

In this appendix we provide examples of critical states with trivial stabilizer for $n = 3$ and $d = 4,5,6$, for $n=4$ and $d > 2$ and for $n \geq 5$ and $d \geq 2$. That is, we give examples of criticial states with trivial stabilizer for all Hilbert spaces described in Theorem \ref{IntroMain}, and for $(n>4)$-qubit systems. Combined with Theorem \ref{thmain} this completes the proof of Theorem \ref{IntroMain} that almost all pure states in these Hilbert spaces have trivial stabilizer. For $n>3$ we present these states in Sec. \ref{ngeq3}. The states with $n = 3$ have to be constructed differently and are presented in Sec. \ref{sec:tripartite}.

\subsection{Critical ($n>3$)-qudit states with trivial stabilizer}
\label{ngeq3}

In this section we present critical states with trivial stabilizer for $n = 4, d>2$ and $n>4, d\geq 2$. As we will see, it is easy to show that these states are indeed critical, i.e. that their single-subsystem reduced states are proportional to the completely mixed state. In contrast to that, the proof that their stabilizer is trivial is more involved and the details of the proof depend on $n$ and on $d$.
However, since we consider only permutationally symmetric states the main steps of this proof are the same for all $n>3$. For the sake of readability we outline these four main steps before we present the details in the subsequent subsections. The main ingredients to show that a permutationally symmetric, critical state considered here, say $\ket{\psi_{n,d}} \in \mH_{n,d}$, has trivial stabilizer, are the following.\\
\begin{itemize}
 \item[(1)] Since $\ket{\psi_{n,d}}$ is critical it is sufficient to show that $\widetilde{K}_{\psi_{n,d}} = \{\one\}$ holds, i.e. that $\ket{\psi_{n,d}}$ has a trivial unitary stabilizer, as Lemma \ref{LemmaUnitStab} states that then also $\widetilde{G}_{\psi_{n,d}} = \{\one\}$ holds.
 \item[(2)] We show that a unitary $B$ fulfills $B \otimes B^{-1} \otimes \one^{\otimes n-2} \ket{\psi_{n,d}} = \ket{\psi_{n,d}}$ iff $B = c \one$ for some phase $c$. It was shown in \cite{migdal} that then any $v \in \widetilde{K}_{\psi_{n,d}}$ can be express as $v = u^{\otimes n}$ for some $u \in U(d)$ (see also Lemma \ref{lem:symSym5} below for details).
\item[(3)] It remains to show that for any unitary $u\in U(d)$ the equation
\begin{align}
 u^{\otimes n} \ket{\psi_{n,d}} = \ket{\psi_{n,d}} \label{eq:EVsummary}
\end{align}
implies that $u^{\otimes n} = \one$. The corresponding equation for the reduced state of the first two subsystems, $\rho_{n,d}^{(1,2)}= \tr_{3,\ldots,n}(\ket{\Psi_{n,d}}\bra{\psi_{n,d}})$ reads,
 \begin{align*}
  (u\otimes u) \rho_{n,d}^{(1,2)} (u^{\dagger} \otimes u^{\dagger}) = \rho_{n,d}^{(1,2)}.
 \end{align*}
 This equation can be used to show that $u$ has to be diagonal. However, the details of this proof depend on $n$ and $d$.
\item[(4)] In the last step we show that the only diagonal unitary $u$ that fulfills Eq. (\ref{eq:EVsummary}) also fulfills $u^{\otimes n} = \one$. This shows that $\widetilde{K}_{\psi_{n,d}} = \{\one\}$ 
and completes the proof.
\end{itemize}

The remainder of this section is devoted to the details of this proof.
In Sec. \ref{sec:proofsn} we consider the case $n=5$ and $n>6$ and $d\geq 2$. In Sec. \ref{sec:n4} we consider the case $n=4, d>2$ and in Sec. \ref{sec:n6} the case $n=6$ and $d\geq 2$.

\subsubsection{A critical $n$-qudit state, $n=5, n > 6$, with local dimension $d \geq 2$ and trivial stabilizer}
\label{sec:proofsn}

In this section we show that the critical state $\ket{\Psi_{n,d}}$ introduced Eq. (\ref{eq:state}) of Sec. \ref{sec:states} is well-defined and has trivial stabilizer for $n=5, n > 6$ and $d\geq2$. That is, we prove Lemma \ref{lem:trivstab} of the main text.\\

Let us first recall the following definitions made in Sec. \ref{sec:states} of the main text. For $\ket{\phi} \in \mH_{n,d}$ we define the set of all distinct permutations of $\ket{\phi}$ as $\pi(\ket{\phi}) = \{P_{\sigma}\ket{\phi} \ | \ \sigma \in S_n\}$ and the symmetrization of $\ket{\phi}$ as
$\ket{\pi(\ket{\phi})} = \sum_{\ket{\chi} \in \pi(\ket{\phi})} \ket{\chi}$.
Using this notation, we define the (unnormalized) state
\begin{align*}
 \ket{D_{k,n}(j)} = \ket{\pi(\ket{j}^{\otimes k}\ket{j-1}^{\otimes n-k})},
\end{align*}
for $0\leq k \leq n$ and $j\in \{1,\ldots,d-1\}$. These states fulfill
\begin{align}
 \langle D_{k,n}(j)|D_{k',n}(j')\rangle = {{n}\choose{k}} \delta_{k,k'} \delta_{j,j'}.
\end{align}
For $l \in \{1,\ldots,n-1\}$ we can express $\ket{D_{k,n}(j)}$ in the bipartite splitting of any $l$ subsystems and the remaining $n-l$ subsystems as
\begin{align}
\label{eq:decDicke}
 \ket{D_{k,n}(j)} = \sum_{q=0}^{\min\{l,k\}} \ket{D_{q,l}(j)}\ket{D_{k-q,n-l}(j)}.
\end{align}

In Sec. \ref{sec:states} we then defined for $n = 5, n > 6$ and $d \geq 2$ the state
\begin{align}
 \ket{\Psi_{n,d}} &= \sum_{j=0}^{d-1} c_j \ket{j}^{\otimes n} + \sum_{j=1}^{d-1}\ket{D_{k,n}(j)}, \label{eq:stateA}
\end{align}
where $c_0 = \sqrt{{{n-1}\choose{k-1}} + 1}$, $c_i = 1$ for $0<i<d-1$ and $c_{d-1} = \sqrt{{{n-1}\choose{k}} + 1}$, with $k$ the smallest natural number such that $3 \leq k \leq n-2$, $n \neq 2k$ and $\text{gcd}(n,k) = 1$.\\

Let us first show that $k$ as described above always exists for $n = 5, n > 6$ and that $\ket{\Psi_{n,d}}$ is therefore well-defined. For $n \in \mathbb{N}$ the Euler totient function $\phi(n)$ is defined as the number of all natural numbers $j$ that are smaller than $n$ and fulfill $\text{gcd}(n,j)=1$, i.e.
\begin{align}
 \phi(n) = |\{j \in \mathbb{N} | \ j < n, \ \text{gcd}(n,j)=1\}|. \label{eq:Euler}
\end{align}
 It is straightforward to see that $k$ as defined below Eq. (\ref{eq:stateA}) always exists if $\phi(n) \geq 5$. We now prove that $k$ exists if $n=4$ and $n > 6$.  If $n$ is not divisible by $3$
then $k=3$. If $n$ is divisible by $3$ and $5$ then Euler's formula for $\phi(n)$ (cf. \cite{nicolas})
implies that $\phi(n)\geq 5$. Finally, if $n$ is divisible by $3$ but not $5$ and $n \geq 9$ then $k=5$.\\

Let us now show some properties of $\ket{\Psi_{n,d}}$ that will be useful in the proof that it has trivial stabilizer.
Note first that, due to Eq. (\ref{eq:decDicke}), we can express $\ket{\Psi_{n,d}}$ in the bipartition of any $l$ subsystems with the rest as
\begin{align}
 \ket{\Psi_{n,d}} &= \sum_{j=0}^{d-1} c_j \ket{j}^{\otimes l} \ket{j}^{\otimes n-l} \nonumber \\
 &+ \sum_{j=1}^{d-1} \sum_{q=0}^{\min\{l,k\}} \ket{D_{q,l}(j)}\ket{D_{k-q,n-l}(j)}. \label{eq:decPsi}
\end{align}

Note further that the following useful lemma on symmetric states has been shown in \cite{migdal}.
\begin{lemma}{\rm \cite{migdal}}\label{lem:B}
Let $\ket{\psi}$ be symmetric. Suppose $|\psi\ra$ has the property that
\begin{align}
 B\otimes B^{-1}\otimes \one^{\otimes n-2} \ket{\psi}=\ket{\psi} \ \text{iff} \ B = b\one, \label{eq:B}
\end{align}
for some $b \in \C\backslash\{0\}$.
If $g \in \tilde{G}_{\psi}$ then $g = h^{\otimes n}$ for some $h \in GL(d,\C)$.
\end{lemma}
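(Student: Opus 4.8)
The plan is to use the permutation symmetry of $\ket{\psi}$ to ``divide out'' pairs of tensor factors of $g$ and thereby reduce the whole problem to the single hypothesis (\ref{eq:B}). Write $g = g_1 \otimes \cdots \otimes g_n$ with each $g_i \in GL(d,\C)$. The first thing I would establish is that $\widetilde{G}_\psi$ is stable under permuting the tensor slots: from the definition of $P_\sigma$ one has $P_\sigma\, g\, P_\sigma^{-1} = g_{\sigma^{-1}(1)} \otimes \cdots \otimes g_{\sigma^{-1}(n)}$, and since $\ket{\psi}$ is symmetric (so $P_\sigma^{-1}\ket{\psi}=\ket{\psi}$),
\[
P_\sigma g P_\sigma^{-1}\ket{\psi} = P_\sigma g \ket{\psi} = P_\sigma\ket{\psi} = \ket{\psi}.
\]
Hence every reordering $g^{(\sigma)} := P_\sigma g P_\sigma^{-1}$ again lies in $\widetilde{G}_\psi$.

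Next I would exploit that $\widetilde{G}_\psi$ is a group. Taking $\sigma$ to be the transposition of slots $1$ and $2$ gives $g^{(\sigma)} = g_2 \otimes g_1 \otimes g_3 \otimes \cdots \otimes g_n$, so that
\[
g^{-1} g^{(\sigma)} = (g_1^{-1} g_2) \otimes (g_2^{-1} g_1) \otimes \one^{\otimes n-2}
\]
lies in $\widetilde{G}_\psi$. This is exactly $B \otimes B^{-1} \otimes \one^{\otimes n-2}$ with $B = g_1^{-1} g_2$, so hypothesis (\ref{eq:B}) forces $B = b_2 \one$, i.e. $g_2 = b_2 g_1$ for some $b_2 \in \C\backslash\{0\}$.

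To relate an arbitrary $g_i$ to $g_1$ I would repeat the construction with the transposition of slots $1$ and $i$; this produces an element of $\widetilde{G}_\psi$ equal to $g_1^{-1}g_i$ on slot $1$, its inverse on slot $i$, and $\one$ elsewhere. Conjugating once more by the permutation carrying slot $i$ to slot $2$ (which, by the argument of the first paragraph, keeps us inside $\widetilde{G}_\psi$) brings this into the precise form of (\ref{eq:B}), yielding $g_i = b_i g_1$ for some $b_i \in \C\backslash\{0\}$, with $b_1 = 1$. Consequently $g = (b_1 b_2 \cdots b_n)\, g_1^{\otimes n}$, and choosing any $n$-th root $c$ of $b_1 \cdots b_n$ and setting $h = c\, g_1 \in GL(d,\C)$ gives $g = h^{\otimes n}$, as claimed.

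The step I expect to require the most care is bookkeeping rather than ideas: verifying the conjugation formula $P_\sigma g P_\sigma^{-1} = g_{\sigma^{-1}(1)} \otimes \cdots \otimes g_{\sigma^{-1}(n)}$ from the definition of $P_\sigma$, and transferring the hypothesis (\ref{eq:B}), which is stated only for slots $1$ and $2$, to an arbitrary pair of slots. That transfer is precisely where the permutation symmetry of $\ket{\psi}$ is \emph{essential}: without it, a reordered symmetry would not need to remain a symmetry, the operators $B \otimes B^{-1} \otimes \one^{\otimes n-2}$ could not be manufactured inside $\widetilde{G}_\psi$, and the reduction to (\ref{eq:B}) would break down.
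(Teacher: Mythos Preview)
Your argument is correct and follows essentially the same route as the paper: conjugate $g$ by a transposition $P_{(1,2)}$ using the symmetry of $\ket{\psi}$, form $g^{-1}P_{(1,2)}gP_{(1,2)} = (g_1^{-1}g_2)\otimes(g_2^{-1}g_1)\otimes\one^{\otimes n-2}\in\widetilde{G}_\psi$, apply hypothesis (\ref{eq:B}) to get $g_2=b_2 g_1$, and then repeat for the remaining slots. The only cosmetic difference is that the paper simply asserts ``the same argument'' works for any pair $(i,j)$, whereas you spell out the extra permutation needed to move slot $i$ to slot $2$ before invoking (\ref{eq:B}); both are equivalent uses of the permutation symmetry.
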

This result can be easily understood as follows. Let $\ket{\psi}$ be symmetric and let $P_{(1,2)}$ denote the operator that permutes subsystems 1 and 2. Note that if $g = g_1 \otimes \ldots \otimes g_n \in \widetilde{G}_{\psi}$ then $g^{-1}\ket{\psi} = \ket{\psi}$ and $P_{(1,2)}gP_{(1,2)}(P_{(1,2)}\ket{\psi}) = P_{(1,2)}\ket{\psi}$ hold. Using that $P_{(1,2)}\ket{\psi} = \ket{\psi}$ this implies that $g^{-1}P_{(1,2)}gP_{(1,2)}\ket{\psi} = \ket{\psi}$, i.e.
\begin{align}
 g_1^{-1}g_2 \otimes g_2^{-1}g_1 \otimes \one^{\otimes n-2}\ket{\psi} = \ket{\psi}. \label{eq:B2}
\end{align}
Now, if Eq. (\ref{eq:B}) holds, this implies that there is a $c_{1,2} \in \C\backslash\{0\}$ such that $g_1 = c_{1,2} g_2$. As $\ket{\psi}$ is symmetric the same argument can also be used to show that there is a  $c_{i,j} \in \C\backslash\{0\}$ such that  $g_i = c_{i,j} g_j$ for all $i \neq j$ and thus $g = h^{\otimes n}$ for some $h \in GL(d,\C)$. Note that if $g \in \widetilde{K}$ then $g_1^{-1}g_2$ in Eq. (\ref{eq:B2}) is unitary. Hence, in order to show that any unitary symmetry $v \in \widetilde{K}_{\psi}$ is of the form $v = u^{\otimes n}$ for some $u \in U(d)$ it is thus sufficient to show that Eq. (\ref{eq:B}) holds for any unitary $B$. We use this to prove the following lemma.

\begin{lemma}
 \label{lem:symSym5}
 If $v \in \widetilde{K}$ fulfills $v\ket{\Psi_{n,d}} = \ket{\Psi_{n,d}}$ then $v = u^{\otimes n}$ for some $u \in U(d)$.
\end{lemma}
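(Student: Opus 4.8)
The plan is to verify the hypothesis of Lemma~\ref{lem:B} for the state $\ket{\Psi_{n,d}}$ and then invoke it in its unitary form. Since $\ket{\Psi_{n,d}}$ is by construction permutationally symmetric, the reduction described after Lemma~\ref{lem:B} applies: to conclude that every $v\in\widetilde{K}$ fixing $\ket{\Psi_{n,d}}$ has the form $v=u^{\otimes n}$ with $u\in U(d)$, it suffices to show that Eq.~(\ref{eq:B}) holds for every \emph{unitary} $B$, i.e. that $B\otimes B^{-1}\otimes\one^{\otimes n-2}\ket{\Psi_{n,d}}=\ket{\Psi_{n,d}}$ forces $B=b\one$. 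For unitary $B$ one has $B^{-1}=B^\dagger$, so the relevant operator on the first two subsystems is $B\otimes B^\dagger$.

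The core step is to exploit the bipartite decomposition of $\ket{\Psi_{n,d}}$ across the cut $(1,2)\,|\,(3,\dots,n)$ supplied by Eq.~(\ref{eq:decPsi}) with $l=2$. Writing $\ket{\Psi_{n,d}}=\sum_\mu \ket{A_\mu}_{12}\ket{B_\mu}_{3\cdots n}$, I would first check that, after collecting terms that share the same second factor, the states $\ket{B_\mu}$ — which are the Dicke states $\ket{D_{k-q,n-2}(j)}$ together with the fully aligned states $\ket{j}^{\otimes n-2}$ — are linearly independent. Since an operator of the form $B\otimes B^\dagger\otimes\one^{\otimes n-2}$ acts trivially on the last $n-2$ systems, linear independence of the $\ket{B_\mu}$ lets me match coefficients and conclude that $B\otimes B^\dagger$ fixes each first-factor vector $\ket{A_\mu}$. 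In particular $B\otimes B^\dagger$ fixes the product states $\ket{jj}$ (which appear, possibly after combining with a neighbouring term, paired with the aligned states $\ket{j}^{\otimes n-2}$ and with $\ket{D_{k-2,n-2}(j)}$) and the symmetric off-diagonal combinations $\ket{j,j-1}+\ket{j-1,j}$ (paired with $\ket{D_{k-1,n-2}(j)}$).

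From here the conclusion $B=b\one$ is elementary. Writing $B\ket{j}=\sum_i B_{ij}\ket{i}$, the identity $B\otimes B^\dagger\ket{jj}=\ket{jj}$ for every $j$ forces $B$ to be diagonal with $\lvert B_{jj}\rvert=1$, i.e. $B=\mathrm{diag}(b_0,\dots,b_{d-1})$ with $\lvert b_j\rvert=1$. Feeding this into $B\otimes B^\dagger\bigl(\ket{j,j-1}+\ket{j-1,j}\bigr)=\ket{j,j-1}+\ket{j-1,j}$ gives $b_j\overline{b_{j-1}}=1$, hence $b_j=b_{j-1}$ for all $j$, so all the phases coincide and $B=b\one$. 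This establishes Eq.~(\ref{eq:B}) for unitary $B$, and the lemma follows.

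I expect the main obstacle to be the bookkeeping in the linear-independence step rather than any conceptual difficulty: one must verify that no two of the Dicke states $\ket{D_{k-q,n-2}(j)}$ and aligned states $\ket{j}^{\otimes n-2}$ accidentally coincide (two Dicke states agree iff they have the same level-occupation multiset), and carefully track which first-factor vectors get merged when second factors coincide. The constraints $3\le k\le n-2$ keep $k-q\in\{k-2,k-1,k\}$ away from the extreme (pure) occupation numbers, which is what guarantees that $\ket{jj}$ and $\ket{j,j-1}+\ket{j-1,j}$ genuinely survive into the range of $\rho_{n,d}^{(1,2)}$; the edge case $k=n-2$ (which already occurs at $n=5$, where $k=3$) must be checked separately but goes through by the same matching argument.
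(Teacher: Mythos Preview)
Your proposal is correct and follows essentially the same approach as the paper: both reduce to verifying Eq.~(\ref{eq:B}) for unitary $B$ via Lemma~\ref{lem:B}, then use the $l=2$ bipartite decomposition to extract the constraints $(B\otimes B^{-1})\ket{jj}=\ket{jj}$ and $(B\otimes B^{-1})\ket{D_{1,2}(j)}=\ket{D_{1,2}(j)}$, deducing first that $B$ is diagonal and then that all diagonal entries coincide. The only cosmetic difference is that the paper phrases the matching step as applying specific bras $\bra{i}^{\otimes n-2}$ and $\bra{D_{k-1,n-2}(i)}$ on the last $n-2$ systems rather than invoking linear independence of the second factors abstractly, and it handles the $k=n-2$ edge case exactly as you anticipate.
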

\proof{
$\ket{\Psi_{n,d}}$ is permutationally symmetric. Due to Lemma \ref{lem:B} it is thus sufficient to show that the only solution to
\begin{align}
 B\otimes B^{-1} \otimes \one^{\otimes n-2} \ket{\Psi_{n,d}} = \ket{\Psi_{n,d}}, \label{eq:Bmatrix2}
\end{align}
where $B \in U(d)$, is $B = b\one$ for some complex number $b \neq 0$.\\

In order to show this, we first apply for $i \in \{0,\ldots,d-1\}$ the operator $\one^{\otimes 2}\otimes \bra{i}^{\otimes n-2}$ to both sides of Eq. (\ref{eq:Bmatrix2}). Decomposition (\ref{eq:decPsi}) for $l = 2$ is very useful in calculating the resulting equation.
For $k < n-2$ we obtain
\begin{align*}
  (B \otimes B^{-1})\ket{i}^{\otimes 2} = \ket{i}^{\otimes 2} \ \text{for} \ i \in \{0,\ldots,d-1\}.
\end{align*}
 In the case of $k=n-2$ we get
\begin{align*}
 &(B \otimes B^{-1})\ket{0}^{\otimes 2} = \ket{0}^{\otimes 2},\\
 &(B \otimes B^{-1}) (c_i \ket{i}^{\otimes 2} + \ket{i-1}^{\otimes 2})\\
 &= c_i \ket{i}^{\otimes 2} + \ket{i-1}^{\otimes 2} \ \text{for} \ i \in \{1,\ldots,d-1\}.
\end{align*}
It is then straightforward to see that these equations can only be fulfilled if $B$ is diagonal, i.e. $B = \sum_{i=0}^{d-1} b_i \ket{i}\bra{i}$.\\

Analogously we can apply $\one^{\otimes 2} \otimes \bra{D_{k-1,n-2}(i)}$ to both sides of Eq. (\ref{eq:Bmatrix2}) for $i \in \{1,\ldots,d-1\}$ and see that
\begin{align*}
 (B\otimes B^{-1})\ket{D_{1,2}(i)} = \ket{D_{1,2}(i)} \ \text{for}  \ i \in \{1,\ldots,d-1\}.
\end{align*}
Using that $\ket{D_{1,2}(i)} = \ket{i}\ket{i-1}+\ket{i-1}\ket{i}$ it is easy to see that these equations imply $b_i = b_j = b$ for all $i,j \in \{0,\ldots,d-1\}$ and hence $B = b\one$.\qed}\\

With these results we are in the position to proof  Lemma \ref{lem:trivstab} in the main text, namely that $\ket{\Psi_{n,d}}$ has trivial stabilizer for $n>4, n\neq 6, d > 2$. We first prove this result for $n>6$ (see Lemma \ref{lem:trivstab7}). After that, we provide the proof for $n=5$ (see Lemma \ref{lem:trivstabn5}) which is slightly different.\\

We first consider the case of $n>6$ qubits and prove the following lemma.
\begin{lemma}
\label{lem:trivstab7}
 For $n > 6$ and $d\geq 2$ the stabilizer of $\ket{\Psi_{n,d}}$ is trivial, i.e. $\widetilde{G}_{\Psi_{n,d}} = \{\one\}$.
\end{lemma}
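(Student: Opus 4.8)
The plan is to follow the four-step strategy already laid out in the main text for permutationally symmetric critical states, specialized to $\ket{\Psi_{n,d}}$ with $n>6$. By Step (1), since $\ket{\Psi_{n,d}}$ is critical, Lemma \ref{LemmaUnitStab} reduces the task to showing $\widetilde{K}_{\Psi_{n,d}}=\{\one\}$. By Step (2), Lemma \ref{lem:symSym5} has already established that any $v\in\widetilde{K}_{\Psi_{n,d}}$ must be of the form $v=u^{\otimes n}$ for some $u\in U(d)$. Thus the entire content of the proof is to show that the only $u\in U(d)$ satisfying the eigenvalue equation $u^{\otimes n}\ket{\Psi_{n,d}}=\ket{\Psi_{n,d}}$ also satisfies $u^{\otimes n}=\one$.

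For Step (3), I would show that $u$ must be diagonal by examining the two-subsystem reduced state $\rho_{n,d}^{(1,2)}=\tr_{3,\dots,n}(\ket{\Psi_{n,d}}\bra{\Psi_{n,d}})$, which must satisfy $(u\otimes u)\rho_{n,d}^{(1,2)}(u^\dagger\otimes u^\dagger)=\rho_{n,d}^{(1,2)}$. Using the bipartite decomposition of Eq.~(\ref{eq:decPsi}) with $l=2$, I would compute $\rho_{n,d}^{(1,2)}$ explicitly and identify a spectral structure (i.e.\ an eigenbasis with generically distinct eigenvalues consisting of the product vectors $\ket{i}^{\otimes 2}$ and the symmetric Dicke-type vectors $\ket{D_{1,2}(i)}$). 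Since $u\otimes u$ commutes with $\rho_{n,d}^{(1,2)}$, it must preserve these eigenspaces; arguing as in the proof of Lemma \ref{lem:symSym5} (where the same structure forced $B$ to be diagonal), one concludes that $u$ must be diagonal, $u=\sum_i u_i\ket{i}\bra{i}$. The condition $n>6$ enters here to guarantee the relevant coefficients $c_j$ and binomial weights make the eigenvalues distinct enough that no off-diagonal degeneracy can occur.

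For Step (4), I would reinsert the diagonal form $u=\sum_i u_i\ket{i}\bra{i}$ into $u^{\otimes n}\ket{\Psi_{n,d}}=\ket{\Psi_{n,d}}$. Comparing coefficients of the basis states $\ket{j}^{\otimes n}$ yields $u_j^n=1$ for each $j\in\{0,\dots,d-1\}$, and comparing coefficients of the Dicke terms $\ket{D_{k,n}(j)}$ (which are built from $k$ copies of $\ket{j}$ and $n-k$ copies of $\ket{j-1}$) yields $u_j^k u_{j-1}^{n-k}=1$ for each $j\in\{1,\dots,d-1\}$. This is exactly the system of Eqs.~(\ref{eq:phases1main})--(\ref{eq:phases2main}). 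I would then exploit $\gcd(n,k)=1$: writing $u_j=u_{j-1}\zeta$ for some root of unity $\zeta$, the relation $u_j^k u_{j-1}^{n-k}=u_{j-1}^n\zeta^k=\zeta^k=1$ forces $\zeta^k=1$, while $u_{j-1}^n=1$ gives $\zeta^n=1$; since $\gcd(n,k)=1$ this forces $\zeta=1$, so all $u_i$ are equal to a common $n$-th root of unity $\omega_n^m$. Hence $u=\omega_n^m\one$ and $u^{\otimes n}=\omega_n^{mn}\one=\one$.

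I expect the main obstacle to be Step (3): rigorously verifying that the spectrum of $\rho_{n,d}^{(1,2)}$ has no accidental degeneracies that would permit a non-diagonal unitary in the commutant. This requires a careful bookkeeping of the binomial coefficients appearing in the overlaps $\langle D_{k,n}(j)|D_{k',n}(j')\rangle={{n}\choose{k}}\delta_{k,k'}\delta_{j,j'}$ and of the special coefficients $c_0,c_{d-1}$, and it is precisely where the constraint $n>6$ (ensuring existence of an admissible $k$ with $3\le k\le n-2$, $n\neq 2k$, $\gcd(n,k)=1$) must be used, so that the separate treatment of $n=5$ in Lemma \ref{lem:trivstabn5} is unavoidable. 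Steps (1), (2), and (4) are essentially mechanical given the cited lemmas and the arithmetic exploitation of $\gcd(n,k)=1$.
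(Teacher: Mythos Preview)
Your proposal is correct and follows essentially the same approach as the paper. The one point worth sharpening is the role of $n>6$ in Step (3): in the paper it is used to guarantee $k<n-2$, which makes the spectral decomposition of $\rho_{n,d}^{(1,2)}$ take the clean three-block form with eigenvalues $\alpha>\beta>\gamma$ on $\{\ket{00},\ket{d-1\,d-1}\}$, $\{\ket{jj}\}_{1\le j\le d-2}$, and $\{\ket{D_{1,2}(j)}\}$ respectively, after which a Schmidt-rank argument on the $\alpha$-eigenspace and an inductive walk through the $\gamma$-eigenspace force $u$ to be diagonal exactly as you outline.
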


\proof{
Due to Lemma \ref{LemmaUnitStab} and Lemma \ref{lem:symSym5} it remains to show that for any unitary $u\in U(d)$ the equation
\begin{align}
 u^{\otimes n} \ket{\Psi_{n,d}} = \ket{\Psi_{n,d}} \label{eq:EVn}
\end{align}
implies that $u^{\otimes n} = \one$. We first show that Eq. (\ref{eq:EVn}) implies that $u$ is diagonal and then that $u^{\otimes n} = \one$.

Considering the reduced state of the first two subsystems, $\rho_{n,d}^{(1,2)}= \tr_{3,\ldots,n}(\ket{\Psi_{n,d}}\bra{\Psi_{n,d}})$, Eq. (\ref{eq:EV5}) implies that
 \begin{align}
  (u\otimes u) \rho_{n,d}^{(1,2)} (u^{\dagger} \otimes u^{\dagger}) = \rho_{n,d}^{(1,2)}. \label{eq:reduced}
 \end{align}
Note that for $n>6$ we have that $k < n-2$, where $k$ is defined below Eq. (\ref{eq:stateA}) \footnote{This can again be shown using Euler's totient function, $\phi(n)$ (see Eq. (\ref{eq:Euler})).}. This fact simplifies the computation of $\rho_{n,d}^{(1,2)}$ (in contrast to the case $n=5$, where $k=3$; see Lemma \ref{lem:trivstabn5}).
It is then easy to see that
\begin{align}
\rho_{n,d}^{(1,2)} &= \alpha (\ket{0}\bra{0}^{\otimes 2} + \ket{d-1}\bra{d-1}^{\otimes 2})+\beta \sum_{j=1}^{d-2} \ket{j}\bra{j}^{\otimes 2} \nonumber \\
&+ \gamma \sum_{j=1}^{d-1} \ket{D_{1,2}(j)}\bra{D_{1,2}(j)} \label{Eq_Specdec}
\end{align}
with
\begin{align*}
 &\alpha = {{n-1}\choose{k-1}}+{{n-2}\choose{k}}+1,\\
 &\beta = {{n-2}\choose{k}}+{{n-2}\choose{k-2}}+1,\\
 &\gamma = {{n-2}\choose{k-1}}.
\end{align*}
Using Eq. (\ref{eq:reduced}) we now prove that $u$ is diagonal. In order to do so, we first show that $u\ket{0} = u_0 \ket{0}$ for some phase $u_0$.
As Eq. (\ref{Eq_Specdec}) is the spectral decomposition of $\rho_{n,d}^{(1,2)}$ and since $\alpha > \beta > \gamma$ (in fact, $\alpha=\beta+\gamma$, as the state is critical), the following equation must hold,
\begin{align}
 &u\otimes u \ket{00} = \phi_{0}\ket{00} + \phi_{d-1}\ket{d-1 d-1}, \label{eq:ES5_1}
\end{align}
for some coefficients $\phi_{0},\phi_{d-1}$.
Note that the local operator $u \otimes u$ cannot change the Schmidt rank of a state. Hence, the state on the right-hand side of Eq.  (\ref{eq:ES5_1}) must have Schmidt rank 1. That is, either $u\ket{0} = u_0\ket{0}$ or $u\ket{0} = u_{0} \ket{d-1}$ holds for some phase $u_0$. It is easy to see that only the former can fulfill Eq. (\ref{eq:EVn}). Hence, $u\ket{0} = u_0\ket{0}$ holds.\\

Next, we show that if $u\ket{k} = u_k\ket{k}$ for $k<i$ then also $u\ket{i}=u_i\ket{i}$ holds, for $i \in \{1,\ldots,d-1\}$. We consider the eigenspace to eigenvalue $\gamma$, which is spanned by the states $\{\ket{D_{1,2}(i)}\}_{i=1}^{d-1}$. Using that $u\ket{i-1} = u_{i-1}\ket{i-1}$ we obtain,
\begin{align*}
 u\otimes u \ket{D_{1,2}(i)} &= u_{i-1}(u\ket{i}\ket{i-1} +\ket{i-1}u\ket{i}),
\end{align*}
for $i \in \{1,\ldots,d-1\}$.
This state has to be an element of the eigenspace to eigenvalue $\gamma$. It is easy to see that this is only possible if $u\ket{i} = u_i \ket{i}$. Combined with $u\ket{0} = u_0\ket{0}$ we have that $u$ is diagonal.\\

We show next that for any diagonal $u$ fulfilling Eq. (\ref{eq:EVn}) it must hold that $u^{\otimes n} = \one$. Using the notation $u = \sum_{i=0}^{d-1} u_i \ket{i}\bra{i}$, it is straightforward to show that Eq. (\ref{eq:EVn}) is equivalent to
\begin{align}
& u_i^n = 1 \ \text{for} \ i \in \{0,\ldots,d-1\}, \label{eq:phases1}\\
& u_i^k u_{i-1}^{n-k} = 1 \ \text{for} \ i \in \{1,\ldots,d-1\} \label{eq:phases2}.
\end{align}
Note that $u$ is only uniquely determined up to a global phase $\omega_n^m$, where $\omega_n = \exp(2\pi i/n)$ and $m\in\mathbb{N}$. As $u_0^n = 1$ we can choose $u_0 = 1$ without loss of generality. Equations (\ref{eq:phases1}-\ref{eq:phases2}) then reduce to
\begin{align}
\label{eq:phases3}
 u_i^n = 1 \ \text{and} \ u_i^k u_{i-1}^{n-k} = 1 \ \text{for} \ i \in \{1,\ldots,d-1\}.
\end{align}
We now prove inductively that these equations together with $u_0 = 1$ imply that $u_i = 1$ for all $i$. That is, we show that $u_{i-1} = 1$ implies that $u_i=1$. Due to  Eq. (\ref{eq:phases3}) we have that $u_{i-1} = 1$ implies that $u_i^k = 1$. As $\text{gcd}(n,k) = 1$, a well known result form number theory implies that, there exist $a,b \in \mathbb{Z}$ such that $a n + b k = 1$. Hence, we have
\begin{align}
 u_i = u_i^{a n + b k} = 1.
\end{align}
We hence proved that $u_i = 1$ for all $i$, which implies $u^{\otimes n} = \one$. This proves the assertion. \qed
}\\

In the proof for $n>6$ (where $k < n-2$) we used Eq. (\ref{Eq_Specdec}). Let us now consider the case $n = 5$ (where $k = 3 = n-2$) for which we need to prove the statement differently.

\begin{lemma}
\label{lem:trivstabn5}
 The stabilizer of $\ket{\Psi_{5,d}}$ is trivial for $d\geq 2$, i.e. $\widetilde{G}_{\Psi_{n,d}} = \{\one\}$.
\end{lemma}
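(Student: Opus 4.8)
Steps (1)--(2) and (4) of the general scheme are literally those of Lemma \ref{lem:trivstab7}, so I would only sketch them. By Lemma \ref{LemmaUnitStab} and Lemma \ref{lem:symSym5} it is enough to show that any $u\in U(d)$ obeying $u^{\otimes 5}\ket{\Psi_{5,d}}=\ket{\Psi_{5,d}}$ satisfies $u^{\otimes 5}=\one$; and once $u$ is known to be diagonal, say $u=\sum_i u_i\ket{i}\bra{i}$, the fixed-point equation reduces exactly to Eqs. (\ref{eq:phases1})--(\ref{eq:phases2}) specialized to $n=5$, $k=3$, namely $u_i^5=1$ and $u_i^3u_{i-1}^2=1$. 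Since $\gcd(5,3)=1$, the same number-theoretic induction (using integers $a,b$ with $5a+3b=1$) as in Lemma \ref{lem:trivstab7} then forces $u_i=1$ for all $i$ and hence $u^{\otimes 5}=\one$. The entire novelty of the case $n=5$ therefore lies in establishing that $u$ must be diagonal.

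The reason the diagonality argument must change is that for $n=5$ one has $k=3=n-2$, while Lemma \ref{lem:trivstab7} relied on $k<n-2$. I would compute the two-party marginal $\rho^{(1,2)}_{5,d}=\tr_{3,4,5}(\ket{\Psi_{5,d}}\bra{\Psi_{5,d}})$ from the bipartition (\ref{eq:decPsi}) with $l=2$. Because $k=n-2$, the environment Dicke factor on subsystems $3,4,5$ attached to the lowest-weight part of $\ket{D_{k,5}(j)}$ collapses to $\ket{j}^{\otimes 3}$, which is exactly the environment of the neighbouring product term $c_j\ket{j}^{\otimes 5}$. Hence, unlike in Eq. (\ref{Eq_Specdec}), consecutive diagonal components $\ket{jj}$ and $\ket{(j+1)(j+1)}$ now share an environment vector, and the block of $\rho^{(1,2)}_{5,d}$ on $V_{\mathrm{diag}}=\mathrm{span}\{\ket{jj}\}_{j=0}^{d-1}$ is no longer diagonal but an irreducible tridiagonal (Jacobi) matrix $M$ with strictly positive nearest-neighbour entries $c_1,\dots,c_{d-1}$. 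The orthogonal block, on $V_{\mathrm{off}}=\mathrm{span}\{\ket{D_{1,2}(j)}\}_{j=1}^{d-1}$, remains the clean scalar $\gamma\,\one$ with $\gamma=\binom{3}{2}=3$.

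The fixed-point equation implies $[\,u\otimes u,\,\rho^{(1,2)}_{5,d}\,]=0$, so $u\otimes u$ preserves every eigenspace of $\rho^{(1,2)}_{5,d}$. An irreducible Jacobi matrix has simple spectrum, and a short three-term-recurrence computation of the relevant determinants (giving $\det(M-3\one)=20\neq 0$ and $\det M>0$) shows that the $d$ eigenvalues of $M$ are positive, pairwise distinct, and all different from $\gamma=3$. Consequently each of them yields a one-dimensional eigenspace of $\rho^{(1,2)}_{5,d}$ lying inside $V_{\mathrm{diag}}$, so $u\otimes u$ preserves $V_{\mathrm{diag}}$. Since $u\otimes u\ket{jj}=(u\ket{j})^{\otimes 2}$ must then lie in $\mathrm{span}\{\ket{kk}\}$, and a symmetric square $(u\ket{j})^{\otimes 2}$ lies in this span only when $u\ket{j}$ is proportional to a single basis vector, $u$ is forced to be a monomial unitary $u\ket{j}=u_j\ket{\sigma(j)}$ with $\sigma$ a permutation. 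Reinserting this into $u^{\otimes 5}\ket{\Psi_{5,d}}=\ket{\Psi_{5,d}}$, the distinct end-coefficients $c_0=\sqrt{7}\neq c_{d-1}=\sqrt{5}$ (both different from the bulk value $1$) fix $\sigma(0)=0$, while the adjacency pattern of the terms $\ket{D_{3,5}(j)}$ (three copies of $j$, two of $j-1$) forces $\sigma(j)=\sigma(j-1)+1$; together these give $\sigma=\mathrm{id}$, so $u$ is diagonal and the reduction is complete.

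The main obstacle is exactly this middle step: the coincidence $k=n-2$ destroys the diagonal spectral form (\ref{Eq_Specdec}), replacing the transparent two-dimensional top eigenspace $\mathrm{span}\{\ket{00},\ket{(d-1)(d-1)}\}$ (from which $u\ket{0}\propto\ket{0}$ was read off immediately for $n>6$) by the tridiagonal block $M$. The work is in controlling $M$ -- proving that its spectrum is simple, positive, and avoids the value $\gamma=3$ -- so that $u\otimes u$ can again be shown to preserve $V_{\mathrm{diag}}$ and $u$ be driven into monomial, and then diagonal, form. Everything on either side of this step is identical to the proof of Lemma \ref{lem:trivstab7}.
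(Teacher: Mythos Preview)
Your overall strategy is sound and the four-step scaffold matches the paper, but there is a small numerical slip in step (3) and your diagonality argument is genuinely different from the paper's, so let me address both.

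\medskip
\textbf{The slip.} The block of $\rho_{5,d}^{(1,2)}$ on $V_{\mathrm{off}}=\mathrm{span}\{\ket{D_{1,2}(j)}\}$ is indeed $\binom{3}{2}\sum_j\ket{D_{1,2}(j)}\bra{D_{1,2}(j)}$, but since $\|\ket{D_{1,2}(j)}\|^2=2$, the actual eigenvalue of $\rho_{5,d}^{(1,2)}$ on this subspace is $2\gamma=6$, not $\gamma=3$. So the determinant you must show nonvanishing is $\det(M-6\one)$, not $\det(M-3\one)$. Fortunately the same three-term recurrence gives $\det(M-6\one)\in\{-1,-16,17\}$ according to $d\bmod 3$, so your argument survives with the correct value substituted. (Your computation $\det(M-3\one)=20$ is correct but irrelevant.)

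\medskip
\textbf{Comparison with the paper.} The paper never analyses the Jacobi block $M$ at all. Instead it exploits only the kernel/support dichotomy: one checks directly that $K_d=\ker\rho_{5,d}^{(1,2)}=Q\oplus S_-$ and $K_d^\perp=P\oplus S_+$, so $u\otimes u$ must preserve $K_d^\perp$. Then, rather than diagonalising anything, the paper takes the partial trace of the projector $\pi_{1,2}^\perp$ onto $K_d^\perp$, which is a diagonal single-site operator with eigenvalue $3/2$ at $\ket{0},\ket{d-1}$ and $2$ elsewhere. Commutation of $u$ with this operator forces $u\ket{0}\in\mathrm{span}\{\ket{0},\ket{d-1}\}$; a Schmidt-rank argument then pins $u\ket{0}\propto\ket{0}$, and an inductive step using $(u\otimes u)\ket{D_{1,2}(i)}\in K_d^\perp$ shows $u\ket{i}\propto\ket{i}$ directly, bypassing the monomial/permutation stage entirely. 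Your route---simple spectrum of the irreducible Jacobi block $M$, disjointness from $\{0,6\}$, hence $u\otimes u$ preserves $V_{\mathrm{diag}}$, hence $u$ is monomial, hence $\sigma=\mathrm{id}$---is closer in spirit to the $n>6$ argument and is perfectly valid, but it trades the paper's soft projector trick for a more explicit spectral computation.
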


\proof{
Due to Lemma \ref{LemmaUnitStab} and Lemma \ref{lem:symSym4} it is again sufficient to show that any unitary $u$ that fulfills
\begin{align}
 u^{\otimes 5}\ket{\Psi_{5,d}} = \ket{\Psi_{5,d}} \label{eq:EV5}
\end{align}
fulfills $u^{\otimes 5} = \one$.\\

We first consider the necessary condition
 \begin{align}
  (u\otimes u) \rho_{5,d}^{(1,2)} (u^{\dagger} \otimes u^{\dagger}) = \rho_{5,d}^{(1,2)}, \label{eq:red1}
 \end{align}
where $\rho_{5,d}^{(1,2)} = \tr_{3,4,5}(\ket{\Psi_{5,d}}\bra{\Psi_{5,d}})$. Let $K_{d}$ denote the kernel of $\rho_{5,d}^{(1,2)}$ and let $K_{d}^{\bot}$ denote the orthogonal complement of $K_d$. Clearly, $K_{d}$ and $K_d^{\bot}$ have to be invariant under $u\otimes u$. In fact, $\ket{\psi} \in K_d$ iff $(u\otimes u)\ket{\psi} \in K_{d}$ (and similarly for $K_d^{\bot}$). In the following we use this to prove that $u$ is diagonal. Before that we have to characterize $K_d$ and $K_d^{\bot}$. It is straightforward to see that the following holds.
\begin{align}
 &K_d = Q \oplus S_{-}, \label{eq:kernel}\\
 &K_{d}^{\bot} = P \oplus S_{+}, \label{eq:orthkernel}
\end{align}
where
\begin{align}
&Q = \text{span}\{\ket{i}\ket{j}| 0\leq i,j \leq d-1, |i-j|>1\}, \label{eq:Q}\\
&S_{-} = \text{span}\{\ket{i}\ket{i-1}-\ket{i-1}\ket{i}\}_{i=1}^{d-1}.\\
&P = \text{span}\{\ket{i}\ket{i}\}_{i=0}^{d-1},\\
&S_{+} = \text{span}\{\ket{D_{1,2}(i)}\}_{i=1}^{d-1}. \label{eq:Splus}
\end{align}
Let us first prove that $u\ket{0} = u_0\ket{u}$ for some phase $u_0$.
Let $\pi^{\bot}$ denote the projector onto $K_d^{\bot}$, i.e.
\begin{align}
 \pi_{1,2}^{\bot} = \sum_{i=0}^{d-1} \ket{i}\bra{i}^{\otimes 2} + \frac{1}{2}\sum_{i=1}^{d-1} \ket{D_{1,2}(i)}\bra{D_{1,2}(i)}.
\end{align}
As $K_d^{\bot}$ is invariant under $u\otimes u$ it holds that
\begin{align}
 (u\otimes u)\pi_{1,2}^{\bot}(u^\dagger \otimes u^\dagger) = \pi_{1,2}^{\bot}. \label{eq:orthKinv}
\end{align}
For $d = 2$ it is easy to see that this equation and Eq. (\ref{eq:EV5}) can only be fulfilled if $u\ket{0} = u_0\ket{u}$ for some phase $u_0$.
For $d > 2$ let $\pi_{1}^{\bot} = \tr_2(\pi_{1,2}^{\bot})$ denote the reduced state of $\pi_{1,2}^{\bot}$, i.e.
\begin{align}
 \pi_1^{\bot} = \frac{3}{2}(\ket{0}\bra{0} + \ket{d-1}\bra{d-1}) + 2\sum_{i=1}^{d-2} \ket{i}\bra{i}.
\end{align}
Then Eq. (\ref{eq:orthKinv}) implies that
\begin{align}
 u \pi_1^{\bot} u^\dagger = \pi_1^{\bot}. \label{eq:local}
\end{align}
 It is easy to see that Eq. (\ref{eq:local}) can only be satisfied if $u\ket{0} \in \text{span}\{\ket{0},\ket{d-1}\}$. Combining this with the fact that $(u\otimes u)\ket{00} \in K_d^{\bot}$, we see that $(u\otimes u)\ket{00} \in \text{span}\{\ket{00}, \ket{d-1d-1}\}$.
However, as $u \otimes u$ cannot change the Schmidt rank of a state either $u\ket{0} = u_0 \ket{0}$ or
$u\ket{0} = u_0 \ket{d-1}$ for some phase $u_0$. It is easy to see that the former cannot fulfill Eq. (\ref{eq:EV5}) and thus $u\ket{0} = u_0\ket{0}$ holds.

Next, we show that if $u\ket{k} = u_{k}\ket{k}$ holds for $k < i$, then also  $u\ket{i} = u_{i}\ket{i}$ holds, where $i \in \{1,\ldots,d-1\}$. Using that
$u\ket{i-1} = u_{i-1}\ket{i-1}$ we get that
\begin{align}
(u\otimes u)\ket{D_{1,2}(i)} = u_{i-1}(u\ket{i}\ket{i-1} + \ket{i-1}u\ket{i}) \in K_d^{\bot}. \nonumber
\end{align}
The state on the right-hand side can only be an element of $K_d^{\bot}$ if $u\ket{i} = u_i \ket{i}$.
Combined with the fact that $u\ket{0} = u_0\ket{0}$ we see that $u$ is diagonal.\\

That $u^{\otimes 5} = \one$ holds for diagonal $u$ can then be proven in the same way as it was done in the proof of Lemma \ref{lem:trivstab7} for $n>6$. \qed
}\\

Note that this proof method could also be used to show that $\ket{\Psi_{n,d}}$ has trivial stabilizer for $n>6$. However, we think that the proof for $n>6$ presented after Lemma \ref{lem:trivstab7} is more concise.

\subsubsection{A critical $(n=4)$-qudit state with local dimension $d > 2$ and trivial stabilizer}
\label{sec:n4}

In this section we consider $4$-qudit systems with local dimension $d>2$. We define the (unnormalized) state
\begin{align}
 \ket{\Phi_{4,d}} &= \sqrt{15}c_0 \ket{0}^{\otimes 4} \nonumber \\
 &+ \sum_{i=1}^{d-1} c_i \left\{\ket{D_{3,4}(i)} + \ket{D_{2,4}(i)} -3 \ket{i}^{\otimes 4}\right\},
\end{align}
where
\begin{align}
 c_{i} = \frac{1}{3} \sqrt{1-\left(-\frac{4}{15}\right)^{d-i}}, \ \text{for} \ i \in \{0,\ldots,d-1\}. \label{eq:coeffs4}
\end{align}
It is easy to show that this state is critical. In what follows we prove that $\ket{\Phi_{4,d}}$ has only trivial symmetries.\\

Note that the following lemma holds.
\begin{lemma}
 \label{lem:symSym4}
 If $v \in \tilde{K}$ fulfills $v\ket{\Phi_{4,d}} = \ket{\Phi_{4,d}}$ for $d>2$ then $v = u^{\otimes 4}$ for some $u \in U(d)$.
\end{lemma}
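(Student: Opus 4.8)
The plan is to follow the template of Lemma \ref{lem:symSym5}: since $\ket{\Phi_{4,d}}$ is permutationally symmetric, Lemma \ref{lem:B} (in its unitary refinement, cf. the discussion after Eq. (\ref{eq:B2})) reduces the statement to the following claim: the only unitary $B$ satisfying $B\otimes B^{-1}\otimes\one^{\otimes 2}\ket{\Phi_{4,d}}=\ket{\Phi_{4,d}}$ is $B=b\one$ with $|b|=1$. From $B=b\one$ the desired conclusion $v=u^{\otimes 4}$ follows at once. So the whole proof rests on analyzing this single matrix equation.

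To do that I would expand $\ket{\Phi_{4,d}}$ across the $\{1,2\}|\{3,4\}$ cut using Eq. (\ref{eq:decDicke}) and then contract the last two subsystems against a carefully chosen family of vectors. The crucial observation is that the vectors $\{\ket{D_{1,2}(i)}\}_{i=1}^{d-1}$ on subsystems $3,4$ are mutually orthogonal and orthogonal to every $\ket{jj}$, and that the only parts of $\ket{\Phi_{4,d}}$ whose $\{3,4\}$-component lies along $\ket{D_{1,2}(i)}$ come from the single index-$i$ pieces $\ket{D_{3,4}(i)}$ and $\ket{D_{2,4}(i)}$. Hence applying $\one^{\otimes 2}\otimes\bra{D_{1,2}(i)}$ to both sides produces the clean, fully decoupled relation
\begin{align}
(B\otimes B^{-1})\bigl(\ket{ii}+\ket{i,i-1}+\ket{i-1,i}\bigr)=\ket{ii}+\ket{i,i-1}+\ket{i-1,i} \nonumber
\end{align}
for every $i\in\{1,\ldots,d-1\}$.

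From each such relation I would read off that $B$ preserves the two-dimensional space $V_i=\text{span}\{\ket{i-1},\ket i\}$: writing it in operator form as $B T_i=T_i B^{T}$ with $T_i=\op{i}{i}+\op{i}{i-1}+\op{i-1}{i}$ and comparing entries in rows $p\notin\{i-1,i\}$ forces $B\ket{i-1},B\ket{i}\in V_i$. Since $V_i\cap V_{i+1}=\text{span}\{\ket i\}$ for $1\le i\le d-2$, invariance of both spaces gives $B\ket i=b_i\ket i$ for all interior $i$, and unitarity of $B$ then promotes the two boundary vectors $\ket 0$ and $\ket{d-1}$ to eigenvectors as well (a unitary preserving $V_1$ and fixing the line $\ket 1$ also fixes $\ket 0$, and symmetrically at the top). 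Thus $B$ is diagonal, and reinserting $B=\sum_i b_i\op{i}{i}$ into the same relations yields $b_i b_{i-1}^{-1}=1$, so all $b_i$ coincide and $B=b\one$.

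The step I expect to be most delicate is the choice of contraction together with the consecutive-intersection argument, and this is also exactly where the hypothesis $d>2$ enters. The naive attempt — contracting against the diagonal product states $\ket{jj}$ on subsystems $3,4$, as in the $n>6$ proof of Lemma \ref{lem:symSym5} — fails here because the $2|2$ cut is too coarse to separate the index blocks: the cross terms $\ket{ii}_{12}\ket{i-1,i-1}_{34}$ of $\ket{D_{2,4}(i)}$ couple block $j$ to block $j+1$, so the diagonal contractions produce a system of equations entangled across all indices. Contracting against the symmetric off-diagonal modes $\ket{D_{1,2}(i)}$ is precisely what decouples the problem and exposes the invariant subspaces $V_i$; once at least two of these overlap — which needs $d\ge 3$ — their consecutive intersections pin down $B$. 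I would stress that this argument uses only that the coefficients $c_i$ are nonzero; the fine tuning of the $c_i$ (and the prefactors $\sqrt{15}$ and $-3$) is not needed for the present lemma but for the subsequent analysis of $u^{\otimes 4}\ket{\Phi_{4,d}}=\ket{\Phi_{4,d}}$, where it will force the global unitary $u$ to be diagonal.
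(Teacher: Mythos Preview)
Your proposal is correct and follows the same overall scheme the paper points to (it omits the proof, saying it is ``similar to the proof of Lemma~\ref{lem:symSym5}''): reduce via Lemma~\ref{lem:B} to the $B\otimes B^{-1}$ equation and analyze it by contracting the last two subsystems. The one noteworthy variation is that instead of the two-step route of Lemma~\ref{lem:symSym5} (contract against $\ket{jj}$ to get diagonality, then against the $D$-states to equalize phases), you contract \emph{only} against $\ket{D_{1,2}(i)}$ and extract diagonality from the chain of invariant two-dimensional subspaces $V_i$; this is cleaner here precisely because, as you observe, the $\ket{jj}$ contraction for $\ket{\Phi_{4,d}}$ couples neighboring index blocks through the $\ket{D_{2,4}(i)}$ pieces, whereas the $\ket{D_{1,2}(i)}$ contraction decouples them completely.
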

 The proof is similar to the proof of Lemma \ref{lem:symSym5} and will be omitted. Using this lemma, we prove the following lemma.

\begin{lemma}
\label{lem:trivstab3}
For $d > 2$ the stabilizer of $\ket{\Phi_{4,d}}$ is trivial, i.e. $\widetilde{G}_{\Phi_{4,d}} = \{\one\}$.
\end{lemma}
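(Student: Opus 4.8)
The plan is to follow the same four-step strategy already used for $\ket{\Psi_{n,d}}$ in Lemmas \ref{lem:trivstab7} and \ref{lem:trivstabn5}, now specialized to $n=4$. First, since $\ket{\Phi_{4,d}}$ is critical, Lemma \ref{LemmaUnitStab} reduces the claim $\widetilde{G}_{\Phi_{4,d}} = \{\one\}$ to showing that the unitary stabilizer is trivial, i.e. $\widetilde{K}_{\Phi_{4,d}} = \{\one\}$. Second, Lemma \ref{lem:symSym4} tells us that any $v \in \widetilde{K}_{\Phi_{4,d}}$ has the form $v = u^{\otimes 4}$ for some $u \in U(d)$, so it remains only to show that $u^{\otimes 4}\ket{\Phi_{4,d}} = \ket{\Phi_{4,d}}$ forces $u^{\otimes 4} = \one$.

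Third -- and this is where the specific coefficients $c_i$ enter -- I would pass to the two-body reduced state $\rho_{4,d}^{(1,2)} = \tr_{3,4}(\ket{\Phi_{4,d}}\bra{\Phi_{4,d}})$ and exploit the necessary condition $(u\otimes u)\rho_{4,d}^{(1,2)}(u^\dagger \otimes u^\dagger) = \rho_{4,d}^{(1,2)}$. Writing $\ket{\Phi_{4,d}}$ in the $12|34$ split via the Dicke decomposition (\ref{eq:decDicke}) with $l=2$ (so that $\ket{D_{3,4}(i)} = \ket{D_{1,2}(i)}\ket{ii} + \ket{ii}\ket{D_{1,2}(i)}$ and $\ket{D_{2,4}(i)} = \ket{i-1,i-1}\ket{ii} + \ket{D_{1,2}(i)}\ket{D_{1,2}(i)} + \ket{ii}\ket{i-1,i-1}$) lets me compute $\rho_{4,d}^{(1,2)}$ explicitly and read off its spectral decomposition. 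The coefficients (\ref{eq:coeffs4}) are engineered precisely so that this spectrum has the gap structure needed for the standard \emph{climb-the-ladder} induction: an eigenvalue/Schmidt-rank argument first forces $u\ket{0} = u_0\ket{0}$ for some phase $u_0$, and then, assuming $u\ket{k} = u_k\ket{k}$ for all $k<i$, the vector $(u\otimes u)\ket{D_{1,2}(i)} = u_{i-1}(u\ket{i}\ket{i-1} + \ket{i-1}u\ket{i})$ is forced to lie in the eigenspace spanned by the $\ket{D_{1,2}(j)}$, which is possible only if $u\ket{i} = u_i\ket{i}$. Hence $u$ is diagonal.

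Fourth, writing $u = \sum_i u_i \ket{i}\bra{i}$ and inserting into $u^{\otimes 4}\ket{\Phi_{4,d}} = \ket{\Phi_{4,d}}$, the equation splits over the distinct basis-state types (which, for consecutive $i$, never overlap) and yields $u_i^4 = 1$ from the $\ket{i}^{\otimes 4}$ terms, $u_i^3 u_{i-1} = 1$ from $\ket{D_{3,4}(i)}$, and $u_i^2 u_{i-1}^2 = 1$ from $\ket{D_{2,4}(i)}$. Fixing the global phase by $u_0 = 1$, the first two relations give $u_1^3 = 1$ and $u_1^4 = 1$, hence $u_1 = u_1^4 (u_1^3)^{-1} = 1$; an obvious induction (using $\gcd(3,4)=1$, in the same spirit as the $\gcd(n,k)=1$ argument for $\ket{\Psi_{n,d}}$) then gives $u_i = 1$ for all $i$, so $u^{\otimes 4} = \one$ as desired.

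The main obstacle will be the third step: explicitly diagonalizing $\rho_{4,d}^{(1,2)}$ and verifying that the coefficients (\ref{eq:coeffs4}) really do produce the eigenvalue separations required to run the induction. The nuisance here is the tridiagonal coupling between consecutive levels created by the $\ket{ii}$ and $\ket{i-1,i-1}$ contributions of $\ket{D_{2,4}(i)}$, which mixes the $\ket{jj}$ sectors; one must check that, despite this mixing, the eigenspace supporting the $\ket{D_{1,2}(i)}$ stays cleanly separated from the rest, so that the ladder argument closes. I would also expect the $d=2$ boundary case to require a short separate remark (as in Lemma \ref{lem:trivstabn5}), since the reduction onto the non-kernel subspace degenerates when there are only two levels.
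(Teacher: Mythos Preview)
Your four-step outline matches the paper's exactly, and step~4 (the phase equations $u_i^4=1$, $u_i^3u_{i-1}=1$, $u_i^2u_{i-1}^2=1$ and the $\gcd(3,4)=1$ induction) is identical to what the paper does. The divergence is in step~3.

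You plan to read off a spectral decomposition of $\rho_{4,d}^{(1,2)}$ and run the ladder inside ``the eigenspace spanned by the $\ket{D_{1,2}(j)}$''. But from your own $12|34$ split one checks that the $34$-partner of $\ket{D_{1,2}(i)}_{12}$ (namely $\ket{ii}+\ket{D_{1,2}(i)}$) has nonzero overlap with the $34$-partner of $\ket{ii}_{12}$ (namely $\ket{D_{1,2}(i)}+\ket{i\!-\!1,i\!-\!1}-3\ket{ii}$); hence $\ket{D_{1,2}(i)}$ and $\ket{ii}$ are coupled in $\rho_{4,d}^{(1,2)}$ and the $\ket{D_{1,2}(j)}$ are \emph{not} eigenvectors. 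There is no such eigenspace to work in.

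The paper sidesteps full diagonalization entirely. It proves instead (Observation~\ref{obs:appendA}) that the \emph{kernel} of $\rho_{4,d}^{(1,2)}$ is exactly $K_d=Q\oplus S_-$, so that $K_d^\bot=P\oplus S_+$ is the $(2d-1)$-dimensional span of all $\ket{ii}$ and $\ket{D_{1,2}(i)}$. This is precisely the purpose of the coefficients~(\ref{eq:coeffs4}): not eigenvalue separation, but invertibility of the tridiagonal matrix~$M$ encoding the $\ket{jj}$-sector coupling you noticed, established via a determinant recursion. Once $K_d^\bot$ is known, the ladder argument runs verbatim as in Lemma~\ref{lem:trivstabn5}: invariance of $K_d^\bot$ (rather than of a single eigenspace) under $u\otimes u$ forces $(u\otimes u)\ket{D_{1,2}(i)}\in K_d^\bot$, and this is already enough to conclude $u\ket{i}=u_i\ket{i}$. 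So your tridiagonal ``nuisance'' is not merely a computational hurdle on the way to a spectral decomposition---it is the whole content of step~3, and the resolution is a kernel argument, not a spectral one.

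(Your closing remark about a $d=2$ boundary case is moot: the lemma is stated only for $d>2$.)
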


\proof{

Due to Lemma \ref{LemmaUnitStab} and Lemma \ref{lem:symSym5} it remains to show that for any unitary $u\in U(d)$ the equation
\begin{align}
 u^{\otimes 4} \ket{\Phi_{4,d}} = \ket{\Phi_{4,d}} \label{eq:EV4}
\end{align}
implies that $u^{\otimes 4} = \one$. We first show that Eq. (\ref{eq:EV5}) implies that $u$ is diagonal and then that $u^{\otimes 4} = \one$.

Considering the reduced state of the first two subsystems, $\rho_{4,d}^{(1,2)}= \tr_{3,4}(\ket{\Phi_{4,d}}\bra{\Phi_{4,d}})$, Eq. (\ref{eq:EV4}) implies that
 \begin{align}
  (u\otimes u) \rho_{4,d}^{(1,2)} (u^{\dagger} \otimes u^{\dagger}) = \rho_{4,d}^{(1,2)}. \label{eq:red0}
 \end{align}

Let us denote by $K_d$ the kernel of $\rho_{4,d}^{(1,2)}$ and by $K_d^{\bot}$ the orthogonal complement of $K_d$. In Observation \ref{obs:appendA} we show that $K_d = Q \oplus S_{-}$ and $K_d^{\bot} = P \oplus S_{+}$, where $Q, S_-, P, S_+$ are given in Eqs. (\ref{eq:Q}-\ref{eq:Splus}). The proof that $u$ is diagonal can then be completed as in the proof of Lemma \ref{lem:trivstabn5} and will therefore be omitted.\\

Using that $u = \sum_{i=0}^{d-1} u_i \ket{i}\bra{i}$, Eq. (\ref{eq:EV4}) is equivalent to
\begin{align}
& u_i^4 = 1 \ \text{for} \ i \in \{0,\ldots,d-1\}, \label{eq:phases14}\\
& u_i^3 u_{i-1}^1 = 1 \ \text{for} \ i \in \{1,\ldots,d-1\} \label{eq:phases24}.\\
& u_i^2 u_{i-1}^2 = 1 \ \text{for} \ i \in \{1,\ldots,d-1\}.\label{eq:phases34}
\end{align}
Similar to the proof of Lemma \ref{lem:trivstab} we set, without loss of generality, $u_0 = 1$. It is then straightforward to see that Eqs. (\ref{eq:phases14}-\ref{eq:phases34}) only have the solution $u_i = 1$ for $i \in \{0,\ldots, d-1\}$. Hence, we proved that $u^{\otimes 4} = \one$ holds. \qed\\
}

 Here, we prove the following observation used in the proof of Lemma \ref{lem:trivstab3}.
 \begin{observation}
 \label{obs:appendA}
  Let $Q, S_-, P$ and $S_+$ be as defined in Eqs. (\ref{eq:Q} - \ref{eq:Splus}). Then the kernel of $\rho_{4,d}^{(1,2)}$ is $K_d = Q \oplus S_{-}$ and the orthogonal complement of the Kernel is $K_d^{\bot} = P \oplus S_{+}$.
 \end{observation}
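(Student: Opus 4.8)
The plan is to determine the support of $\rho_{4,d}^{(1,2)}$ directly from a Schmidt-type decomposition of $\ket{\Phi_{4,d}}$ across the $(1,2)\,|\,(3,4)$ cut and then to match it against $P\oplus S_+$. First I would apply Eq.~(\ref{eq:decDicke}) with $l=2$ to the two Dicke components. Using $\ket{D_{2,2}(i)}=\ket{ii}$, $\ket{D_{0,2}(i)}=\ket{(i-1)(i-1)}$ and $\ket{D_{1,2}(i)}=\ket{i}\ket{i-1}+\ket{i-1}\ket{i}$, this gives $\ket{D_{3,4}(i)}=\ket{D_{1,2}(i)}\ket{ii}+\ket{ii}\ket{D_{1,2}(i)}$ and $\ket{D_{2,4}(i)}=\ket{(i-1)(i-1)}\ket{ii}+\ket{D_{1,2}(i)}\ket{D_{1,2}(i)}+\ket{ii}\ket{(i-1)(i-1)}$. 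Hence every factor on subsystems $(1,2)$ appearing in $\ket{\Phi_{4,d}}$ is either a diagonal vector $\ket{jj}$ or some $\ket{D_{1,2}(i)}$, so all left Schmidt vectors lie in $P\oplus S_+$ and therefore $\mathrm{range}\,\rho_{4,d}^{(1,2)}\subseteq P\oplus S_+$.

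Next I would record the elementary fact that the four subspaces of Eqs.~(\ref{eq:Q})--(\ref{eq:Splus}) are mutually orthogonal and satisfy $P\oplus S_+\oplus S_-\oplus Q=\C^d\otimes\C^d$; indeed a dimension count gives $d+(d-1)+(d-1)+(d-1)(d-2)=d^2$. Consequently $(P\oplus S_+)^\perp=S_-\oplus Q$, and the inclusion above already yields $S_-\oplus Q\subseteq\ker\rho_{4,d}^{(1,2)}$. Since $\rho_{4,d}^{(1,2)}$ is self-adjoint and positive semidefinite, its kernel is the orthocomplement of its range, so the whole Observation follows \emph{provided} the inclusion can be upgraded to $\mathrm{range}\,\rho_{4,d}^{(1,2)}=P\oplus S_+$, i.e. $\mathrm{rank}\,\rho_{4,d}^{(1,2)}=2d-1$; the statement $K_d^\perp=P\oplus S_+$ is then automatic, being merely $(S_-\oplus Q)^\perp$.

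For the rank I would use that, the left factors being orthogonal, $\mathrm{rank}\,\rho_{4,d}^{(1,2)}$ equals the dimension of the span of the corresponding right factors. Collecting terms by left factor yields $R_{00}=\sqrt{15}c_0\ket{00}+c_1\ket{11}$, $R_{jj}=c_j\ket{(j-1)(j-1)}-3c_j\ket{jj}+c_{j+1}\ket{(j+1)(j+1)}+c_j\ket{D_{1,2}(j)}$ (truncated in the obvious way at the endpoints), and $R_{D(i)}=c_i\ket{ii}+c_i\ket{D_{1,2}(i)}$, all again lying in $P\oplus S_+$. To show these $2d-1$ vectors are independent, suppose $\sum_j\lambda_j R_{jj}+\sum_i\mu_i R_{D(i)}=0$. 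Matching $\ket{D_{1,2}(i)}$-components forces $\mu_i=-\lambda_i$, and the relation then collapses into $P$, where it reads $\lambda_0 R_{00}+\sum_{j\ge1}\lambda_j\big(R_{jj}-R_{D(j)}\big)=0$ with $R_{jj}-R_{D(j)}=c_j\ket{(j-1)(j-1)}-4c_j\ket{jj}+c_{j+1}\ket{(j+1)(j+1)}$. These are the rows of a tridiagonal $d\times d$ matrix; dividing row $j$ by the positive number $c_j$ makes it strictly diagonally dominant, the diagonal entries being $\sqrt{15}$ (top row) and $-4$ (the rest), while the off-diagonal row sums are $c_1/c_0$ and $1+c_{j+1}/c_j$. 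Because the coefficients of Eq.~(\ref{eq:coeffs4}) satisfy $c_i\in\tfrac13[\sqrt{209/225},\sqrt{19/15}]$, every ratio $c_{j+1}/c_j$ stays well below $3$ and $c_1/c_0<\sqrt{15}$, so nonsingularity follows from the Levy--Desplanques (Gershgorin) criterion; this forces $\lambda=\mu=0$.

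The genuine obstacle is this final full-rank step. The inclusion $\mathrm{range}\subseteq P\oplus S_+$ and the orthogonality bookkeeping are immediate, but showing that the kernel is no larger than $S_-\oplus Q$ is exactly where the engineered coefficients $c_i$ of Eq.~(\ref{eq:coeffs4}) (together with the $\sqrt{15}$ and $-3$ in $\ket{\Phi_{4,d}}$, which also render the state critical) enter decisively; the only care needed is the uniform-in-$d$ estimate $c_{j+1}/c_j<3$ that secures diagonal dominance.
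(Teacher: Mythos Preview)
Your argument is correct and lands on the very same tridiagonal matrix $M$ that the paper writes down; the structural part (range inclusion via the $(1,2)|(3,4)$ Schmidt split, then orthogonality bookkeeping) is just the dual of the paper's phrasing, which characterizes $K_d$ through the condition $_{1,2}\!\langle\psi|\Phi_{4,d}\rangle=0$ and then checks that no nonzero element of $P\oplus S_+$ satisfies it. The genuine difference is in how nonsingularity of $M$ is established. The paper expands $\det M$ via the standard three-term recurrence for tridiagonal determinants and argues that the resulting sequence $|f(k)|$ is monotone, whence $\det M=f(d)\neq 0$. You instead divide row $j$ by $c_j$ and invoke strict diagonal dominance (Levy--Desplanques/Gershgorin), using only the uniform bound $c_{j+1}/c_j<3$ extracted from the explicit interval $c_i\in\tfrac13[\sqrt{209/225},\sqrt{19/15}]$. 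Your route is a bit more robust and conceptually cleaner---it sidesteps tracking the recurrence and makes transparent why the particular coefficients in Eq.~(\ref{eq:coeffs4}) are more than sufficient---while the paper's recurrence gives the determinant itself, which is slightly more information than needed here.
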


\proof{
Note that $\ket{\psi} \in K_d$ iff
\begin{align}
 _{1,2}\langle\psi\ket{\Phi_{4,d}}_{1,2,3,4} = 0, \label{eq:inKernel}
\end{align}
where we explicitly labeled on which subsystems these states are define.
Using this, it is easy to see that $Q \oplus S_{-} \subset K_d$. As $\mathbb{C}^d \otimes \mathbb{C}^d = Q \oplus S_{-} \oplus P \oplus S_{+}$ it is therefore sufficient to show that $P \oplus S_{+}$ does not contain any nontrivial element of $K_d$, i.e. $(P \oplus S_{+}) \cap K_d = \{0\}$, in order to prove the observation.
That is, we have to show that an element $\ket{\psi} = \sum_{k=0}^{d-1} \alpha_k^* \ket{k}\ket{k} + \sum_{i=1}^{d-1}\beta_k^* \ket{D_{1,2}(k)} \in P \oplus S_{+}$ fulfills Eq. (\ref{eq:inKernel}) iff $\alpha_k, \beta_k = 0$ for all $k$.\\
First, it is easy to see that Eq. (\ref{eq:inKernel}) implies that $\beta_k = - \alpha_k/2$ for $k \in \{1,\ldots,d-1\}$. Using this and the notation $\vec{\alpha} = (\alpha_0,\ldots,\alpha_{d-1})^T$, it is straighforward to show that Eq. (\ref{eq:inKernel}) is equivalent to the system of linear equations
\begin{align}
 M \vec{\alpha} = 0, \label{eq:linsys}
\end{align}
where
\begin{widetext}
\begin{align*}
 M = \begin{pmatrix}
      \begin{matrix}
       \sqrt{15}c_{0} & c_{1} & 0 &0 & \ldots & 0 & 0 & 0 & 0\\
       c_{1} & -4c_{1} & c_{2} & 0 & 0 & \ldots & 0 & 0 & 0\\
       0 & c_{2} & -4c_{2} & c_{3} & 0 & \ldots & 0 & 0 & 0\\
       \vdots &  & \ddots & & & & \vdots & \vdots & \vdots\\
       0 & 0 & 0 & 0 & 0 & \ldots & c_{d-2} & -4c_{d-2} & c_{d-1}\\
       0 & 0 & 0 & 0 & 0 & \ldots & 0 & c_{d-1} & -4c_{d-1}
      \end{matrix}
     \end{pmatrix}.
\end{align*}
\end{widetext}
As $M$ is a tridiagonal matrix its determinant $\text{det}(M)$ can be computed via the following reccurence relation (see e.g. \cite{horn}).
\begin{align}
 &f(k) = -4c_{k-1}f(k-1) - c_{k-1}^2 f(k-2),\\
 &\text{for} \ k \in \{2,\ldots,d\}, \label{eq:sequence} \nonumber\\
 &f(1) = \sqrt{15} c_{0}, f(0) = 1,
\end{align}
where $\text{det}(M) = f(d)$.
Using the definition of $\{c_{k}\}_{k=0}^{d-1}$ one can show that $\{|f(k)|\}_{k=1}^{d}$ is monotonically increasing and hence $\text{det}(M) = f(d) \neq 0$ holds. That is, $M$ is invertible, $(P \oplus S_{+}) \cap K_d = \{0\}$ and therefore $K_d = Q \oplus S_{-}$, which proves the assertion. \qed\\
}

\subsubsection{A critical $(n=6)$-qudit state with trivial stabilizer for $d \geq 2$}
\label{sec:n6}

In this section we present a state in $(\C^d)^{\otimes 6}$, $d\geq 2$, with trivial stabilizer.
Before that, we introduce for $2 \leq j \leq d-1$ the (unnormalized) state
\begin{align}
 \ket{\phi_{6}(j)} = \ket{\pi(\ket{j}\ket{j-1}^{\otimes 3}\ket{j-2}^{\otimes 2})},
\end{align}
where $\ket{\pi(\ket{\psi})} = \sum_{\ket{\phi} \in \pi(\ket{\psi})} \ket{\phi}$ and $\pi(\ket{\psi}) = \{P_{\sigma}\ket{\psi} \ | \ \sigma \in S_n\}$ as in the main text.
For $d>3$ we then introduce the (unnormalized) critical state
\begin{align}
 \ket{\Phi_{6,d}} = \sqrt{\frac{194}{5}}\ket{0}^{\otimes 6} + \sqrt{\frac{11}{5}} \ket{D_{5,6}(1)} + \sum_{j=2}^{d-3} \ket{j}^{\otimes 6} \nonumber \\
 + \sqrt{21} \ket{d-2}^{\otimes 6} + \sqrt{51} \ket{d-1}^{\otimes 6} + \sum_{j=2}^{d-1} \ket{\phi_{6}(j)}. \label{eq:6qdstate}
\end{align}
Note that this state is not defined for $d = 2$ and not critical for $d=3$. However, for these case we can define the critical states,
\begin{align*}
 &\ket{\Phi_{6,2}} = 2 \ket{0}^{\otimes 6} + \ket{D_{5,6}(1)} + \ket{D_{3,6}(1)},\\
 &\ket{\Phi_{6,3}} = 3\ket{0}^{\otimes 6} + \ket{D_{5,6}(1)} + \frac{1}{\sqrt{2}}\ket{\phi_6(2)} + \sqrt{15} \ket{2}^{\otimes 6}.
\end{align*}
Note that, for $d>3$, the state $\ket{\Phi_{6,d}}$ can also be expressed as
\begin{widetext}
\begin{align}
 \ket{\Phi_{6,d}} &= c_0\ket{0}^{\otimes 2}\ket{0}^{\otimes 4} + c_1\ket{1}^{\otimes 2}\ket{D_{3,4}(1)} + c_1 \ket{D_{1,2}(1)}\ket{1}^{\otimes 4} + \sum_{j=2}^{d-3} \ket{j}^{\otimes 2}\ket{j}^{\otimes 4}+ \sum_{j=2}^{d-1} \{ \ket{j-2}^{\otimes 2}\ket{D_{1,4}(j)} \nonumber \\
 &+ \ket{j-1}^{\otimes 2}\ket{\alpha_j} + \ket{D_{1,2}(j)}\ket{D_{2,4}(j-1)} + (\ket{j}\ket{j-2}+\ket{j-2}\ket{j})\ket{D_{3,4}(j-1)} + \ket{D_{1,2}(j-1)}\ket{\beta_j} \}\nonumber\\
 &+ c_{d-2}\ket{d-2}^{\otimes 2}\ket{d-2}^{\otimes 4} + c_{d-1}\ket{d-1}^{\otimes 2}\ket{d-1}^{\otimes 4}. \label{eq:longdec}
\end{align}
\end{widetext}
Here, we defined $\ket{\alpha_j} = \ket{\pi(\ket{j}\ket{j-1}\ket{j-2}^{\otimes 2})}$,
$\ket{\beta_j} = \ket{\pi(\ket{j}\ket{j-1}^{\otimes 2}\ket{j-2})}$ and $c_0 = \sqrt{\frac{194}{5}}$,
$c_1 = \sqrt{\frac{11}{5}}, c_{d-2} = \sqrt{21}$ and $c_{d-1} = \sqrt{51}$. This decomposition will be convenient in later considerations.

Let us now show that $\ket{\Phi_{6,d}}$ has trivial stabilizer for $d>2$. For that, the following lemma is useful.

\begin{lemma}
 \label{lem:symSym6}
 If $v \in \tilde{K}$ fulfills $v\ket{\Phi_{6,d}} = \ket{\Phi_{6,d}}$ then $v = u^{\otimes 6}$ for some $u \in U(d)$.
\end{lemma}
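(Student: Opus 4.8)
Since $\ket{\Phi_{6,d}}$ is permutationally symmetric, Lemma \ref{lem:B} applies, and by the discussion following it, to conclude that any $v \in \widetilde{K}$ stabilizing $\ket{\Phi_{6,d}}$ is of the form $v = u^{\otimes 6}$ it suffices to verify the hypothesis (\ref{eq:B}) for unitary $B$; that is, to show that $B \otimes B^{-1} \otimes \one^{\otimes 4}\ket{\Phi_{6,d}} = \ket{\Phi_{6,d}}$ with $B \in U(d)$ forces $B = b\one$ for some phase $b$. The plan is to follow the strategy of the proof of Lemma \ref{lem:symSym5}, carrying out all computations in the bipartite decomposition of the first two versus the last four subsystems, Eq. (\ref{eq:longdec}), which is tailored exactly to this purpose.

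First I would show that $B$ must be diagonal in the computational basis. The key device is to apply $\one^{\otimes 2} \otimes \bra{\chi}$, projecting the last four subsystems onto a suitable product or Dicke state $\ket{\chi}$, to both sides of the invariance equation; this yields the two-qudit identity $(B \otimes B^{-1})(\one^{\otimes 2} \otimes \bra{\chi})\ket{\Phi_{6,d}} = (\one^{\otimes 2} \otimes \bra{\chi})\ket{\Phi_{6,d}}$. Choosing $\ket{\chi} = \ket{j}^{\otimes 4}$, inspection of (\ref{eq:longdec}) shows that for every $j \in \{0,2,3,\ldots,d-1\}$ the only surviving four-subsystem term is $\ket{j}^{\otimes 4}$, so $(B \otimes B^{-1})\ket{j}^{\otimes 2} = \ket{j}^{\otimes 2}$; since the right-hand side is a product state, $\ket{j}$ must be an eigenvector of $B$. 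The level $j = 1$ is handled separately: $\ket{\chi} = \ket{1}^{\otimes 4}$ isolates only the term $c_1\ket{D_{1,2}(1)}\ket{1}^{\otimes 4}$, giving $(B \otimes B^{-1})\ket{D_{1,2}(1)} = \ket{D_{1,2}(1)}$, and combining this with $B\ket{0} = \mu_0\ket{0}$ and the orthonormality of the columns of the unitary $B$ forces $B\ket{1} = \mu_0\ket{1}$ as well. Hence $B = \sum_j \mu_j \ket{j}\bra{j}$ is diagonal.

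Next I would show that all the phases $\mu_j$ coincide. Projecting the last four subsystems onto $\bra{D_{2,4}(j-1)}$ isolates, by (\ref{eq:longdec}), precisely the term $\ket{D_{1,2}(j)}\ket{D_{2,4}(j-1)}$ — all other four-subsystem factors ($\ket{D_{3,4}}$, $\ket{\alpha_j}$, $\ket{\beta_j}$, $\ket{D_{1,4}}$) being orthogonal to $\ket{D_{2,4}(j-1)}$ since they involve a third level or a different level multiplicity — and yields $(B \otimes B^{-1})\ket{D_{1,2}(j)} = \ket{D_{1,2}(j)}$ for $j \in \{2,\ldots,d-1\}$. With $B$ diagonal this reads $\mu_j \mu_{j-1}^{-1} = 1$, i.e. $\mu_j = \mu_{j-1}$. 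Together with $\mu_1 = \mu_0$ from the previous step, induction gives $\mu_0 = \mu_1 = \cdots = \mu_{d-1} =: b$, so $B = b\one$, completing the case $d > 3$. The cases $d = 2,3$ are treated identically using the explicitly given critical states $\ket{\Phi_{6,2}}$ and $\ket{\Phi_{6,3}}$, where the analogous projections are even simpler.

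The only real work, and the main place where errors could creep in, is the bookkeeping: for each chosen $\ket{\chi}$ one must verify from (\ref{eq:longdec}) exactly which of the many symmetrized terms have nonzero overlap with $\ket{\chi}$ on the last four subsystems, so that the resulting two-qudit equation is as clean as claimed. The structure of $\ket{\Phi_{6,d}}$ is engineered so that the computational-basis projectors isolate single-level components while the Dicke projectors $\bra{D_{2,4}(j-1)}$ isolate the nearest-neighbour coherences $\ket{D_{1,2}(j)}$; once this is checked the algebra is immediate. I do not expect any conceptual obstacle beyond this verification, since the argument is structurally identical to that of Lemma \ref{lem:symSym5}.
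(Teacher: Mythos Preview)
Your proposal is correct and follows exactly the approach the paper intends: the paper itself omits the proof, saying only that it ``is similar to the proof of Lemma~\ref{lem:symSym5},'' and your argument---reducing via Lemma~\ref{lem:B} to the condition on $B\otimes B^{-1}$, then projecting the last four subsystems onto $\ket{j}^{\otimes 4}$ to force $B$ diagonal and onto $\ket{D_{2,4}(j-1)}$ to equate the phases---is precisely that adaptation, carried out using the decomposition~(\ref{eq:longdec}). Your bookkeeping of which four-subsystem factors survive each projection is accurate, including the special handling of level $1$ (for $d>3$) and the separate treatment of $d=2,3$.
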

The proof is similar to the proof of Lemma \ref{lem:symSym5} and will be omitted. We are now in the position to prove that $\ket{\Phi_{6,d}}$ has trivial stabilizer for $d > 2$.

\begin{lemma}
\label{lem:trivstab2}
For $d\geq 2$ the stabilizer of $\ket{\Phi_{6,d}}$ is trivial, i.e. $\widetilde{G}_{\Phi_{6,d}} = \{\one\}$.
\end{lemma}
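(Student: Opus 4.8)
The plan is to follow the four-step template set out at the start of this appendix for permutationally symmetric critical states. Since $\ket{\Phi_{6,d}}$ is critical, Lemma \ref{LemmaUnitStab} reduces the claim $\widetilde{G}_{\Phi_{6,d}} = \{\one\}$ to proving $\widetilde{K}_{\Phi_{6,d}} = \{\one\}$, and Lemma \ref{lem:symSym6} then guarantees that every $v \in \widetilde{K}_{\Phi_{6,d}}$ has the form $v = u^{\otimes 6}$ for some $u \in U(d)$. Thus the whole content of the lemma is to show that the only $u \in U(d)$ solving $u^{\otimes 6}\ket{\Phi_{6,d}} = \ket{\Phi_{6,d}}$ satisfies $u^{\otimes 6} = \one$. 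As in Lemmas \ref{lem:trivstab7} and \ref{lem:trivstabn5}, I would split this into first proving that $u$ must be diagonal and then solving the resulting phase equations.

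To show that $u$ is diagonal I would pass to the two-party reduced state $\rho_{6,d}^{(1,2)} = \tr_{3,4,5,6}(\ket{\Phi_{6,d}}\bra{\Phi_{6,d}})$, which must obey $(u\otimes u)\rho_{6,d}^{(1,2)}(u^\dagger\otimes u^\dagger)=\rho_{6,d}^{(1,2)}$, so that $u\otimes u$ preserves the kernel $K_d$ and its orthogonal complement $K_d^\bot$. The crucial difference from the $n=4,5$ cases is that the terms $\ket{\phi_6(j)}$ couple three consecutive levels, so the decomposition (\ref{eq:longdec}) shows that, beyond the diagonal vectors $P$ and the nearest-neighbor symmetric vectors $S_+$ of Eqs. (\ref{eq:Q}-\ref{eq:Splus}), the support $K_d^\bot$ now also contains the next-nearest-neighbor symmetric vectors $R_+ = \text{span}\{\ket{i}\ket{i-2}+\ket{i-2}\ket{i}\}_{i=2}^{d-1}$, while the corresponding antisymmetric block $R_-$ joins $S_-$ and the long-range part in $K_d$. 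I would therefore first establish, by an argument analogous to Observation \ref{obs:appendA} (a banded linear-independence/determinant computation using the explicit coefficients $\sqrt{194/5}$, $\sqrt{11/5}$, $\sqrt{21}$, $\sqrt{51}$), that $K_d^\bot = P\oplus S_+\oplus R_+$ with no accidental kernel vectors.

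With $K_d^\bot$ identified I would mimic the inductive Schmidt-rank argument of Lemma \ref{lem:trivstabn5}: reduce the projector $\pi_{1,2}^\bot$ onto $K_d^\bot$ to one party, use $u\,\pi_1^\bot\, u^\dagger=\pi_1^\bot$ to confine $u\ket{0}$ to a small eigenspace, and combine $(u\otimes u)\ket{00}\in K_d^\bot$ with the fact that $u\otimes u$ cannot change Schmidt rank to force $u\ket{0}=u_0\ket{0}$; then, assuming $u\ket{k}=u_k\ket{k}$ for $k<i$, the image $(u\otimes u)\ket{D_{1,2}(i)}$ (together, where needed, with the next-nearest-neighbor vectors) can remain in $K_d^\bot$ only if $u\ket{i}=u_i\ket{i}$, yielding diagonality. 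Finally, writing $u=\sum_i u_i\ket{i}\bra{i}$ and evaluating $u^{\otimes 6}\ket{\Phi_{6,d}}=\ket{\Phi_{6,d}}$ term by term gives $u_j^6=1$ for every $j$ carrying a $\ket{j}^{\otimes 6}$ component (note that $j=1$ is absent), $u_1^5u_0=1$ from $\ket{D_{5,6}(1)}$, and $u_j u_{j-1}^3 u_{j-2}^2=1$ from $\ket{\phi_6(j)}$. Setting $u_0=1$ without loss of generality, the relations $u_1^5=1$ and $u_1^{18}=1$ (the latter from $u_2=u_1^{-3}$ together with $u_2^6=1$) force $u_1=1$ since $\text{gcd}(5,18)=1$, after which the three-term recursion propagates $u_i=1$ for all $i$; the cases $d=2,3$ are handled identically, with $\text{gcd}(5,3)=1$ replacing the gcd step for $d=2$.

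I expect the main obstacle to be the middle step, namely proving that $K_d^\bot$ is exactly $P\oplus S_+\oplus R_+$ and hence that $u$ is diagonal. The next-nearest-neighbor coupling enlarges the support beyond the purely tridiagonal pattern of Observation \ref{obs:appendA}, so both the linear-independence computation and the bookkeeping in the inductive diagonalization become appreciably more involved than in the $n=4,5$ cases; everything else reduces to the already-established template and an elementary number-theoretic argument.
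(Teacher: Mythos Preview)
Your overall strategy and the phase-equation step are correct and essentially the same as the paper's: reduce to $u^{\otimes 6}$ via Lemmas \ref{LemmaUnitStab} and \ref{lem:symSym6}, force $u$ to be diagonal using the two-body marginal, then solve $u_j^6=1$, $u_0u_1^5=1$, $u_ju_{j-1}^3u_{j-2}^2=1$ (your gcd manipulation is equivalent to the paper's). The small cases $d=2,3$ are indeed handled separately, as you anticipate.

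Where you diverge from the paper is in the diagonality step. You propose to reuse the kernel/support argument of Lemmas \ref{lem:trivstabn5} and \ref{lem:trivstab3}, i.e.\ to prove $K_d^\bot = P\oplus S_+\oplus R_+$ via an Observation-\ref{obs:appendA}-style banded determinant and then run a Schmidt-rank induction inside $K_d^\bot$. This can be made to work, but the paper takes a simpler route: for $d>3$ it computes $\rho_{6,d}^{(1,2)}$ explicitly (Eq.~(\ref{eq:red6})) and observes that the coefficients $\sqrt{194/5},\sqrt{11/5},\sqrt{21},\sqrt{51}$ were chosen so that $\ket{00}$, $\ket{d-2,d-2}$ and $\ket{d-1,d-1}$ each span a \emph{one-dimensional} eigenspace (eigenvalues $214/5$, $33$, $51$). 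This immediately yields $u\ket{0}=u_0\ket{0}$, $u\ket{d-2}=u_{d-2}\ket{d-2}$, $u\ket{d-1}=u_{d-1}\ket{d-1}$ without any determinant computation or Schmidt-rank argument, and then the eigenvalue-$18$ eigenspace $\text{span}\{\ket{D_{1,2}(j)}\}_{j=2}^{d-2}$ is used for a short top-down induction. In other words, the paper exploits the full spectral decomposition rather than just the support; what your approach buys is uniformity with the $n=4,5$ proofs, whereas the paper's approach avoids the five-diagonal linear-independence check and the more delicate induction through $R_+$ that you correctly flag as the main obstacle.
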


\proof{
Due to Lemma \ref{LemmaUnitStab} and Lemma \ref{lem:symSym6} it is again sufficient to show that any unitary $u$ that fulfills
\begin{align}
 u^{\otimes 6}\ket{\Phi_{6,d}} = \ket{\Phi_{6,d}} \label{eq:EV6}
\end{align}
fulfills $u^{\otimes 6} = \one$. We divide the proof into three parts. In parts (a) and (b) we consider the cases with $d>3$. In part (c) and (d) we provide a proof of the theorem for $d=2$ and $d=3$, respectively.\\
\paragraph*{(a) $u$ is necessarily diagonal for $d>3$}\hfill\\
We consider the case $d>3$. From Eq. (\ref{eq:EV6}) we get the following necessary condition,
 \begin{align}
  (u\otimes u) \rho_{6,d}^{(1,2)} (u^{\dagger} \otimes u^{\dagger}) = \rho_{6,d}^{(1,2)}, \label{eq:redEV6}
 \end{align}
where $\rho_{6,d}^{(1,2)} = \tr_{3,\ldots,6}(\ket{\Phi_{6,d}}\bra{\Phi_{6,d}})$.
Let us first determine $\rho_{6,d}^{(1,2)}$.

With the help of decomposition (\ref{eq:longdec}) one can easily show that the $2$-subsystem reduced state of $\ket{\Phi_{6,d}}$ for $d>3$ reads
\begin{align}
 &\rho_{6,d}^{(1,2)} = \frac{214}{5}\ket{0}\bra{0}^{\otimes 2} +17 \sum_{j=2}^{d-3}\ket{j}\bra{j}^{\otimes 2}  \nonumber\\
 & + 33 \ket{d-2}\bra{d-2}^{\otimes 2} + 51 \ket{d-1}\bra{d-1}^{\otimes 2} \nonumber\\
 &+ \frac{71}{5} \ket{D_{1,2}(1)}\bra{D_{1,2}(1)}+18\sum_{j=2}^{d-2}\ket{D_{1,2}(j)}\bra{D_{1,2}(j)} \nonumber\\
 &+ 6\ket{D_{1,2}(d-1)}\bra{D_{1,2}(d-1)} +4\sum_{j=3}^{d-1} (\ket{j}\ket{j-2}\nonumber\\
 &+\ket{j-2}\ket{j})(\bra{j}\bra{j-2}+\bra{j-2}\bra{j}) \nonumber\\
 &+R, \label{eq:red6}
\end{align}
where
\begin{align*}
 R &= (4c_1^2+16) \ket{11}\bra{11}\\
 &+ 4 c_1 \left[\ket{11}(\bra{20}+\bra{02}) + c.c.\right] \\
 &+ 4 (\ket{20}+\ket{02})(\bra{20}+\bra{02}),
\end{align*}
where $c.c.$ denotes the complex conjugate.
Note that the only part of $\rho_{6,d}^{(1,2)}$ that is not yet diagonalized is the operator $R$. The only nonzero eigenvalues of $R$ are $\frac{2}{5}(41 \pm \sqrt{881})$. As a consequence, we can directly get a basis of all eigenspaces with eigenvalues different from  $\frac{2}{5}(41 \pm \sqrt{881}),0$ from Eq. (\ref{eq:red6}) .
For example, the states $\ket{0}^{\otimes 2}, \ket{d-2}^{\otimes 2} ,\ket{d-1}^{\otimes 2}$ span the 1-dimensional eigenspaces of $\rho_{6,d}^{(1,2)}$ with eigenvalues $\frac{214}{5}, 33, 51$, respectively. Note further that $u\otimes u$ maps an eigenspace of $\rho_{6,d}^{(1,2)}$ to a given eigenvalue to itself.
Combining these observations one can easily see that
\begin{align}
 &u \ket{0} = u_0 \ket{0}, \label{eq:1dES1}\\
 &u \ket{d-2} = u_{d-2} \ket{d-2},\label{eq:1dES2}\\
 &u \ket{d-1} = u_{d-1} \ket{d-1},\label{eq:1dES3}
\end{align}
for some phases $u_0, u_{d-1}, u_{d-2}$. For $d=4$ this already implies that $u$ is diagonal. For $d>4$ we can consider the eigenspaces to nonzero eigenvalues that have dimensions larger than $1$ to show that $u$ is diagonal.\\

Using that $u\ket{d-2} = u_{d-2} \ket{d-2}$ (see Eq.(\ref{eq:1dES2})) it remains to show that if $u\ket{k} = u_{k}\ket{k}$ holds for $k > i$, then also  $u\ket{i} = u_{i}\ket{i}$ holds, where $i \in \{2,\ldots,d-3\}$. Using that $u\ket{i+1} = u_{i+1}\ket{i+1}$ we get the following,
\begin{align}
 &(u\otimes u)\ket{D_{1,2}(i+1)} = u_{i+1} (\ket{i+1}u\ket{i} + u\ket{i}\ket{i+1}). \label{eq:sub18eq2}
\end{align}
Note that $u\otimes u$ maps the eigenspace of $\rho_{6,d}^{(1,2)}$ to eigenvalue $18$, spanned by $\{\ket{D_{1,2}(j)}\}_{j=2}^{d-2}$, to itself. Hence, Eq. (\ref{eq:sub18eq2}) can only be fulfilled if $u\ket{i} = u_i \ket{i}$ for some phase $u_i$. Combined with Eqs. (\ref{eq:1dES1}-\ref{eq:1dES3}) we have that $u\ket{i} = u_i\ket{i}$ for $i \in \{0\} \cup \{2,\ldots,d-1\}$. As $u$ is unitary, this also implies that $u\ket{1} = u_1\ket{1}$ for some phase $u_1$ and therefore $u$ is diagonal.\\

\paragraph*{(b) $u^{\otimes 6} = \one$ is the only solution for $d>3$}\hfill\\
Using that $u = \sum_{i=0}u_i\ket{i}\bra{i}$ for $d>3$ it is easy to see that Eq. (\ref{eq:EV6}) is equivalent to
\begin{align}
 &u_i^6 = 1 \ \text{for} \ i\in\{0,\ldots,d-1\} \label{eq:phases4}\\
 &u_0u_1^5 = 1\\
 &u_iu_{i-1}^3u_{i-2}^2 = 1 \ \text{for} \ i\in \{2,\ldots,d-1\}. \label{eq:phases5}
\end{align}
As in the proof of Lemma (\ref{lem:trivstab}) we can set, without loss of generality,
$u_0 = 1$. According to Eqs. (\ref{eq:phases4}-\ref{eq:phases5}) this implies $u_1^5 = 1$ and $u_2u_1^3=1$. Taking
the 6-th power of the second equation we obtain $u_2^6u_1^{18}= u_1^3 =1$. The only solution to $u_1^3 = u_1^5 = 1$ is $u_1 = 1$. Using that $u_0 = u_1 = 1$ in
Eq. (\ref{eq:phases5}) we finally obtain $u_i = 1$ for $i \in \{0,\ldots,d-1\}$. Hence, $u^{\otimes 6} = \one$ for $d>3$.\\

\paragraph*{(c) $u^{\otimes 6} = \one$ is the only solution for $d=2$}\hfill\\
Using that $u\otimes u$ leaves eigenspaces of $\rho_{6,2}^{(1,2)}$ invariant it is straigthforward to see that $u$ is diagonal. Reinserting this into Eq. (\ref{eq:EV6}) shows that indeed $u^{\otimes 6} = \one$.\\

\paragraph*{(d) $u^{\otimes 6} = \one$ is the only solution for $d=3$}\hfill\\
In this case it is easy to see that
\begin{align}
 \rho_{6,3}^{(1,2)} &= 11 \ket{0}\bra{0}^{\otimes 2} + 10 \ket{1}\bra{1}^{\otimes 2} + 15 \ket{2}\bra{2}^{\otimes 2} \nonumber \\
 &+ 7 \ket{D_{1,2}(1)}\bra{D_{1,2}(1)} + 3 \ket{D_{1,2}(2)}\bra{D_{1,2}(2)} \nonumber \\
 &+ 2(\ket{02}+\ket{20})(\bra{02}+\bra{20}) \nonumber\\
 & + 2\sqrt{2}[(\ket{02}+\ket{20})\bra{11}+ c.c.] \label{eq:red63}
\end{align}
Using that $u\otimes u$ maps eigenspaces of $\rho_{6,d}^{(1,2)}$ with a given eigenvalue to themselves it is easy to see that $u\ket{0} = u_0 \ket{0}$ and
$u\ket{2} = u_2\ket{2}$ for some phases $u_0,u_2$. As $u$ is unitary this implies that $u = \sum_{i=0}^2 u_i \ket{i}\bra{i}$. This form can then be reinserted into Eq. (\ref{eq:EV6}) to verify that $u^{\otimes 6} = \one$ is in fact the only solution of this equation. \qed
}

\subsection{A way to construct critical tripartite states with trivial stabilizer for $d>3$}
\label{sec:tripartite}

In this section we present a method to look for critical tripartite states with trivial stabilizer in $\tilde{G}$. It is particularily useful if one wants to find nonsymmetric states with these properties. We then employ this approach to explicitly construct such a state for local dimension $d=4,5,6$.

\subsubsection{On the symmetries of certain tripartite states}
Let $\ket{\psi} \in (\mathbb{C}^d)^{\otimes 3}$ be a critical state.
Due to Lemma \ref{LemmaUnitStab} we know that $\ket{\psi}$ does not have any nontrivial symmetries iff it does not have any nontrivial unitary symmetries. In what follows we derive necessary and sufficient conditions for some $\ket{\psi}$ that allow to determine when this is the case.
We consider the following critical states in $(\mathbb{C}^d)^{\otimes 3}$,
\begin{align}
 \ket{\psi} = \frac{1}{\sqrt{d}}\sum_{j=0}^{d-1} \ket{j}\otimes (U_j \otimes \one)\ket{\phi^+}, \label{eq:dec3}
\end{align}
where $\ket{\phi^+} = \frac{1}{\sqrt{d}}\sum_{i=0}^{d-1}\ket{ii}$, and the operators $\{U_i\}_{i=0}^{d-1}$ are unitaries that fulfill $\tr(U_i^\dagger U_j) = d\delta_{ij}$, i.e. that are orthogonal.
Suppose $V_1 \otimes V_2 \otimes V_3$ is a unitary symmetry, i.e.
\begin{align}
 V^{(1)} \otimes V^{(2)} \otimes V^{(3)}\ket{\psi} = \ket{\psi}. \label{eq:EV3}
\end{align}
We use the notation $V^{(1)} = \sum_{ij}V^{(1)}_{ij} \ket{i}\bra{j}$.
With the help of decomposition (\ref{eq:dec3}) and using that $\one \otimes A\ket{\phi^+} = A^T\otimes \one \ket{\phi^+}$ for all matrices $A$ it is then easy to see that
Equation (\ref{eq:EV3}) is equivalent to
\begin{align}
 \sum_j V^{(1)}_{ij} (V^{(2)}U_j{V^{(3)}}^T \otimes \one)\ket{\phi^+} = (U_i \otimes \one)\ket{\phi^+} \ \forall i. \label{eq:EV3_2}
\end{align}
Using that $(A \otimes \one)\ket{\phi^+} = (B \otimes \one)\ket{\phi^+}$ iff $A = B$, we can rewrite (\ref{eq:EV3_2}) as
\begin{align}
 \sum_j V^{(1)}_{ij} U_j = {V^{(2)}}^{\dagger}U_i(V^{(3)})^\ast \ \forall i. \label{eq:EV3_3}
\end{align}
These equations are equivalent to Eq. (\ref{eq:EV3}). Let us use the notation $W_i \equiv {V^{(2)}}^{\dagger}U_i(V^{(3)})^\ast$. Then a direct consequence of Eq. (\ref{eq:EV3_3}) is the following necessary condition,
\begin{align}
 W_iW_i^\dagger = \one + \sum_{j\neq k} V^{(1)}_{ij} {V^{(1)}_{ik}}^* U_j U_k^\dagger = \one \ \forall i.
\end{align}
That is,
\begin{align}
 \sum_{j\neq k} V^{(1)}_{ij} {V^{(1)}_{ik}}^* U_j U_k^\dagger = 0 \ \forall i \label{eq:EV3_4}
\end{align}
has to hold.\\

\subsubsection{Special critical states with trivial stabilizer}
From now on, we consider critical states for which Eq. (\ref{eq:EV3_3}) admits a particularily simple form. More precisely, we consider states for which $U_0 = \one$ and for which the unitaries $\{U_iU_j^\dagger\}_{i\neq j}$ are linearly independent. The second condition implies that Eq. (\ref{eq:EV3_4}) can only be fulfilled if $V^{(1)}_{ij} {V^{(1)}_{ik}}^* = 0$ for all $i$ and for all $j \neq k$, i.e. only if $V^{(1)}$ has exactly one nonzero entry in each row. That is, only if $V^{(1)}_{ij} = e^{i\phi_i} \delta_{i,\sigma(i)}$, where $\sigma \in S_d$ is a permutation. In the particular case we are considering, Eq. (\ref{eq:EV3_3}) therefore simplifies to
\begin{align}
 e^{i\phi_i} U_{\sigma(i)} = {V^{(2)}}^{\dagger}U_i{V^{(3)}}^* \ \forall i. \label{eq:EV3_5}
\end{align}
Due to Eq. (\ref{eq:EV3_5}) it holds that ${V^{(2)}}^{\dagger} = e^{i\phi_{\sigma^{-1}(0)}} {V^{(3)}}^T {U_{\sigma^{-1}(0)}}^\dagger$. Reinserting this into Eq. (\ref{eq:EV3_5}) and using the notation $\tilde{U} \equiv U_{\sigma^{-1}(0)}$ and $\tilde{\phi}_i \equiv \phi_{\sigma^{-1}(0)} - \phi_i$ we obtain
\begin{align}
 U_{\sigma(i)} = e^{i\tilde{\phi}_i} {V^{(3)}}^T \tilde{U}^\dagger U_i {V^{(3)}}^* \ \forall i. \label{eq:EV3_6}
\end{align}

In the following section we use the insights gained in this section to explain in detail how a tripartite state with local dimension $d=4$ and trivial stabilizer can be explicitly constructed. After that we show how the same method can be applied to $d=5$ and $d=6$.

\subsubsection{A critical tripartite state with local dimension $d=4$ and trivial stabilizer}
\label{sec:d4}

Let us construct a critical tripartite state with local dimension $d=4$ and trivial stabilizer. 
Our goal is to find unitaries $\{U_0,U_1,U_2,U_3\}$ such that Eq. (\ref{eq:EV3_5}) is only fulfilled for $\sigma = id$ and $V^{(3)} = e^{i\alpha_3} \one$ for some $\alpha \in \mathbb{R}$.  These conditions then imply that $V^{(1)} \otimes V^{(2)} \otimes V^{(3)} = \one$ as we show now.

Recall that $V^{(1)}_{ij} = e^{i\phi_i} \delta_{i,\sigma(i)}$. As $\sigma = id$ the unitary $V^{(1)}$ is a phase gate. Using that $V^{(3)} = e^{i\alpha_3} \one$ in Eq. (\ref{eq:EV3_5}) it is moreover easy to see that $V^{(2)} = e^{-i(\phi_i+\alpha_3)} \one$. Hence the phases $e^{i\phi_i}$ cannot depend on $i$ and  therefore fulfill $e^{i\phi_i} = e^{i\alpha_1}$ for some $\alpha_1 \in \mathbb{R}$. Thus, $V^{(1)} = e^{i\alpha_1} \one$. Hence, we have that $V^{(1)} \otimes V^{(2)} \otimes V^{(3)} = \one$.\\

In the following we show that the conditions $\sigma = id$ and $V^{(3)} = e^{i\alpha_3} \one$ are fulfilled if we choose
\begin{align}
&U_0 = \one, \label{eq:U0}\\
&U_1 = \frac{1}{10} T_1 \text{diag}(6+8i, -6-8i, -6+8i, 6-8i)T_1^\dagger, \\
&U_2 = \frac{1}{10^2} T_2 \text{diag}(96+28i, 96-28i, -96+28i, -96-28i)T_2^\dagger,\\
&U_3 = \frac{1}{10^3} \times \nonumber\\
&\text{diag}(936 + 352 i,-936 - 352 i,-936 + 352i, 936 - 352i), \label{eq:U3}
\end{align}
where
\begin{align*}
 &T_1 = \frac{1}{2}
 \begin{pmatrix} 1 & 1 & 1& 1 \\
 1 & -i & -1 & i \\
 -1 & 1 & -1 & 1 \\
 -1 & -i & 1 & i
 \end{pmatrix},\\
 &T_2 = \frac{1}{2}
 \begin{pmatrix} i & i & 1 & -1 \\
 \frac{1+i}{\sqrt{2}} & -\frac{1-i}{\sqrt{2}} & -\frac{1+i}{\sqrt{2}} & \frac{1-i}{\sqrt{2}} \\
 -i & i & 1 & 1 \\
 \frac{1+i}{\sqrt{2}} & \frac{1-i}{\sqrt{2}} & \frac{1+i}{\sqrt{2}} & \frac{1-i}{\sqrt{2}}
 \end{pmatrix}.
\end{align*}
Note that $T_1$ and $T_2$ transform the computational basis into the eigenbasis of the generalized Pauli operators $X_4Z_4^2$ and $X_4^3Z_4$, respectively, where
\begin{align*}
 &X_4 = \sum_{k = 0}^{3} \ket{k+1 \ \text{mod} \ 4}\bra{k},\\
 &Z_4 = \text{diag}(1,i,-1,-i).
\end{align*}
Hence, the unitaries in $\{U_i\}$ have the same eigenbases as the matrices in $\{\one, X_4Z_4^2, X_4^3Z_4, Z_4\}$.
However, their spectra are different. The choice $\{\one, X_4Z_4^2, X_4^3Z_4, Z_4\}$ would give rise to nontrivial solutions of Eq. (\ref{eq:EV3_5}) and therefore to nontrivial symmetries. One can show that the same happens if one choses any other subset of generalized Pauli operators to define the unitaries $\{U_i\}_{i=0}^3$.

It is straightforward to show that the unitaries $\{U_0, U_1, U_2, U_3\}$ fulfill the requirements necessary for Eq. (\ref{eq:EV3_6}) to be valid. That is, they are all mutually orthogonal and the unitaries  $\{U_iU_j^\dagger\}_{i\neq j}$ are linearly independent.
In what follows we show that the only choice of $\tilde{U} = U_{\sigma^{-1}(0)}$ that can fulfill Eq. (\ref{eq:EV3_6}) is $\tilde{U} = \one$. It is straightforward to see that Eq. (\ref{eq:EV3_6}) can only be fulfilled if for any $i \in \{0,1,2,3\}$ the spectrum of $\tilde{U}^\dagger U_i$ is proportional to the spectrum of $U_{\sigma(i)}$. It is however easy to see that this necessary condition cannot be fulfilled for $\tilde{U} \neq \one$. For example, for $\tilde{U} = U_1$ the spectrum of $\tilde{U}^\dagger U_3$ is not proportional to the spectrum of any of the unitaries in $\{U_i\}$. Consequently, $\tilde{U} = \one$ is the only way to satisfy Eq. (\ref{eq:EV3_6}), which then simplifies to
\be
U_{\sigma(i)} = e^{i\tilde{\phi}_i} {V^{(3)}}^T U_i {V^{(3)}}^* \ \forall i. \label{eq:EV3_7}
\ee
As the spectra of the $\{U_i\}$ are not proportional to each other, the only way to fulfill Eq. (\ref{eq:EV3_7}) is if $\sigma = id$. It is moreover easy to see that
$e^{i\tilde{\phi}_0} = 1$ and $e^{i\tilde{\phi}_j} \in \{-1,+1\}$ for $j \in \{1,2,3\}$.

As $e^{i\tilde{\phi}_1} \in \{-1,+1\}$ holds, we square Eq. (\ref{eq:EV3_7}) for $i=1$ and obtain
\begin{align}
&U_1^2 = {V^{(3)}}^T U_1^2 {V^{(3)}}^*. \label{eq:EV3_11}
\end{align}

Note that the spectrum of $U_1^2$ is degenerate as we have
\be
U_1^2 = -\frac{1}{25} T_1 \text{diag}(7-24i,7-24i,7+24i,7+24i)T_1^\dagger. \nonumber
\ee

Hence, Eq. (\ref{eq:EV3_11}) can only be fulfilled if ${V^{(3)}}^T$ is of the form
\begin{align}
 {V^{(3)}}^T = T_1 B T_1^\dagger, \label{eq:EV3_12}
\end{align}
where the unitary matrix $B$ is a block diagonal matrix, i.e. $B = \text{diag}(B_1,B_2)$, where $B_1,B_2$ are unitary $2 \times 2$ matrices.
Using the form of $V_3^T$ given in Eq. (\ref{eq:EV3_12}) it is easy to show that Eq. (\ref{eq:EV3_7}) with $e^{i\tilde{\phi}_j} \in \{-1,+1\}$ for all $j$ can only be fulfilled if $e^{i\tilde{\phi}_j} = 1$ for all $j$ and if ${V^{(3)}}^T = c\one$ for some $c \neq 0$.\\

In summary, we showed that $\sigma = id$ and $V^{(3)} = c\one$ for some $c \neq 0$. As outlined at the beginning of this section these conditions can then be used to show that indeed $V^{(1)} \otimes V^{(2)} \otimes V^{(3)} = \one$ is the only solution. Hence, the state corresponding to the unitaries $\{U_0, U_1, U_2, U_3\}$ in Eqs. (\ref{eq:U0} - \ref{eq:U3}), i.e. the state
\be
\ket{\psi} = \frac{1}{2}\sum_{j=0}^{4} \ket{j}\otimes (U_j \otimes \one)\ket{\phi^+},
\ee
has trivial stabilizer in $\tilde{G}$.

\subsubsection{Tripartite states with $d=5,6$ and trivial stabilizer}
\label{sec:3partynumerics}

In this section we use the techniques presented in the previous section to prove that there exist critical tripartite states with local dimension $d=5,6$ and trivial stabilizer.\\

As in the case of $d=4$ we explicitly construct unitaries $\{U_i\}_{i=0}^{d-1}$ for $d = 5,6$ that are diagonal in the eigenbasis of certain generalized Pauli operators. We then prove that the $\{U_i\}$ are orthogonal and show that the unitaries $\{U_iU_j^\dagger\}_{i \neq j}$ are linearly independent. This shows that Eq. (\ref{eq:EV3_5}) can be used to prove that the corresponding critical quantum state has trivial stabilizer. Analogously to the case of $d = 4$, we show that $\sigma = id$ and $V^{(3)} = c\one$ for some  $c \neq 0$ is the only solution of Eq. (\ref{eq:EV3_5}) for these choices of unitaries $\{U_i\}$. Since the matrices $\{U_iU_j^\dagger\}_{i \neq j}$ are linearly independent this then shows that the state
\be
\ket{\psi} = \frac{1}{\sqrt{d}}\sum_{j=0}^{d-1} \ket{j}\otimes (U_j \otimes \one)\ket{\phi^+}, \label{eq:trivstabnum}
\ee
is critical and has trivial stabilizer in $\tilde{G}$.\\

Let us now present these unitaries $\{U_i\}_{i=0}^{d-1}$ for $d=5,6$.
Recall that for every $d\geq 2$ and for any $k = (k_1,k_2) \in \{0,\ldots,d-1\}^2$ a generalized Pauli operator is defined as
\begin{align}
\label{eq:genpauli}
 S_{d,k} = X_d^{k_1} Z_d^{k_2},
\end{align}
where
\begin{align*}
 &X_d = \sum_{k = 0}^{d-1} \ket{k+1 \ \text{mod} \ d}\bra{k},\\
 &Z_d = \sum_{k = 0}^{d-1} \omega_d^k \ket{k}\bra{k},
\end{align*}
and $\omega_d = \exp(2\pi i/d)$.
As shown in \cite{bandyo} the matrix $U_{d,t}$ transforms the computational basis into the eigenbasis of $S_{d,(1,t)}$ for $t \in \{0,\ldots,d-1\}$, where
\begin{align}
 U_{d,t} = \frac{1}{\sqrt{d}}\sum_{i,j=0}^{d-1}(\omega_d^j)^{d-i}(\omega_d^{-t})^{\sum_{l=i}^{d-1}l} \ket{i}\bra{j}. \label{eq:Utrans}
\end{align}
The unitaries $\{U_i\}_{i=0}^{d-1}$ for which we show that the state in Eq. (\ref{eq:trivstabnum}) is critical and has trivial stabilizer in $\widetilde{G}$ are then the following.
\begin{align*}
&d=5:\\
&U_0 = \one,\\
&U_1 = U_{5,1} \text{diag}(e^{i\beta_1},e^{-i\beta_1},e^{i\alpha_1},e^{-i\alpha_1},-1)U_{5,1}^\dagger,\\
&U_2 = \text{diag}(-1, e^{i\beta_2},e^{-i\beta_2},e^{i\alpha_2},e^{-i\alpha_2}), \\
&U_3 = S_{5,(1,3)},\\
&U_4 = S_{5,(3,1)},
\end{align*}
with $\alpha_1 = \pi/3, \alpha_2 = \pi/6$ and $\beta_i = \arccos(1/2-\cos(\alpha_i))$ for $i = 1,2$.
\begin{align*}
&d=6:\\
&U_0 = \one,\\
&U_1 = \frac{1}{10}U_{6,0} \text{diag}(6+8i,6-8i,-6+8i,-6-8i,i,-i)U_{6,0}^\dagger,\\
&U_2 = \frac{1}{100}\text{diag}(96+28i,-96-28i,i,-i,96-28i,-96+28i), \\
&U_3 = S_{6,(1,1)},\\
&U_3 = S_{6,(2,3)},\\
&U_3 = S_{6,(4,2)}.
\end{align*}

It is easy to verify that for these choices of $\{U_i\}$ the unitaries $\{U_iU_j^\dagger\}_{i \neq j}$ are linearly independent. Following the same argument as in the previous section one can then show that $\ket{\psi}$ (given in Eq. (\ref{eq:trivstabnum})) is critical with trivial stabilizer.


\begin{references}
\bibitem{NiCh00} M. A. Nielsen and I. L. Chuang, ``Quantum Computation and Quantum Information'' (Cambridge University Press, Cambridge, 2000).
\bibitem{reviews} see e.\ g. R. Horodecki, P. Horodecki, M. Horodecki, and K. Horodecki, ``Quantum entanglement'', Rev. Mod. Phys. \textbf{81}, 865 (2009), M.~B.~Plenio and S.~Virmani, ``An introduction to entanglement measures'', Quant. Inf. Comp. \textbf{7}, 1 (2007) and references therein.
\bibitem{RaBr01}  R. Raussendorf and H. J. Briegel, ``A One-Way Quantum Computer'', Phys. Rev. Lett. \textbf{86}, 5188 (2001).
\bibitem{GoThesis} D. Gottesman, ``Stabilizer codes and quantum error correction'', Ph.D. Thesis, California Institute of Technology, quant-ph/9705052.
\bibitem{secretsharing} M. Hillery, V. Bu\^{z}ek, and A. Berthiaume, ``Quantum secret sharing'', Phys. Rev. A \textbf{59}, 1829 (1999); R. Cleve, D. Gottesman, and H.-K. Lo, ``How to Share a Quantum Secret'', Phys. Rev. Lett. \textbf{83}, 648 (1999).
\bibitem{giovanetti} see e.g. V. Giovannetti, S. Lloyd, and L. Maccone, ``Advances in quantum metrology'', Nat. Photonics \textbf{5}, 222 (2011) and references therein.
\bibitem{ScPe10} see e.g. D. Perez-Garcia, F. Verstraete, M.M. Wolf, and J.I. Cirac, ``Matrix Product State Representations'', Quantum Inf. Comput. {\bf 7}, 401 (2007); X. Chen, Z. Ch. Gu, X. G. Wen, ``Complete classification of one-dimensional gapped quantum phases in interacting spin systems'', Phys. Rev. B {\bf 84}, 235128 (2011); N. Schuch, D. Perez-Garcia, I. Cirac, ``Classifying quantum phases using matrix product states and projected entangled pair states'', Phys. Rev. B {\bf 84}, 165139 (2011).
\bibitem{nielsen} M. A. Nielsen, ``Conditions for a Class of Entanglement Transformations'', Phys. Rev. Lett. \textbf{83}, 436 (1999).
\bibitem{multip} see e.g. I. Bengtsson, and K. Zyczkowski, ``Geometry of quantum states: an introduction to quantum entanglement''  (Cambridge University Press, Cambridge, 2006), 1st edition; 
M. Walter, D. Gross, and J. Eisert, ``Multi-partite entanglement'', arXiv:1612.02437 (2016);  C. Eltschka, and J. Siewert, J. Phys. A: Math. Theor. \textbf{47}, 424005 (2014) and references therein.
\bibitem{chitambar} see e.g. E. Chitambar, D. Leung, L. Mancinska, M. Ozols, and A. Winter, ``Everything You Always Wanted to Know About LOCC (But Were Afraid to Ask)'', Commun. Math. Phys. \textbf{328}, 303 (2014) and references therein.
\bibitem{slocc} W. D\"ur, G. Vidal, and J.I. Cirac, ``Three qubits can be entangled in two inequivalent ways'', Phys. Rev. A \textbf{62}, 062314 (2000).
\bibitem{Kr10} see e.g. B. Kraus, ``Local Unitary Equivalence of Multipartite Pure States'', Phys. Rev. Lett. \textbf{104}, 020504 (2010); ``Local unitary equivalence and entanglement of multipartite pure states'', Phys. Rev. A \textbf{82}, 032121 (2010); A. Sawicki, M. Kus, ``Local unitary equivalence and distinguishability of arbit
rary multipartite pure states'' quant-ph/1009.0293 (2010), and references therein.
\bibitem{Spee0} C. Spee, J.I. de Vicente, D. Sauerwein, and B. Kraus, ``Entangled Pure State Transformations via Local Operations Assisted by Finitely Many Rounds of Classical Communication'',  Phys. Rev. Lett. \textbf{118}, 040503 (2017); J.I. de Vicente, C. Spee, D. Sauerwein, and B. Kraus, ``Entanglement manipulation of multipartite pure states with finite rounds of classical communication'', Phys. Rev. A \textbf{95}, 012323 (2017).
\bibitem{cohen} see e.g. S. M. Cohen, ``General Approach to Quantum Channel Impossibility by Local Operations and Classical Communication'', Phys. Rev. Lett. {\bf 118}, 020501 (2017) and references therein.
\bibitem{chitinfty} E. Chitambar, ``Local Quantum Transformations Requiring Infinite Rounds of Classical Communication'', Phys. Rev. Lett. {\bf 107}, 190502 (2011).
\bibitem{sepnotlocc} C. H. Bennett, D. P. DiVincenzo, C. A. Fuchs, T. Mor, E. Rains, P. W. Shor, J. A. Smolin, and W. K. Wootters, ``Quantum nonlocality without entanglement'', Phys. Rev. A \textbf{59}, 1070 (1999); M. Kleinmann, H. Kampermann, and D. Bru{\ss}, ``Asymptotically perfect discrimination in the local-operation-and-classical-communication paradigm'', Phys. Rev. A \textbf{84}, 042326 (2011) and references therein.
\bibitem{GoWa11} G. Gour, and N. R. Wallach, ``Necessary and sufficient conditions for local manipulation of multipartite pure quantum states'', New J. Phys. {\bf 13}, 073013 (2011).
\bibitem{HeSp15} M. Hebenstreit, C. Spee, and B. Kraus, ``Maximally entangled set of tripartite qutrit states and pure state separable transformations which are not possible via local operations and classical communication'', Phys. Rev. A \textbf{93}, 012339 (2016).
\bibitem{dVSp13} J. I. de Vicente, C. Spee, and B. Kraus, ``Maximally Entangled Set of Multipartite Quantum States'', Phys. Rev. Lett. \textbf{111}, 110502 (2013).
\bibitem{SpdV16} C. Spee, J. I. de Vicente, and B. Kraus, ``The maximally entangled set of 4-qubit states'', J. Math. Phys. \textbf{57}, 052201 (2016).
\bibitem{GoKr16} G. Gour, B. Kraus, and N. R. Wallach, ``Almost all multipartite qubit quantum states have trivial stabilizer'', J. Math. Phys. \textbf{58}, 092204 (2017).
\bibitem{Wa08} N. Wallach, ``Representation theory and complex analysis'', in "Lecture Notes in Mathematics" (Springer, Berlin, Heidelberg, 2008), 1931, 345 - 376.
\bibitem{ghz} D. M. Greenberger, M. Horne, and A. Zeilinger, ``Bell's Theorem, Quantum Theory, and Conceptions of the Universe'', edited by M. Kafatos, Kluwer Academic, Dordrecht (1989).
\bibitem{GoWa13} see e.g. G. Gour, and N. R. Wallach, ``Classification of Multipartite Entanglement of All Finite Dimensionality'', Phys. Rev. Lett. {\bf 111}, 060502 (2013); C. Eltschka, T. Bastin, A. Osterloh, and J. Siewert, ``Multipartite-entanglement monotones and polynomial invariants'', Phys. Rev. A \textbf{85}, 022301 (2012) and references therein.
\bibitem{turgut} S. Turgut, Y. G\"ul, and N.K. Pak, ``Deterministic transformations of multipartite entangled states with tensor rank 2'', Phys. Rev. A \textbf{81}, 012317 (2010); S. Kintas, and S. Turgut, ``Transformations of W-type entangled states'', J. Math. Phys. \textbf{51}, 092202 (2010).
\bibitem{gingrich} R. M. Gingrich, ``Properties of entanglement monotones for three-qubit pure states'', Phys. Rev. A \textbf{65}, 052302 (2002).
\bibitem{Hei05} M. Hein, W. Dür, J. Eisert, R. Raussendorf, M. Van den Nest, and H.-J. Briegel, "Entanglement in Graph
States and Its Applications", in "Quantum Computers, Algorithms
and Chaos", International School of Physics, Enrico
Fermi (IOS Press, Amsterdam, 2006), arXiv:quant-ph/
0602096.
\bibitem{helwig} W.  Helwig,  W.  Cui,  J.  I.  Latorre,  A.  Riera,  and  H.-K.  Lo, ``Absolute maximal entanglement and quantum secret sharing'', Phys. Rev. A \textbf{86}, 052335 (2012).
\bibitem{NFFrank} Frank Verstraete, Jeroen Dehaene, and Bart De Moor, ``Normal forms and entanglement measures for multipartite quantum states'', Phys. Rev. A {\bf 68}, 012103 (2003).
\bibitem{KN} G. Kempf, and L. Ness, ``The length of vectors in representation spaces", in "Lecture Notes in Mathematics" (Springer, Berlin, 1979),  732, 233-243.
\bibitem{CrHeKr18} L. Czarnetzki, M. Hebenstreit, and B. Kraus (unpublished).
\bibitem{eisert} A. Stretslov, C. Meignant, and J. Eisert, ''Rates of multi-partite entanglement transformations and applications in quantum networks``, arxiv:1709.09693.
\bibitem{bryan1} J. Bryan, Z. Reichstein, M. Van Raamsdonk, ''Existence of locally maximally entangled quantum states via geometric invariant theory``, arXiv:1708.01645.
\bibitem{bryan2} J. Bryan, S. Leutheusser, Z. Reichstein, M. Van Raamsdonk, ''Locally Maximally Entangled States of Multipart Quantum Systems``, arXiv:1801.03508.
\bibitem{Wa16} N. R. Wallach, ''Geometric invariant theory over the real and complex numbers`` (Springer, 2017), 1st edition.
\bibitem{Bredon} G. E. Bredon, ''Introduction to compact transformation groups`` (Academic Press, New York, 1972).
\bibitem{mumford} D. Mumford, ''Stability of projective varieties``, L'Enseignement Math{\'e}matique \textbf{23} (1977).
\bibitem{Vidal99} G. Vidal, ''Entanglement of Pure States for a Single Copy``, Phys. Rev. Lett. \textbf{83}, 1046 (1999).
\bibitem{gheorg} V. Gheorghiu, and R. B. Griffiths, ''Separable operations on pure states``, Phys. Rev. A {\bf 78}, 020304(R) (2008).
\bibitem{chen} Y. Chen, and H. Kan, ''Separable operations and local operations with classical communication on triqubit pure states``, Phys. Rev. A {\bf90}, 062340 (2014).
\bibitem{Sa15} D. Sauerwein, K. Schwaiger, M. Cuquet, J. I. de Vicente, and B. Kraus, ''Source and accessible entanglement of few-body systems``, Phys. Rev. A \textbf{92}, 062340 (2015).
\bibitem{peps} see e.g. F. Verstraete, V. Murg, and J.I. Cirac, ''Matrix product states, projected entangled pair states, and variational renormalization group methods for quantum spin systems``, Advances in Physics {\bf 57}, 2 (2008) and references therein.
\bibitem{HeGa17} M. Hebenstreit, M. Gachechiladze, O. G\"uhne, and B. Kraus, ''Coarse graining of entanglement classes in $2 \times m \times n$ systems``, Phys. Rev. A \textbf{97}, 032330 (2018).
\bibitem{verstr} F. Verstraete, J. Dehaene, B. De Moor, and H. Verschelde, ''Four qubits can be entangled in nine different ways``, Phys. Rev. A {\bf 65}, 052112 (2002).
\bibitem{migdal} P. Migda\l{}, J. Rodriguez-Laguna, and M. Lewenstein, ''Entanglement classes of permutation-symmetric qudit states: Symmetric operations suffice``, Phys. Rev. A \textbf{88}, 012335 (2013).
\bibitem{nicolas} F. Nicolas, ''A simple, polynomial-time algorithm for the matrix torsion problem``, arXiv:0806.2068.
\bibitem{horn} R. A. Horn, and C. R. Johnson, ''Matrix Analysis``, (Cambridge University Press, Cambridge, 2013), 2nd Edition.
\bibitem{bandyo} S. Bandyopadhyay, P. Boykin, V. Roychowdhury, and F. Vatan, ''A new proof for the existence of mutually unbiased bases``, Algorithmica \textbf{34}, 512 (2002).
\bibitem{slice} J. M. Dr\'{e}zet, ''Luna's Slice Theorem and Applications``, Notes of XXIII Autumn School in Algebraic Geometry (Wykno, Poland 2000), Ed. J. Wisniewski (Hindawi Publ. Corp., 2004), 39-90.
\end{references}
\end{document}